\documentclass[a4paper,UKenglish]{lipics-v2016}
 
\usepackage{microtype}


\bibliographystyle{plainurl}

\usepackage{amsmath}
\usepackage{bm}
\usepackage{stmaryrd}
\usepackage{mathtools}

\usepackage{tikz}
\usetikzlibrary{petri,positioning}

\usepackage{thm-restate}
\usepackage{todonotes}

\usepackage{macros}

\title{Large Flocks of Small Birds: On the Minimal Size of Population Protocols\footnote{M. Blondin was supported by the Fonds de recherche du Quebec – Nature et technologies (FRQNT).}}

\author[1]{Michael Blondin}
\author[2]{Javier Esparza}
\author[3]{Stefan Jaax}
\affil[1]{Technische Universität München, Munich, Germany\\
  \texttt{blondin@in.tum.de}}
\affil[2]{Technische Universität München, Munich, Germany\\
  \texttt{esparza@in.tum.de}}
\affil[3]{Technische Universität München, Munich, Germany\\
  \texttt{jaax@in.tum.de}}
\authorrunning{M. Blondin and J. Esparza and S. Jaax} 

\Copyright{Michael Blondin, Javier Esparza and Stefan Jaax}

\subjclass{F.1.1 Models of Computation}
\keywords{Population protocols, Presburger arithmetic}

\EventEditors{John Q. Open and Joan R. Acces}
\EventNoEds{2}
\EventLongTitle{42nd Conference on Very Important Topics (CVIT 2016)}
\EventShortTitle{CVIT 2016}
\EventAcronym{CVIT}
\EventYear{2016}
\EventDate{December 24--27, 2016}
\EventLocation{Little Whinging, United Kingdom}
\EventLogo{}
\SeriesVolume{42}
\ArticleNo{23}

\begin{document}

\maketitle

\begin{abstract}
  Population protocols are a well established model of distributed
  computation by mobile finite-state agents with very limited
  storage. A classical result establishes that population protocols
  compute exactly predicates definable in Presburger arithmetic. We
  initiate the study of the minimal amount of memory required to
  compute a given predicate as a function of its size. We present
  results on the predicates $x \geq n$ for $n\in \N$, and more
  generally on the predicates corresponding to systems of linear
  inequalities. We show that they can be computed by protocols with
  $O(\log n)$ states (or, more generally, logarithmic in the
  coefficients of the predicate), and that, surprisingly, some
  families of predicates can be computed by protocols with $O(\log\log
  n)$ states. We give essentially matching lower bounds for the class
  of 1-aware protocols.
\end{abstract}

\section{Introduction}\label{sec:introduction}
Population protocols~\cite{AADFP04} are a model of distributed
computation by anonymous, identical, and mobile finite-state agents. Initially
introduced to model networks of passively mobile sensors, they also capture 
the essence of distributed computation in trust propagation or chemical reactions, the latter
under the name of chemical reaction networks (see \eg~\cite{SCWB08}). 
Structurally, population protocols
can also be seen as a special class of Petri nets or vector addition
systems
\cite{EGLM17}.

Since the agents executing a protocol are anonymous and identical, its
global state---called a \emph{configuration}---is completely
determined by the number of agents at each local state. In each
computation step, a pair of agents, chosen by an adversary subject to
a fairness condition stating that any repeatedly reachable
configuration is eventually reached, interact and move to new states
according to a joint transition function. In a closely related model,
the adversary chooses the pair of agents uniformly at random.

A protocol computes a boolean value for a given initial configuration
if in all fair executions all agents eventually agree to this
value---so, intuitively, population protocols compute by reaching
consensus. Given a set of initial configurations, the predicate
computed by a protocol is the function that assigns to each
configuration $C$ the boolean value computed by the protocol starting
from $C$.


Much research on population protocols has focused on their expressive
power, \ie, the class of predicates computable by different classes of
protocols (see e.g. \cite{AACFJP05,AAER07,GR09,MCS11,A17}). 
In a famous result \cite{AAER07}, Angluin et al.\ have shown
that predicates computable by population protocols are exactly the
predicates definable in Presburger arithmetic.
There is also much work on complexity metrics for protocols. The main two metrics
are the \emph{runtime} of a protocol---defined for the model with a randomized adversary as the
expected number of pairwise interactions until all agents have the
correct output value---and its \emph{state space size}, e.g. the number of states of each agent. In~\cite{AAE08a}, Angluin et al.\ show that
every Presburger predicate is computed with high probability by a population protocol with a
leader---a distinguished auxiliary agent that assumes a specific state in the initial configuration irrespective of the input 
--- in $O(n \log^4 n)$ interactions
in expectation, where $n$ is the number of agents of the initial
configuration. Several recent papers study time-space
 trade-offs for specific tasks, like electing a leader~\cite{DS15}, or for specific
predicates, like majority~\cite{AGV15,AAEGR17,BCER17}. 

In this paper we study the state space size of protocols as a function
of the predicate they compute. In particular, we are interested in the
minimal number of states needed to evaluate systems of linear
constraints (a large subclass of the predicates computed by population
protocols) as a function of the number of bits needed to describe the
system.  To the best of our knowledge, this question has not been
considered so far.  We study the question for protocols with and
without leaders. Our results show that
protocols with leaders can be exponentially more compact than
leaderless protocols.

In order to introduce our results in the simplest possible setting, in
the first part of the paper we focus on the family of predicates
$\{x \geq n : n \in \N\}$. These predicates specify the well-known
flock-of-birds problem~\cite{AADFP04}, in which tiny sensors placed on
birds have to reach consensus on whether the number of sick birds in a
flock exceeds a given constant. The minimal number of states for computing
$x \geq n$ formalizes a very natural question about emerging behavior: 
How many states must agents have in order to exhibit a ``phase transition'' 
when their number reaches $n$?  The standard protocol for the
predicate $x \geq n$ (see Example~\ref{ex:flock}) has $n+1$ states. We
are interested in protocols with at most $O(\log n)$ states, either
leaderless or with at most $O(\log n)$ leaders. In the second part of the paper,
we generalize our results to a much larger class of predicates, namely systems
of linear inequalities $A\vec{x} \geq \vec{b}$. Since $x \geq n$
is a (very) special case, our lower bounds for flock-of-birds protocols apply, while
the upper bounds require new (and involved) constructions.

\medskip\noindent \textbf{Protocol size for the flock-of-birds problem.}   
In a first warm-up phase we exhibit a family of leaderless protocols
with only $O(\log n)$ states. More precisely, we prove:
\begin{itemize}
\item[(1)] There exists a family $\{\PP_n : n \in \N\}$ of leaderless population protocols such that $\PP_n$ has
  $O(\log_2 n)$ states and computes the predicate $x \geq n$ for every
  $n \in \N$.
\end{itemize}
We also give a lower bound:
\begin{itemize}
\item[(2)] For every family $\{\PP_n : n \in \N \}$ of leaderless population protocols such that
$\PP_n$ computes $x \geq n$, there exist infinitely many $n$ such that
$\PP_n$ has at least $(\log n)^{1/4}$ states.
\end{itemize}
However, this bound is only \emph{existential} (``there exists
infinitely many $n$'' instead of ``for all $n$'').  Moreover, it
follows from a counting argument that does not provide any information
on the values of $n$ realizing the bound. Is there a poly-logarithmic
universal bound? We show that, surprisingly, the answer is negative:
\begin{itemize}
\item[(3)]There exists a family $\{\PP_n : n \in \N\}$
  of population protocols with two leaders, and values $c_0 <
  c_1 < \ldots \in \N$, such that $\PP_n$ has $O(\log\log c_n)$ states
  and computes the predicate $x \geq c_n$ for every $n \in \N$.
\end{itemize}
Observe that in these protocols the ``phase transition'' occurs at $x=c_n$,
even though no agent has enough memory to
index a particular bit of $c_n$.

Can one go even further, and design $O(\log\log\log c_n)$ protocols?
We show that the answer is negative for \emph{1-aware} protocols. Both
the standard protocol for $x \geq n$
and the families of (1) and (3) have the following, natural
property: If the number of agents is greater than or equal to $n$,
then the agents not only reach consensus 1, they also
eventually \emph{know} that they will reach this consensus. 
We say that these protocols are 1-aware.

We obtain lower bounds for 1-aware protocols that
essentially match the upper bounds of (1) and (3):
\begin{itemize}
\item[(4)] Every leaderless, 1-aware population protocol computing
 $x \geq n$ has at least $\log_3 n$ states.
\item[(5)] Every 1-aware protocol (leaderless or not) computing $x \geq n$ has at least $(\log \log(n) / 151)^{1 / 9}$ states.
\end{itemize}

\medskip\noindent \textbf{Protocols for systems of linear inequalities.} 
In the second part of the paper we show that our results can be
extended to other predicates. First, instead of the simple predicate
$x \geq n$, we study the general linear predicate $a_1 x_1 + a_2 x_2
+ \dots + a_k x_k \geq c$ for arbitrary integer coefficients
$a_1, \ldots, a_k, c \in \Z$. By means of a delicate construction we
give protocols whose number of states grows only logarithmically in
the size of the coefficients:
\begin{itemize}
\item[(6)] There is a protocol with at most $O(kn)$ states and $O(n)$ leaders that computes 
$a_1 x_1+\cdots+a_k x_k\geq c$, where $n$ is the size
of the binary encoding of $\max(|a_1|, |a_2|, \ldots, |a_k|, |c|)$.
\end{itemize}
Finally, in the most involved construction of the paper, we show that
the same applies to arbitrary systems of linear inequalities. Note
that the standard conjunction construction, which produces a protocol
for $\varphi_1 \land \varphi_2$ from protocols computing predicates
$\varphi_1$ and $\varphi_2$, cannot be applied because it would lead
to exponentially large protocols.
\begin{itemize}
\item[(7)] There is a protocol with at most $O((\log m + n)(m + k))$ states and $O(m(\log m + n))$ leaders 
that computes
$A\vec{x} \geq \vec{c}$, where $A \in \Z^{m \times k}$ and $n$ is the
size of the largest entry in $A$ and $\vec{c}$.
\end{itemize}


\medskip\noindent \textbf{Structure of the paper.} Section~\ref{sec:preliminaries}
introduces basic definitions, protocols with and without leaders, and
a simple construction with an involved correctness proof showing how
to simulate protocols with $k$-way interactions by standard protocols
with binary interactions. Sections~\ref{sec:leaderless}
to \ref{sec:one:aware} present our bounds on the flock-of-birds
predicates, and Section \ref{sec:sys:lin} the bounds on systems of
linear inequalities. Due to space constraints, some proofs are
deferred to the appendix.



\section{Preliminaries}\label{sec:preliminaries}
\noindent\textbf{Numbers.} Let $n \in \N_{>0}$. The logarithm in base $b$ of $n$ is denoted by $\log_b n$. Whenever $b = 2$, we omit the subscript. We define $\bits{n}$ as the set of indices of the bits occurring in
the binary representation of $n$, \eg $\bits{13} = \{0, 2, 3\}$ since
$13 = 1101_2$. The \emph{size} of $n$, denoted $\size{n}$, is the
number of bits required to represent $n$ in binary. Note that
$|\bits{n}| \leq \size{n} = \lfloor \log n \rfloor + 1$.\medskip

\noindent\textbf{Multisets.} A \emph{multiset} over a finite set $E$
is a mapping $M \colon E \to \N$. The set of all multisets over $E$ is denoted $\N^E$.  
For every $e \in E$, $M(e)$ denotes the number of
occurrences of $e$ in $M$, and for every $E' \subseteq E$
we define $M(E') \defeq \sum_{e \in E'} M(e)$. The \emph{support} and \emph{size} of $M$ are 
defined respectively as $\supp{M} \defeq \{e \in E : M(e) > 0\}$ and $|M| \defeq \sum_{e \in E} M(e)$.  \emph{Addition} and
\emph{comparison} are extended to multisets componentwise, \ie $(M
\mplus M')(e) \defeq M(e) + M'(e)$ for every $e \in E$, and $M \leq M'
\defiff M(e) \leq M(e)$ for every $e \in E$. We define \emph{multiset
  difference} as $(M \mminus M')(e) \defeq \max(M(e) - M'(e), 0)$ for
every $e \in E$. The empty multiset is denoted $\vec{0}$.
We sometimes denote multisets using a
set-like notation, \eg $\multiset{f, 2 \cdot g, h}$ is the multiset
$M$ such that $M(f) = 1$, $M(g) = 2$, $M(h) = 1$ and $M(e) = 0$ for
every $e \in E \setminus \{f, g, h\}$. \medskip

\noindent\textbf{Population protocols.} We introduce a rather general
model of population protocols, allowing for interactions between more than 
two agents and for leaders.  A \emph{$k$-way population protocol} is a 
tuple $\PP = (Q, T, I, L, O)$ such that
\begin{itemize}
\item $Q$ is a finite set of \emph{states},
\item $T \subseteq \bigcup_{2 \leq i \leq k} Q^i \times Q^i$ is a set of 
 \emph{transitions},
\item $I \subseteq Q$ is a set of \emph{initial states},
\item $L \in \N^Q$ is a set of \emph{leaders}, and 
\item $O \colon Q \to \{0, 1\}$ is the \emph{output mapping}.
\end{itemize}
We assume throughout the paper that agents can always
interact, \ie, that for every pair of states $(p, q)$, there exists a
pair of states $(p', q')$ such that $((p, q), (p', q')) \in T$.

A \emph{configuration} of $\PP$ is a multiset $C \in \N^Q$ such that
$|C| > 0$. Intuitively, $C$ describes a non empty collection
containing $C(q)$ agents in state $q$ for every $q \in Q$. We denote
the set of configurations over $E \subseteq Q$ by $\pop{E}$. A
configuration $C$ is \emph{initial} if $C = D \mplus L$ for some
$D \in \pop{I}$. So, intuitively, leaders are distinguished agents
that are present in every initial configuration. The \emph{number of
leaders} of $\PP$ is $|L|$.  We say that $\PP$ is \emph{leaderless} if
it has no leader, \ie if $L = \vec{0}$. We discuss protocols with and
without leaders later in this section.

Let $t = ((p_1, p_2, \ldots, p_i), (q_1, q_2, \ldots, q_i))$ be a
transition. To simplify the notation, we denote $t$ as $p_1,
p_2, \ldots, p_i \mapsto q_1, q_2, \ldots, q_i$. Intuitively, $t$ describes
that $i$ agents at states $p_1, \ldots, p_i$ may interact and move to states
$q_1, \ldots, q_i$. The \emph{preset} and
\emph{postset} of $t$ are respectively defined as $\pre{t} \defeq
\{p_1, p_2, \ldots, p_i\}$ and $\post{t} \defeq \{q_1, q_2, \ldots,
q_i\}$. We extend presets and postsets to sets of transitions, \eg
$\pre{T} \defeq \bigcup_{t \in T} \pre{t}$. The \emph{pre-multiset}
and \emph{post-multiset} of $t$ are respectively defined as
$\prem{t} \defeq \multiset{p_1, p_2, \ldots, p_i}$ and
$\postm{t} \defeq \multiset{q_1, q_2, \ldots, q_i}$.

We say that $t$ is \emph{enabled} at $C \in \pop{Q}$ if $C
\geq \prem{t}$. If $t$ is enabled at $C$, then it can \emph{occur}, in
which case it leads to the configuration $C' = (C \mminus \prem{t})
\mplus \postm{t})$. We denote this by $C \trans{t} C'$. 
We say that $t$ is \emph{silent} if $\prem{t} = \postm{t}$. In particular,
if $t$ is silent and $C \trans{t} C'$, then $C=C'$. We write $C
\trans{} C'$ if $C \trans{t} C'$ for some $t \in T$. We write $C
\trans{t_1 t_2 \cdots t_k} C'$ if there exist $C_0, C_1, \ldots, C_k
\in \pop{Q}$ and $t_1, t_2, \ldots, t_k \in T$ such that $C = C_0
\trans{t_1} C_1 \trans{t_2} \cdots C_k = C'$. We write $C \trans{*}
C'$ if $C \trans{\sigma} C'$ for some $\sigma \in T^*$. We say that
$C'$ is \emph{reachable} from $C$ if $C \trans{*} C'$. The
\emph{support} of a sequence $\sigma = t_1 t_2 \cdots t_n \in T^*$ is
$\supp{\sigma} \defeq \{t_i : 1 \leq i \leq n\}$. 

\begin{example}\label{ex:flock}
  The flock-of-birds protocol mentioned in the introduction is
  formally defined as $\PP_n = (Q, T, I, L, O)$ where $Q = \{0, 1,
  \ldots, n\}$, $I = \{1\}$, $L = \vec{0}$, $O(a) = 1 \iff a = n$, and where
  $T$ consists of the following transitions:
  \begin{align*}
    s_{a,b}:\ & a, b \mapsto 0, \min(a + b, n) &&\text{for every } 0 \leq
    a, b < n, \\
    t_a:\ & a, n \mapsto n, n &&\text{for every } 0 \leq a \leq n.
  \end{align*}
  $\PP_n$ is 2-way and leaderless. Intuitively, it works as follows. Each agent stores a number. When two
  agents meet, one agent stores the sum of their values and the other
  one stores 0. Sums cap at $n$. Once an agent reaches $n$, all agents
  eventually get converted to $n$. To illustrate the above
  definitions, observe that: $\pre{s_{2,3}} = \{2, 3\}$, $\post{t_2} =
  \{n\}$, $\prem{s_{2,3}} = \multiset{2, 3}$ and $\postm{t_{2}} =
  \multiset{n, n}$. Configuration $\multiset{1, 1, 1}$ is initial, but
  $\multiset{1, 0, 2}$ is not. We have $\multiset{1,
    1, 1} \trans{s_{1,1}} \multiset{1, 0, 2} \trans{t_0} \multiset{1,
    2, 2} \trans{t_1} \multiset{2, 2, 2}$, or more concisely
  $\multiset{1, 1, 1} \trans{\sigma} \multiset{2, 2, 2}$ where $\sigma
  = s_{1,1} t_0 t_1$. \qed
\end{example}

\noindent\textbf{Computing with population protocols.} An \emph{execution} 
$\pi$ is an infinite sequence of configurations
$C_0 C_1 \cdots$ such that $C_0 \trans{} C_1 \trans{} \cdots$. We say
that $\pi$ is \emph{fair} if for every configuration $D$ the following
holds\footnote{This definition of fairness differs from the original
  definition of Angluin et al.~\cite{AADFP04}, but is equivalent.}:
\begin{align*}
  \text{if } \{i \in \N : C_i \trans{*} D\} \text{ is infinite, then }
  \{i \in \N : C_i = D\} \text{ is infinite}.
\end{align*}
In other words, fairness ensures that a configuration cannot be
avoided forever if it can be reached infinitely often along $\pi$. We
say that a configuration $C$ is a \emph{consensus configuration} if
$O(p) = O(q)$ for every $p, q \in \supp{C}$. If a configuration $C$ is
a consensus configuration, then its \emph{output} $O(C)$ is the unique
output of its states, otherwise it is $\bot$. An execution $\pi = C_0
C_1 \cdots$ \emph{stabilizes} to $b \in \{0, 1\}$ if $O(C_i) =
O(C_{i+1}) = \cdots = b$ for some $i \in \N$. The \emph{output} of
$\pi$ is $O(\pi) \defeq b$ if it stabilizes to $b$, and $O(\pi) \defeq
\bot$ otherwise. A consensus configuration $C$ is \emph{stable} if
every configuration $C'$ reachable from $C$ is a consensus
configuration such that $O(C') = O(C)$. It can easily be shown that a
fair execution stabilizes to $b \in \{0, 1\}$ if and only if it
contains a stable configuration whose output is $b$.

A population protocol $\PP = (Q, T, I, L, O)$ is \emph{well-specified}
if for every initial configuration $C_0$, there exists $b \in \{0,
1\}$ such that every fair execution $\pi$ starting at $C_0$ has output
$b$. If $\PP$ is well-specified, then we say that it \emph{computes}
the predicate $\varphi \colon \pop{I} \to \{0, 1\}$ if for every
$D \in \pop{I}$, every fair execution starting at $D \mplus L$ has
output $\varphi(D)$.

\begin{example}
  Consider the protocol $\PP_2$ defined in Example \ref{ex:flock} (i.e, $n=2$). We have $O(\multiset{1, 1,
    1}) = 0$, $O(\multiset{2, 2, 2}) = 1$ and $O(\multiset{1, 0, 2}) =
  \bot$. The execution $\multiset{1, 1, 1} \trans{} \multiset{1, 0, 2}
  \trans{} \multiset{1, 2, 2} \trans{} \multiset{2, 2 ,2} \trans{}
  \multiset{2, 2 ,2} \trans{} \cdots$ is fair and its output is
  $1$. However, the execution $\multiset{1, 1, 1} \trans{}
  \multiset{1, 0, 2} \trans{} \multiset{1, 0, 2} \trans{} \cdots$ is
  not fair since $\multiset{1, 0, 2}$ occurs infinitely often and can
  lead to $\multiset{2, 2, 2}$ which does not occur.
\end{example}

\noindent\textbf{Leaders.}  Intuitively, leaders are extra agents present in every
initial configuration. Allowing a large number of leaders may help to
compute predicates with fewer states. To illustrate this, consider 
the leaderless protocol of Example \ref{ex:flock}. It computes  $x \geq n$
with $n+1$ states. We describe a 2-way protocol with only 4 states, but $n$ leaders. 
It is an adaptation of the well-known basic majority protocol
(see, \eg, \cite{AR07}). Let $\PP_n' = (Q, T, I, L_n, O)$ be the protocol where 
$Q \defeq \{x, y, \overline{x},
\overline{y}\}$, $I \defeq \{x\}$, $L_n \defeq \multiset{n \cdot y}$,
$O(x) = O(\overline{x}) \defeq 1$, $O(y) = O(\overline{y}) \defeq 0$,
and where $T$ consists of the following transitions:
\begin{align*}
  x, y &\mapsto \overline{x}, \overline{y}, &
  x, \overline{y} &\mapsto x, \overline{x}, &
  y, \overline{x} &\mapsto y, \overline{y}, &
  \overline{x}, \overline{y} &\mapsto \overline{x}, \overline{x}.
\end{align*}

Informally, ``active'' agents in states $x$ and $y$ collide and become
``passive'' agents in states $\overline{x}$ and $\overline{y}$. At
some point, some active agents ``win'' and convert all passive agents
to their output. It is known that this protocol is well-specified and
computes the predicate $x \geq y$ when there are no leaders (\ie, if
we set $L_n = \vec{0}$). So, by initially fixing $n$ leaders in state
$y$, $\PP_n'$ computes $x \geq n$.

Thus, the predicate $x \geq n$ can be computed either with $O(n)$ states
and no leaders, or with $4$ states and $O(n)$ leaders. This indicates
a trade-off between states and leaders, and one should avoid hiding
all of the complexity in one of them. For this reason, we make these
two quantities explicit in all of our results.

The reason for considering protocols with leaders is that, as we shall
see, even a constant number of leaders demonstrably leads to
exponentially more compact protocols for some predicates.  Other
papers have made similar observations with respect to other resource
measures (see \eg~\cite{AAE08a,LFIISV17}).


\medskip\noindent\textbf{From $k$-way to $2$-way protocols.} In our constructions it is very convenient to use  $k$-way transitions for $k > 2$.
The following lemma shows that $k$-way protocols can be transformed
into $2$-way protocols by introducing a few extra states. Intuitively,
a $k$-way transition is simulated by a chain of 2-way transitions.
The first part of the chain ``collects'' $k$ participants one by
one. First, two agents agree to participate, and one of them becomes
``passive'', while the second ``searches'' for a third
participant. This is iterated until $k$ participants are collected.
In the second part, the last collected agent ``informs'' all passive
agents, one by one, that $k$ agents have been collected; upon hearing
this, the passive agents move to their destination states and become
active again. To prevent faulty behavior
when there are not enough agents, all transitions of the first part
can be ``reversed'', that is, the agent that is currently searching
and the last collected agent can ``repent'' and ``undo'' the
transition. While the construction is simple and intuitive, its
correctness proof is very involved, because agents that reach their
destination can engage in other interactions while other participants
are still passive. The construction and the correctness proof are
presented in Appendix~\ref{app:k:way}.

\begin{restatable}{restatelemma}{lemKway}\label{lem:kway}
  Let $\PP = (Q, T, I, L, O)$ be a well-specified $k$-way population
  protocol. For every $3 \leq i \leq k$, let $n_i$ be the number of
  $i$-way transitions of $\PP$. There exists a 2-way population
  protocol $\PP'$, with at most $|Q| + \sum_{3 \leq i \leq k} 3i \cdot
  n_i$ states, which is well-specified and computes the same predicate
  as $\PP$.
\end{restatable}

\section{Leaderless protocols for $x \geq n$}\label{sec:leaderless}
In this section, we consider \emph{leaderless} protocols for the
predicate $x \geq n$. We first show that the number of states required
to compute this predicate can be reduced from the known $O(n)$ bound
to $O(\log n)$, using a similar binary encoding as in ~\cite{AAEGR17}.
Then we show an existential lower bound of $O((\log
n)^{1/4})$.\medskip


\noindent\textbf{A protocol with $O(\log n)$ states.} We describe a leaderless $\size{n}$-way protocol $\PP_n = (Q_n, T_n,
I_n, \vec{0}, O_n)$ with $\size{n} + 3$ states that computes $x \geq
n$. The states are
$Q_n \defeq \{\state{0}, \state{2^0}, \ldots, \state{2^{\size{n}}},
\state{n}\}$ and the sole initial state is $I_n \defeq \{\state{2^0}\}$. The output mapping is defined as
$O_n(\state{n}) \defeq 1$ and $O_n(q) \defeq 0$ for every state
$q \neq \state{n}$.
 
Before defining the set $T_n$ of transitions, we need some
preliminaries. For every state $q \in Q_n$, let $\val(q)$ denote the
number $q$ stands for, \ie $\val(\state{0}) = 0$, $\val(\state{n}) =
n$ and $\val(\state{2^i}) = 2^i$ for every $0 \leq
i \leq \size{n}$. Moreover, for every configuration $C$, let
$\val(C) \defeq \sum_{q \in Q_n} \val(q) \cdot C(q)$. A configuration
$C$ is a \textit{representation of $m$} if $\val(C) = m$. For
example, the configuration $\multiset{\state{0}, \state{2^1},
5 \cdot \state{2^3}}$ is a representation of $0 + 2^1 + 5 \cdot 2^3 =
42$. Observe that every initial configuration $C_0$ is a
representation of $|C_0|$.
 
$T_n$ is the union of two sets $T_n^1$ and $T_n^2$. Intuitively,
$T_n^1$ allows the protocol to reach from a representation of a
number, say $m$, other representations of $m$. Formally, the
transitions of $T_n^1$ are:
$$
\begin{array}{rcll}
  \state{2^i}, \state{2^i} & \mapsto & \state{2^{i+1}}, \state{0} &
  \mbox{for every $0 \leq i < \size{n}$} \\

  \state{2^{i+1}}, \state{0} & \mapsto & \state{2^i}, \state{2^i} &
  \mbox{for every $0 \leq i < \size{n}$} \\

  \multiset{\state{2^i} : i \in \bits{n}} & \mapsto & \state{n},
  \underbrace{\state{0}, \cdots, \state{0} }_{\mathclap{|\bits{n}|-1
      \text{ copies}}}
\end{array}$$
The transitions of $T_n^2$ allow agents in state $\state{n}$ to
``attract'' all other agents to $\state{n}$. Formally, they are:
$$
\begin{array}{rcll}
  \state{n}, q & \mapsto & \state{n}, \state{n} & \mbox{for every
    $q \in Q_n$}
\end{array}
$$

Let us show that $\PP_n$ computes $x \geq n$. Let $C_0 = \multiset{m
  \cdot \state{2^0}}$. If $m < n$, then $C(\state{n}) = 0$ holds for
every representation $C$ of $m$. Therefore, every configuration $C$
reachable from $C_0$ satisfies $C(\state{n}) = 0$ and, since
$\state{n}$ is the only state with output 1, the protocol stabilizes
to $0$. If $m \geq n$, then it is possible to reach a representation
$C$ of $m$ satisfying $C(\state{n}) > 0$, for example $C =
\multiset{\state{n}, (m-n) \cdot \state{2^0}}$. Since for every
transition $\state{2^i}, \state{2^i} \mapsto \state{2^{i+1}},
\state{0}$ the set $T_n$ also contains the reverse transition
$\state{2^{i+1}}, \state{0} \mapsto \state{2^i}, \state{2^i}$, every
representation $C$ of $m$ satisfying $C(\state{n}) = 0$ can reach a
representation $C'$ of $m$ satisfying $C'(\state{n}) > 0$. Let $\pi = C_0 C_1
C_2 \cdots$ be a fair execution. By fairness, there is some $i
\in \N$ such that $C_i(\state{n}) > 0$. Again by fairness, and because
of $T_n^2$, there is also an index $j$ such that $C_k = \multiset{m
  \cdot \state{n}}$ for every $k \geq j$, and so $\pi$ stabilizes to
1.

Note that $|Q_n| = \size{n} + 3$. Moreover, $\PP_n$ has one
$|\bits{n}|$-way transition. Thus, by Lemma~\ref{lem:kway}, we obtain
the following theorem:

\begin{theorem}
 There exists a family $\{\PP_0, \PP_1, \ldots\}$ of leaderless and
 2-way population protocols such that $\PP_n$ has at most $4\lfloor
 \log n \rfloor + 7$ states and computes the predicate $x \geq n$.
\end{theorem}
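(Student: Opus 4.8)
The plan is to read off the desired $2$-way family from the $\size{n}$-way protocol $\PP_n = (Q_n, T_n, I_n, \vec{0}, O_n)$ constructed just above, by pushing it through Lemma~\ref{lem:kway} and then counting states. We have already established everything needed about $\PP_n$: it is leaderless and well-specified, it computes $x \geq n$, it has $|Q_n| = \size{n} + 3$ states, and its \emph{only} transition that is not $2$-way is the single $|\bits{n}|$-way transition $\multiset{\state{2^i} : i \in \bits{n}} \mapsto \state{n}, \state{0}, \ldots, \state{0}$ (the doubling transitions and their reverses in $T_n^1$, and all transitions of $T_n^2$, are $2$-way). So, in the notation of Lemma~\ref{lem:kway}, $n_i = 0$ for $3 \leq i < |\bits{n}|$ and $n_{|\bits{n}|} = 1$.

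First I would apply Lemma~\ref{lem:kway} to $\PP_n$, obtaining a $2$-way, well-specified protocol $\PP_n'$ that computes $x \geq n$ and is still leaderless (the simulation of Lemma~\ref{lem:kway} introduces no leaders), with at most
\[
|Q_n| + 3\,|\bits{n}| \;=\; (\size{n} + 3) + 3\,|\bits{n}| \;\leq\; (\size{n} + 3) + 3\,\size{n} \;=\; 4\,\size{n} + 3
\]
states, where the inequality uses $|\bits{n}| \leq \size{n}$. Since $\size{n} = \lfloor \log n \rfloor + 1$, this bound equals $4\lfloor \log n \rfloor + 7$, and declaring $\PP_n'$ the $n$-th member of the family finishes the proof.

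The only subtlety is the degenerate case $|\bits{n}| \leq 2$, where the ``big'' transition is not a genuine $k$-way transition with $k \geq 3$, so Lemma~\ref{lem:kway} says nothing about it. If $|\bits{n}| = 2$, then $\PP_n$ is already $2$-way and $|Q_n| = \size{n} + 3 \leq 4\lfloor\log n\rfloor + 7$, so we may take $\PP_n' = \PP_n$. If $|\bits{n}| = 1$, i.e.\ $n$ is a power of two with $n \geq 2$, the displayed transition is $1$-way, which the model forbids; there I would replace it by the $2$-way, value-preserving transition $\state{2^{\size{n}-2}}, \state{2^{\size{n}-2}} \mapsto \state{n}, \state{0}$ (noting $2 \cdot 2^{\size{n}-2} = n$), yielding a $2$-way leaderless protocol with $\leq \size{n} + 3$ states whose correctness is argued exactly as in the paragraph preceding the theorem; the cases $n \in \{0,1\}$ are trivial. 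In every case the state count is at most $4\lfloor\log n\rfloor + 7$.

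I expect that all the genuine difficulty sits inside Lemma~\ref{lem:kway}, whose chain-simulation of $k$-way interactions by $2$-way interactions has a delicate correctness proof (deferred to Appendix~\ref{app:k:way}); once that lemma is granted, the present theorem is essentially bookkeeping, the handling of the $|\bits{n}| \leq 2$ corner cases being the only point requiring a moment's thought.
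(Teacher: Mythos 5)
Your proof is correct and follows essentially the same route as the paper's: the paper likewise obtains the theorem by applying Lemma~\ref{lem:kway} to the $\size{n}$-way protocol $\PP_n$ and computing $(\size{n}+3)+3|\bits{n}| \leq 4\size{n}+3 = 4\lfloor\log n\rfloor+7$. Your explicit treatment of the degenerate cases $|\bits{n}|\leq 2$ (in particular the $1$-way transition that would arise when $n$ is a power of two) is a welcome addition that the paper passes over in silence.
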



\noindent\textbf{An existential $(\log n)^{1/4}$ lower bound.} We show that every family $\{\PP_n\}_{n \in \N}$ of leaderless and
2-way protocols computing the family of predicates $\{x \geq
n\}_{n \in \N}$ must contain infinitely many members of size
$\Omega((\log n)^{1/4})$. We call this an existential lower bound,
contrary to a universal lower bound, which would state that $\PP_n$
has size $\Omega((\log n)^{1/4})$ for every $n \geq 1$.

\begin{restatable}{theorem}{thmExistBound}\label{thm:existbound}
  Let $\{\PP_0, \PP_1, \ldots \}$ be an infinite family of leaderless
  and 2-way population protocols such that $\PP_n$ computes the
  predicate $x \geq n$ for every $n \in \N$. There exist infinitely
  many indices $n$ such that $\PP_n$ has at least $(\log n)^{1/4}$
  states.
\end{restatable}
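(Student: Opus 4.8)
I would prove the contrapositive by a counting argument: there are too few small leaderless $2$-way protocols to realise all the predicates $x \geq n$. Suppose, for contradiction, that only finitely many $n$ satisfy $|Q_n| \geq (\log n)^{1/4}$, where $Q_n$ denotes the state set of $\PP_n$. Then there is some $n_0 \geq 1$ such that $|Q_n| < (\log n)^{1/4}$ for every $n \geq n_0$.

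The first step is to bound the number of ``essentially different'' leaderless $2$-way protocols with a bounded number of states. The key observation is that whether $C \trans{} C'$ holds depends on a transition $t$ only through the pair $(\prem{t}, \postm{t})$ of $2$-element multisets over $Q$; hence the predicate computed by a well-specified leaderless $2$-way protocol is already determined by the tuple $(Q, \mathcal{T}, I, O)$, where $\mathcal{T} \defeq \{(\prem{t}, \postm{t}) : t \in T\}$, and even by this data only up to bijective renaming of $Q$. Fixing $Q = \{1, \dots, k\}$, there are $\binom{k+1}{2}$ distinct $2$-element multisets over $Q$, so $\mathcal{T}$ ranges over at most $2^{\binom{k+1}{2}^2}$ sets, $I$ over at most $2^k$ subsets, and $O$ over at most $2^k$ maps. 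Summing over $k \leq s$, the number of predicates computable by leaderless $2$-way protocols with at most $s$ states is at most
\[
  s \cdot 2^{\binom{s+1}{2}^2 + 2s} \;=\; 2^{\,(s^4 + 2 s^3 + s^2)/4 \;+\; 2s \;+\; \log s}.
\]

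The second step derives a contradiction. Fix a large $N$ and let $s \defeq \max_{n_0 \leq n \leq N} |Q_n|$; by assumption $s < (\log N)^{1/4}$, so $s^4 < \log N$, $s^3 < (\log N)^{3/4}$ and $s^2 < \sqrt{\log N}$. Substituting into the bound above, the number of predicates computable by leaderless $2$-way protocols with at most $s$ states is at most $2^{\frac{1}{4} \log N + O((\log N)^{3/4})}$, which is smaller than $N^{1/3}$ once $N$ is large enough. On the other hand, the predicates $x \geq n$ for $n_0 \leq n \leq N$ are pairwise distinct (for $n_1 < n_2$ they disagree on the input $n_1 \geq 1$), each is computed by $\PP_n$, a well-specified leaderless $2$-way protocol with at most $s$ states, and a well-specified protocol computes a unique predicate; hence there are at least $N - n_0 + 1$ such predicates. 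Therefore $N - n_0 + 1 \leq N^{1/3}$, which is false for large $N$ --- a contradiction.

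I expect the only subtle point to be the bookkeeping in the exponent. Counting transitions crudely as ordered pairs in $Q^2 \times Q^2$ only yields the bound $2^{s^4}$, which is too weak: $2^{s^4} \approx N$ already when $s \approx (\log N)^{1/4}$, leaving no room for the additional $2^{2s+\log s}$ factor, let alone the slack needed to undercut $N - n_0 + 1$. One genuinely needs the improved leading coefficient $1/4$ obtained by counting transitions as unordered pairs of $2$-element multisets, together with the fact that the lower-order terms $s^3$, $s^2$, $s$, $\log s$ contribute only a subpolynomial factor $2^{o(\log N)}$. I would therefore state carefully at the outset that the computed predicate is invariant under bijective renaming of states and under replacing $T$ by any $T'$ with $\{(\prem{t},\postm{t}):t\in T\} = \{(\prem{t},\postm{t}):t\in T'\}$, so that the crude count genuinely bounds the number of realisable predicates. (Note also that this argument breaks for protocols with leaders, since $L$ may be any of infinitely many multisets, and for $k$-way protocols, consistent with item (3).)
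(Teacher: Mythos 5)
Your proof is correct, and at bottom it is the same counting argument as the paper's: bound the number of unary predicates computable by leaderless $2$-way protocols with at most $s$ states, and note that the threshold predicates supply too many distinct ones. The two write-ups differ only in bookkeeping, in two places. First, to tame the lower-order factors you refine the transition count to sets of pairs of $2$-element multisets, obtaining the leading term $\binom{s+1}{2}^2 \approx s^4/4$; the paper instead keeps ordered tuples but discards the $s^2$ silent transitions $x, y \mapsto x, y$, so that the factor $s \cdot 2^s$ for initial states and output maps is absorbed into the saved $2^{s^2}$, yielding the clean bound $2^{s^4}$. Contrary to your closing remark, this cruder bound does suffice --- but only because the paper compares it against all $n+1$ predicates $x \geq 0, \ldots, x \geq n$ at once, where $s < (\log n)^{1/4}$ gives $2^{s^4} < n$ outright; in your contrapositive framing, which discards the indices below $n_0$ and must beat $N - n_0 + 1$ rather than $n+1$, the extra slack from the constant $1/4$ genuinely is needed, so your caution is justified for the argument as you set it up. Second, to extract ``infinitely many indices'', the paper applies its finite counting bound for each $n$ to produce some $j_n \leq n$ whose protocol has at least $(\log n)^{1/4} \geq (\log j_n)^{1/4}$ states, and then shows $\{j_n : n \in \N\}$ is infinite by extracting a strictly increasing subsequence of state counts; your direct contrapositive reaches the same conclusion somewhat more transparently. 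A final nicety in your favour: by restricting to $n \geq n_0 \geq 1$ you sidestep the fact that $x \geq 0$ and $x \geq 1$ coincide on the (necessarily non-empty) input configurations, a point the paper's count of ``$n+1$ distinct predicates'' glosses over, though its strict inequality $2^{s^4} < n$ still saves it.
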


\begin{proof}[Proof sketch.]
  The proof boils down to bounding the number $d(m)$ of unary
  predicates computed by protocols with $m$ states. The number of
  distinct sets of transitions, excluding silent ones, is bounded by
  $2^{m^4 - m^2}$. The number of possible initial states and output
  mappings are respectively $m$ and $2^m$. Altogether, we obtain:
 \begin{align*}
   d(m) &\leq 2^{m^4 - m^2} \cdot m \cdot 2^m  = 2^{m^4}
   \cdot \frac{2^m \cdot m}{2^{m^2}}  \leq
   2^{m^4}. \qedhere
 \end{align*}
\end{proof}

\section{A $O(\log \log n)$ protocol with leaders for some $x \geq n$}\label{sec:loglog:upper}
\newcommand{\semig}[1]{{\cal S}_{#1}}
\newcommand{\al}{{\cal A}}

The lower bound of Section~\ref{sec:leaderless} is not valid for every
$n$, it only ensures that, for some values of $n$, protocols computing
$x \geq n$ must have a logarithmic number of states. We prove that,
surprisingly, there is an infinite sequence $n_1 < n_2 < \cdots$ of
values that break through the logarithmic barrier: The predicates $x
\geq n_i$ can be computed by very small protocols with only $O( \log
\log n_i)$ states and two leaders. So, loosely speaking, a flock of
birds can decide if it contains at least $n_i$ birds, even though no
bird has enough memory to store even one single bit of $n_i$.

The result is based on a construction of~\cite{MM82}. In this paper,
Mayr and Meyer study the word problem for commutative semigroup
presentations. Given a finite set $\al$ of generators, a presentation
of a commutative semigroup generated by $\al$ is a finite set of
productions $\semig{} = \{l_1 \rightarrow r_1, \ldots, l_m \rightarrow
r_m\}$, where $l_i, r_i \in \al^*$ for every $1 \leq i \leq m$,
satisfying:
\begin{itemize}
\item Commutativity: $ab \rightarrow ba \in \semig{}$ for every $a,b
  \in \al$;\footnote{In~\cite{MM82}, the elements of $S$ are written
    using uppercase letters. We use lowercase for convenience.} and
\item Reversibility: if $l \rightarrow r \in \semig{}$, then $r
  \rightarrow l \in \semig{}$.
\end{itemize}
Given $\alpha, \beta \in \al^*$, we say that $\beta$ is \emph{derived}
from $\alpha$ \emph{in one step}, denoted by $\alpha \trans{} \beta$,
if $\alpha = \gamma \, l \, \delta$ and $\beta = \gamma \, r \,
\delta$ for some $\gamma, \delta \in \al^*$ and some $r \rightarrow l
\in \semig{}$. We say that $\beta$ is \emph{derived} from $\alpha$ if
$\alpha \trans{*} \beta$, where $\trans{*}$ is the reflexive
transitive closure of the relation induced by $\trans{}$. Observe
that, by reversibility, we have $\alpha \trans{*} \beta$ if{}f $\beta
\trans{*} \alpha$. Further, by commutativity we have $\alpha \trans{*}
\beta$ if{}f $\pi(\alpha) \trans{*} \pi'(\beta)$ for every permutation
$\pi$ of $\al$.

Mayr and Meyer study the following question: given a commutative
semigroup presentation $\semig{}$ over $\al$, and initial and final
letters $s,f \in \al$, what is the length of the shortest word
$\alpha$ such that $s \trans{*} f \alpha$? They exhibit a family of
presentations of size $O(n)$ for which the shortest $\alpha$ has
double exponential length $2^{2^n}$. More precisely,
in~\cite[Sect.~6]{MM82}, they construct a family $\{\semig{n}\}_{n
  \geq 1}$ of presentations over alphabets $\{\al_n\}_{n \geq 1}$
satisfying the following properties:
\begin{itemize}
\item[(1)] $|\al_n| = 14n + 10$, $|\semig{n}| = 20n + 8$, and
  $\max\{|l|, |r| : l \rightarrow r \in \semig{n} \} = 5$.
\item[(2)] $\{s_n,f_n,b_n, c_n\} \subseteq \al_n$ for every $n \geq
  1$.
\item[(3)] $s_n c_n \trans{*} f_n\alpha$ if{}f $\alpha =
  c_nb_n^{2^{2^n}}$~\cite[Lemma~6 and~8]{MM82}.
\end{itemize}
\noindent To apply this result, for each $n \geq 1$ we construct a
$5$-way population protocol $\PP_n = (Q_n, T_n, I_n, L_n, O_n)$ with
two leaders as follows:
\begin{itemize}
\item $Q_n \defeq \al_n \cup \{x\}$ for some $x \notin \al_n$.
\item $T_n \defeq T_n^1 \cup T_n^2$, where:
\begin{itemize}
\item $T_n^1$ contains a transition $\mathrm{pad}(p)$ for every
  production $p= l \rightarrow r$ of $\semig{n}$, obtained by
  ``padding'' $p$ with $x$ so that its left and right sides have the
  same length. For example, $\mathrm{pad}(aab \rightarrow cd) = a, a,
  b \mapsto c, d, x$, and $\mathrm{pad}(a \rightarrow bc) = a, x
  \mapsto b, c$,
\item $T_n^2 \defeq \{f_n, q \mapsto f_n, f_n \mid q \in Q_n\}$,
\end{itemize}
\item $I_n \defeq \{x\}$, 
\item $L_n \defeq \multiset{c_n, s_n}$, and
\item $O_n(f_n) \defeq 1$ and $O_n(q) \defeq 0$ for every $q \neq
  f_n$.
\end{itemize}
Intuitively, $T_n^1$ allows $\PP_n$ to simulate derivations of
$\semig{n}$: a step $C \trans{\mathrm{pad}(p)} C'$ of $\PP_n$
simulates a one-step derivation of $\semig{n}$. We make this more
precise. Given $\alpha \in \al_n^*$ and $m \geq |\alpha|$, let
$C_{\alpha, m}$ be the configuration of $\PP_n$ defined as follows:
$C_{\alpha, m}(x) = m$, and $C_{\alpha, m}(a) = |\alpha|_a$ for every
$a \in \al_n$, where $|\alpha|_a$ is the number of occurrences of $a$
in $\alpha$. Further, given a configuration $C$ of $\PP_n$, let
$\alpha_C$ be the element of $\semig{n}$ given by $\alpha_C =
a_1^{C(a_1)} \cdots a_m^{C(a_m)}$, where $a_1, \ldots, a_m$ is a fixed
enumeration of $\al_n$. We have:

\begin{restatable}{restatelemma}{lemSemigroupSim}\label{lem:semigroup:sim}
  Let $\alpha, \beta \in \al_n^*$ and let $C, C'$ be configurations of
  $\PP_n$.
  \begin{itemize}
  \item[(a)] If $\alpha \trans{p_1 \cdots p_k} \beta$ in $\semig{n}$,
    then for every $m \geq 4k$, $C_{\alpha, m}
    \trans{\mathrm{pad}(p_1) \cdots \mathrm{pad}(p_k)} C_{\beta, m'}$
    in $\PP_n$ for some $m' \geq 0$.
  \item[(b)] If $C \trans{\mathrm{pad}(p_1) \cdots \mathrm{pad}(p_k)}
    C'$ in $\PP_n$, then $\alpha_C \trans{p_1 \cdots p_k} \alpha_{C'}$
    in $\semig{n}$.
  \end{itemize}
\end{restatable}

From Lemma~\ref{lem:semigroup:sim}, (1) and~(3), the following can be
shown:

\begin{restatable}{theorem}{thmLoglogUpperbound}\label{thm:loglog:upperbound}
  For every $n \in \N$, there is a 5-way protocol $\PP_n$ with at most
  $14n + 11$ states and at most $34n + 19$ transitions that computes
  the predicate $x \geq c_n$ for some number $c_n \geq 2^{2^n}$.
\end{restatable}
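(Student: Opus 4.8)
I would read the $c_n$ of the statement as the least input size for which $\PP_n$ can possibly reach a configuration containing state $f_n$, and then check that this threshold is at least $2^{2^n}$, that below it every fair execution stabilizes to $0$, and that at or above it every fair execution stabilizes to $1$. Throughout, the generic initial configuration is $C_0 = \multiset{k\cdot x}\mplus L_n = \multiset{k\cdot x,\, c_n,\, s_n}$ with input $k\ge 1$; all configurations reachable from it have the fixed size $k+2$; and $f_n$ is the unique output-$1$ state, so $F_k := \multiset{(k+2)\cdot f_n}$ is the only consensus-$1$ configuration of that size. A basic observation used repeatedly is that from any configuration containing $f_n$, the transitions of $T_n^2$ can drive every agent into $f_n$ and thus reach $F_k$. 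The quantitative bounds are then immediate: $|Q_n| = |\al_n|+1 = 14n+11$, $|T_n| = |T_n^1|+|T_n^2| = |\semig{n}|+|Q_n| = (20n+8)+(14n+11) = 34n+19$, and every transition has arity $\max\{|l|,|r| : l\to r\in\semig{n}\} = 5$ (completing $T_n$ with silent transitions if the standing interaction assumption demands it, which adds no states).

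Set $c_n$ to be the least $k$ for which some configuration containing $f_n$ is reachable from $C_0$. Finiteness of $c_n$: apply Lemma~\ref{lem:semigroup:sim}(a) to a derivation $s_n c_n \trans{*} f_n\, c_n b_n^{2^{2^n}}$ supplied by property~(3); with enough $x$-agents it is simulated in $\PP_n$ and produces an $f_n$-configuration. Lower bound $c_n\ge 2^{2^n}$: along any path from $C_0$ that first reaches an $f_n$-configuration, the step creating $f_n$ must lie in $T_n^1$ (a $T_n^2$-step requires $f_n$ already present), so there is a prefix $C_0 \trans{\mathrm{pad}(p_1)\cdots\mathrm{pad}(p_j)} C$ using only $T_n^1$ with $C(f_n)>0$; by Lemma~\ref{lem:semigroup:sim}(b) we get $s_n c_n = \alpha_{C_0}\trans{p_1\cdots p_j}\alpha_C$, and since $\alpha_C$ contains $f_n$, property~(3) forces $\alpha_C = f_n\, c_n b_n^{2^{2^n}}$; hence $k+2 = |C_0| = |C| \ge 2^{2^n}+2$.

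For $k<c_n$: by monotonicity of reachability under adding agents, $\{k : C_0 \text{ reaches an } f_n\text{-configuration}\}$ is upward closed with minimum $c_n$, so no reachable configuration contains $f_n$, whence $C_0$ is a stable consensus-$0$ configuration and every fair execution stabilizes to $0$. For $k\ge c_n$ the heart of the argument is the claim that \emph{every configuration $C$ reachable from $C_0$ can itself reach $F_k$}. If $C(f_n)>0$, use $T_n^2$. If $C(f_n)=0$, take a path $C_0 \trans{*} C$ and let $t$ be its last $T_n^2$-transition; its target $C'$ satisfies $C'(f_n)>0$ and the remainder $C' \trans{*} C$ uses only $T_n^1$ (if there is no $T_n^2$-transition, put $C':=C_0$, which reaches $F_k$ because $k\ge c_n$). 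Now each $T_n^1$-transition is reversible — $\mathrm{pad}(r\to l)$ is exactly the reverse transition of $\mathrm{pad}(l\to r)$ and $\semig{n}$ is reversible — so running the $T_n^1$-segment backwards gives $C \trans{*} C'$, and then $C' \trans{*} F_k$. Given the claim, along any fair execution from $C_0$ the configurations occurring infinitely often form a reachability-closed strongly connected set, which therefore contains $F_k$; and one checks that $F_k$ is \emph{stable}, i.e.\ has no non-silent enabled transition (the enabled $T_n^2$-instances $f_n,f_n\mapsto f_n,f_n$ are silent, and $\mathrm{pad}(l\to r)$ can only be enabled at $F_k$ when $l=f_n^j$ and $|l|\ge|r|$, which by reversibility would need a production creating a block $f_n^j$; inserting such a block into $f_n\, c_n b_n^{2^{2^n}}$ contradicts property~(3) in the generic case, and the remaining degenerate forms of this block are excluded by inspection of the presentation in~\cite{MM82}). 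Hence that infinite set is $\{F_k\}$ and the execution stabilizes to $1$, so $\PP_n$ computes $x\ge c_n$ with $c_n\ge 2^{2^n}$.

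The step I expect to be the main obstacle is exactly the well-specification argument for $k\ge c_n$: showing that $F_k$ remains reachable from \emph{every} reachable configuration, even after part of the $x$-reservoir has been spent and even after excursions through $f_n$-configurations. The device that makes it go through is that $T_n^1$-transitions occur in reverse pairs (so any $T_n^1$-only segment can be undone) together with the fact that $T_n^2$ can never remove the state $f_n$; pinning down this reversibility bookkeeping and the small structural check that $F_k$ is stable is where the real work lies, whereas the bound $c_n\ge 2^{2^n}$, the $k<c_n$ case, and the size/transition counts are short consequences of Lemma~\ref{lem:semigroup:sim}, property~(3), and arithmetic.
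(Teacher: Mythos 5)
Your proof follows essentially the same route as the paper's: $c_n$ is taken to be the minimal input size from which $f_n$ is coverable, Lemma~\ref{lem:semigroup:sim} together with property~(3) gives $c_n \geq 2^{2^n}$ and settles the inputs below $c_n$, and reversibility of the $T_n^1$-transitions (each $\mathrm{pad}(l\to r)$ having its inverse $\mathrm{pad}(r\to l)$ in the protocol) combined with fairness and $T_n^2$ yields stabilization to $1$ for inputs $\geq c_n$. The only presentational difference is that you argue directly that every reachable configuration can reach $\multiset{(k+2)\cdot f_n}$ and then check that this configuration is stable, whereas the paper packages the same reversibility argument as a proof by contradiction on the indices at which $f_n$ is absent and leaves the stability of the all-$f_n$ configuration (which, as you note, rests on inspecting the productions of~\cite{MM82}) implicit.
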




Using Theorem~\ref{thm:loglog:upperbound} and Lemma~\ref{lem:kway}, we obtain:

\begin{restatable}{restatecorollary}{corLoglogUpperbound}\label{cor:loglog:upperbound}
  There exists a family $\{\PP_0, \PP_1, \ldots\}$ of 2-way protocols
  with two leaders and a family $\{c_0, c_1, \ldots\}$ of natural
  numbers such that for every $n \in \N$ the following holds: $c_n
  \geq 2^{2^n}$ and protocol $\PP_n$ has at most $314 \log\log c_n +
  131$ states and computes the predicate $x \geq c_n$.
\end{restatable}

\section{Universal lower bounds for $1$-aware protocols}\label{sec:one:aware}
To the best of our knowledge, all the protocols in the literature for
predicates $x \geq n$, including those of Section~\ref{sec:leaderless}
and Section~\ref{sec:loglog:upper}, share a very natural property: if
the number of agents is greater than or equal to $n$, then the agents
not only eventually reach consensus 1, they also eventually
\emph{know} that they will reach this consensus. Let us formalize this
idea:

\begin{definition}
\label{def:1aware}
  A well-specified population protocol $\PP = (Q, T, I, L, O)$ is
  \emph{1-aware} if there is a set $Q_1 \subseteq Q \setminus (I \cup
  \supp{L})$ of states such that for every initial configuration $C_0$
  and every fair execution $\pi = C_0 C_1 \cdots$
  \begin{itemize}
  \item[(1)] if $\pi$ stabilizes to $0$, then $C_i(Q_1) = 0$ for every
    $i \geq 0$, and
  \item[(2)] if $\pi$ stabilizes to $1$, then there is some $i \geq 0$
    such that $C_j(Q \setminus Q_1) = 0$ for every $j \geq i$.
  \end{itemize}
\end{definition}

If in the course of an execution $\pi$ an agent reaches a state of
$Q_1$, then $\pi$ cannot stabilize to 0 by~(1), and so, since $\PP$ is
well-specified, it stabilizes to 1; intuitively, at this moment the
agent ``knows'' that the consensus will be 1. Further, if an execution
stabilizes to 1, then all agents eventually reach and remain in $Q_1$
by~(2), and so eventually all agents ``know''.\footnote{We could also
require the seemingly weaker property that eventually at least one
agent ``knows''. However, by adding transitions that ``attract'' all
other agents to $Q_1$, we can transform a protocol in which some agent
``knows'' into a protocol computing the same predicate in which all
agents ``know''.}  Albeit seemingly restrictive, 1-aware protocols
compute a significant subclass of predicates: monotonic Presburger
predicates (see Appendix~\ref{app:monotonic} for more details).

We say that a state $q$ is \emph{coverable} from a configuration $C$
if $C \trans{*} C'$ for some configuration $C'$ such that $C'(q) >
0$. The fundamental property of 1-aware protocols is that, loosely
speaking, consensus reduces to coverability:

\begin{restatable}{restatelemma}{lemSimpleOneAware}\label{lem:simple1aware}
  Let $\PP = (Q, T, \{x\}, L, O)$ be a 1-aware protocol computing a
  unary predicate $\varphi$. We have $\varphi(n) = 1$ if and only if
  some state of $Q_1$ is coverable from $\multiset{n \cdot x}+L$.
\end{restatable}

We show that for 1-aware protocols, the bounds of
Sections~\ref{sec:leaderless} and~\ref{sec:loglog:upper} are
essentially tight. \medskip




\noindent\textbf{Leaderless protocols.} We prove that a 1-aware, leaderless and 2-way protocol computing
$x \geq n$ has at least $\log_3 n$ states. By
Lemma \ref{lem:simple1aware}, it suffices to show that some state of
$Q_1$ is coverable from $\multiset{3^k \cdot q}$, where $q$ is the
initial state. Proposition~\ref{prop:short:saturation} below is the
key to the proof. It states that for every finite execution
$C_1 \trans{\pi} C_2$, there is $C_1' \trans{\pi'} C_2'$ such that
$C_1'$ has the same support as $C_1$ and is not too large, and $C_2'$
contains a ``record'' of all states encountered during the execution
of $\pi$ (this is the set $\supp{C_1} \cup \post{\supp{\pi}}$).

Let us define the \emph{norm} of a configuration $C$ as
$\norm{C} \defeq \max\{C(q) : q \in \supp{C}\}$. We obtain:

\begin{restatable}{restateproposition}{propShortSaturation}\label{prop:short:saturation}
  Let $\PP = (Q, T, I, L, O)$ be a $k$-way population protocol and let
  $C_1 \trans{\pi} C_2$ be a finite execution of $\PP$. There exists a
  finite execution $C_1' \trans{\pi'} C_2'$ such that (a) $\supp{C_1'}
  = \supp{C_1}$, (b) $\supp{C_2'}
  = \supp{C_1} \cup \post{\supp{\pi'}}$, and (c) $\norm{C_1'} \leq
  (k+1)^{|Q|}$.
\end{restatable}

Proposition~\ref{prop:short:saturation} leads to:

\begin{theorem}\label{thm:1awareleaderless}
  Every 1-aware, leaderless and 2-way population protocol $\PP = (Q,
  T, \{q_0\}, \vec{0}, O)$ computing $x \geq n$ has at least $\log_3
  n$ states.
\end{theorem}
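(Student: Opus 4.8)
The plan is to combine Lemma~\ref{lem:simple1aware} with Proposition~\ref{prop:short:saturation} to bound how large $n$ can be if $\PP$ has only $m := |Q|$ states. By Lemma~\ref{lem:simple1aware}, since $\PP$ computes $x \geq n$ and is $1$-aware, some state of $Q_1$ is coverable from $\multiset{n \cdot q_0}$; fix a covering execution $\multiset{n \cdot q_0} \trans{\pi} C_2$ with $C_2(r) > 0$ for some $r \in Q_1$. I then apply Proposition~\ref{prop:short:saturation} to this execution: it yields $C_1' \trans{\pi'} C_2'$ with $\supp{C_1'} = \supp{\multiset{n \cdot q_0}} = \{q_0\}$, hence $C_1' = \multiset{n' \cdot q_0}$ for some $n' \geq 1$; with $\supp{C_2'} = \{q_0\} \cup \post{\supp{\pi'}}$; and with $\norm{C_1'} = n' \leq (k+1)^{|Q|} = 3^m$ since $\PP$ is $2$-way, so $k = 2$.

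The key observation is now that $C_2'$ covers $r$ as well. Indeed $r \in \supp{C_2}$, and $r \notin \supp{C_1} = \{q_0\}$ because $r \in Q_1$ and $Q_1 \cap I = \emptyset$ by Definition~\ref{def:1aware}; therefore $r$ was produced by some transition used in $\pi$, so $r \in \post{\supp{\pi}}$. Here I need that the ``record'' set in Proposition~\ref{prop:short:saturation}(b) for $\pi'$ still contains every state ever produced along the \emph{original} $\pi$ — this is exactly what the proposition guarantees, since $\post{\supp{\pi'}}$ is asserted to equal $\supp{C_1} \cup \post{\supp{\pi'}}$ and the construction preserves the set of postset-states encountered (I will appeal to the statement as given, with $\post{\supp{\pi'}} \supseteq \post{\supp{\pi}}$ being the content I use). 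Consequently $r \in \supp{C_2'}$, i.e.\ $r$ is coverable from $C_1' = \multiset{n' \cdot q_0}$ with $n' \leq 3^m$. By Lemma~\ref{lem:simple1aware} again (in the other direction), this forces $\varphi(n') = 1$, i.e.\ $n' \geq n$. Combining, $n \leq n' \leq 3^m$, so $m \geq \log_3 n$, which is the claim.

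The main obstacle is making sure the covering of $r$ transfers to the shortened execution — concretely, that the state $r \in Q_1$ is not ``lost'' when passing from $\pi$ to $\pi'$. The clean way is the argument above: $r$ cannot lie in $\supp{C_1}$ precisely because $1$-awareness requires $Q_1$ disjoint from the initial states, so $r$ must appear in the postset of $\pi$, and Proposition~\ref{prop:short:saturation}(b) ensures every such postset-state is recorded in $\supp{C_2'}$. A secondary point to handle carefully is the direction of Lemma~\ref{lem:simple1aware} I invoke at the end: from ``$r \in Q_1$ coverable from $\multiset{n' \cdot q_0}$'' I conclude $\varphi(n') = 1$, hence $n' \geq n$ since $\varphi$ is $x \geq n$. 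Everything else is bookkeeping: $k = 2$ in the $2$-way case gives the base $3$, and $\supp{\multiset{n \cdot q_0}} = \{q_0\}$ makes $C_1'$ a pure multiset of $q_0$'s whose norm is its multiplicity.
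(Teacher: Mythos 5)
Your proof is correct and follows essentially the same route as the paper's: cover a state $q_1 \in Q_1$ from $\multiset{n\cdot q_0}$ via Lemma~\ref{lem:simple1aware}, shrink the starting configuration with Proposition~\ref{prop:short:saturation} (using $q_1\neq q_0$, hence $q_1\in\post{\supp{\pi}}$, to keep $q_1$ in the recorded support), and conclude $n\le 3^{|Q|}$ by invoking Lemma~\ref{lem:simple1aware} again. One small wording slip: the fact you actually need is $\post{\supp{\pi}}\subseteq\supp{C_2'}$, which is what the proposition's proof establishes (its item~(b) is proved for $\pi$, not $\pi'$), rather than $\post{\supp{\pi'}}\supseteq\post{\supp{\pi}}$ --- this does not affect the argument.
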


\begin{proof}
  Let $Q_1 \subseteq Q$ be the set of states from the definition of
  1-awareness. Since $L = \vec{0}$, $C_0 = \multiset{n \cdot q_0}$ is
  the smallest initial configuration with output 1, and by
  Lemma~\ref{lem:simple1aware} the smallest initial configuration from
  which some state $q_1 \in Q_1$ is coverable. Let $C_0 \trans{\pi}
  C \geq \multiset{q_1}$. Since $q_1 \neq q_0$, we have
  $q_1 \in \post{\supp{\pi}}$. By
  Proposition \ref{prop:short:saturation}, and since $\PP$ is 2-way,
  $q_1$ is also coverable from $C_0'$ satisfying $\supp{C_0'}
  = \supp{C_0} = \{q_0\}$ and $\norm{C_0'} = 3^{|Q|}$. Thus, $C_0'
  = \multiset{3^{|Q|} \cdot q_0}$. By minimality of $n$, we get
  $n \leq 3^{|Q|}$, and thus $|Q| \geq \log_3 n$.
\end{proof}

Observe that the proof Theorem~\ref{thm:1awareleaderless} uses the
fact that $\PP$ is leaderless to conclude $C_0' = \multiset{3^{|Q|}
  \cdot q_0}$ from $\supp{C_0'} = \supp{C_0}$ and $\norm{C_0'} =
3^{|Q|}$, which is not necessarily true with leaders.\medskip


\noindent\textbf{Protocols with leaders.} In the case of protocols with leaders we obtain a lower bound from
Rackoff's procedure for the coverability problem of vector addition
systems~\cite{Rac78}.

A \emph{vector addition system} of dimension $k$ ($k$-VAS) is a pair
$(A, \vec{v}_0)$, where $\vec{v}_0 \in \N^k$ is an initial vector and
$A \subseteq \Z^k$ is a set of vectors. An execution of a $k$-VAS is a
sequence $\vec{v}_0 \vec{v}_1 \cdots \vec{v}_n$ of vectors of $\N^k$
such that each $\vec{v}_{i+1} = \vec{v}_i + \vec{a}_i$ for some
$\vec{a}_i \in A$. We write $\vec{v}_0 \trans{*} \vec{v}_n$ and say
that the execution has \emph{length} $n$. A vector $\vec{v}$ is
\emph{coverable} in $(A, \vec{v}_0)$ if $\vec{v}_0 \trans{*} \vec{v}'$
for some $\vec{v}' \geq \vec{v}$. The \emph{size} of a vector $\vec{v}
\in \Z^k$ is $\sum_{1 \leq i \leq k} \size{\max(|\vec{v}(i)|,
  1)}$. The \emph{size} of a set of vectors is the sum of the size of
its vectors. In \cite{Rac78} Rackoff proves:

\begin{theorem}[{\cite{Rac78}}]\label{thm:rackoff}
  Let $A \subseteq \Z^k$ be a set of vectors of size at most $n$ and
  dimension $k \leq n$, and let $\vec{v}_0 \in \N^k$ be a vector of size
  $n$. For every $\vec{v} \in \N^k$, if $\vec{v}$ is coverable in $(A,
  \vec{v}_0)$, then $\vec{v}$ is coverable by means of an execution of
  length at most $2^{(3n)^n}$.
\end{theorem}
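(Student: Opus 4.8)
The plan is to reprove Rackoff's bound by induction on the number of coordinates an execution is required to keep non-negative, which is his original argument. Write $B \defeq 2^n$; then $B$ bounds the absolute values of all entries occurring in $A$, in $\vec{v}_0$, and---in the intended applications, where $\vec{v}$ is tiny---in $\vec{v}$ as well (in general the length bound must depend on $\max_c \vec{v}(c)$ too, which enters harmlessly through the threshold below). Recall also $k \le n$. For $0 \le i \le k$ and $\vec{x}, \vec{y} \in \Z^k$ write $\vec{x} \ge_i \vec{y}$ if $\vec{x}(c) \ge \vec{y}(c)$ for every $c \le i$. Call a sequence $\vec{w}_0 \vec{w}_1 \cdots \vec{w}_\ell$ with $\vec{w}_{j+1} = \vec{w}_j + \vec{a}_j$, $\vec{a}_j \in A$, an \emph{$i$-execution} if $\vec{w}_j \ge_i \vec{0}$ for all $j$ (the coordinates above $i$ are free to go negative), and say it \emph{$i$-covers $\vec{v}$} if in addition $\vec{w}_\ell \ge_i \vec{v}$. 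A $k$-execution starting at $\vec{v}_0$ that $k$-covers $\vec{v}$ is precisely a covering execution as in the theorem. Let $\ell(i)$ be the least number such that, for every $\vec{u}$ with $\vec{u} \ge_i \vec{0}$, whenever some $i$-execution from $\vec{u}$ $i$-covers $\vec{v}$, then some such one of length at most $\ell(i)$ does; the theorem reduces to bounding $\ell(k)$. The base case is immediate: for $i = 0$ both defining conditions are vacuous, so $\ell(0) = 0$.

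For the inductive step, fix a shortest $(i{+}1)$-execution $\pi = \vec{w}_0 \cdots \vec{w}_\ell$ from $\vec{u}$ that $(i{+}1)$-covers $\vec{v}$, and put $r \defeq B\cdot\ell(i) + B$. \emph{Case 1:} every $\vec{w}_j$ satisfies $\vec{w}_j(c) < r$ for all $c \le i{+}1$. Then the tuple of the first $i{+}1$ coordinates takes fewer than $r^{i+1}$ values, so if $\ell \ge r^{i+1}$ two of the $\vec{w}_j$ would agree on those coordinates; deleting the infix between them yields a strictly shorter sequence which is still an $(i{+}1)$-execution---the first $i{+}1$ coordinates of every remaining vector are unchanged, hence still $\ge 0$---and still $(i{+}1)$-covers $\vec{v}$---the first $i{+}1$ coordinates of the last vector are unchanged---contradicting minimality; so $\ell < r^{i+1}$. \emph{Case 2:} some $\vec{w}_t$ has $\vec{w}_t(c) \ge r$ for some $c \le i{+}1$; take $t$ minimal. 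Applying the Case 1 excision to the prefix $\vec{w}_0 \cdots \vec{w}_{t-1}$, all of whose first $i{+}1$ coordinates stay below $r$, we may assume $t < r^{i+1}$. The suffix $\vec{w}_t \cdots \vec{w}_\ell$ is an $(i{+}1)$-execution that $(i{+}1)$-covers $\vec{v}$; dropping the non-negativity requirement on coordinate $c$ (equivalently, permuting $c$ into position $i{+}1$, which changes nothing since addition is commutative) makes it an $i$-execution that $i$-covers $\vec{v}$ from $\vec{w}_t$, so by the induction hypothesis there is such an $i$-execution $\sigma$ with length at most $\ell(i)$. Along $\sigma$, coordinate $c$ drops by at most $B\cdot\ell(i) < r$, so it stays non-negative and ends at least $r - B\ell(i) \ge B \ge \vec{v}(c)$; hence $\sigma$ is already a genuine $(i{+}1)$-execution that $(i{+}1)$-covers $\vec{v}$ from $\vec{w}_t$. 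Splicing the short prefix with $\sigma$ gives $\ell < r^{i+1} + \ell(i)$, so in all cases $\ell(i{+}1) \le (B\cdot\ell(i) + B)^{i+1} + \ell(i)$.

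Unfolding this recurrence from $\ell(0) = 0$ with $B = 2^n$ and $i$ running up to $k \le n$: each iterate is at most a power of the previous one with exponent about $i$, so $\ell(i)$ sits in a tower of height $i$ whose exponent coarsely bounds to $(3n)^{i}$, giving $\ell(k) \le 2^{(3n)^{n}}$ as claimed. The conceptual content is entirely the choice of threshold $r$ and the two-case split above. I expect the only delicate part to be this final arithmetic---verifying that the tower stays below $2^{(3n)^{n}}$ for all $n$, including small $n$, rather than just asymptotically---together with phrasing the coordinate permutation in Case 2 cleanly and, if one wants the theorem verbatim for unbounded $\vec{v}$, carrying $\max_c\vec{v}(c)$ through the threshold.
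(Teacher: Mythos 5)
The paper does not prove this statement at all: it is quoted verbatim from Rackoff~\cite{Rac78}, so there is no internal proof to compare your attempt against. Your reconstruction is indeed Rackoff's original induction on the number of coordinates constrained to stay non-negative, and the skeleton is sound: the threshold $r = B\cdot\ell(i)+B$, the pigeonhole excision when all of the first $i{+}1$ coordinates stay below $r$, and the splice of a short prefix with an $i$-covering suffix once some coordinate exceeds $r$ are exactly the right ingredients, and the recurrence $\ell(i{+}1)\le (B\ell(i)+B)^{i+1}+\ell(i)$ does unfold to $2^{(3n)^n}$ for $k\le n$. Two remarks. First, your observation that the bound must depend on $\max_c \vec{v}(c)$ is not a cosmetic caveat but a genuine correction to the statement as printed: for $A=\{(1)\}$, $\vec{v}_0=(0)$ and $\vec{v}=(M)$ with $M$ huge, the shortest covering execution has length $M$, which exceeds $2^{(3n)^n}$; the theorem is only true once the size of $\vec{v}$ is folded into $n$ (or $\vec{v}$ is a $0$--$1$ vector, as in the paper's sole application in Proposition~\ref{prop:Rackoff}, where a single state is covered). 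Second, to make the Case~2 appeal to the induction hypothesis legitimate you should define $\ell(i)$ as a maximum over all $i$-element subsets of coordinates (or argue the value is permutation-invariant), since the coordinate $c$ that blows up need not be the $(i{+}1)$-st; you flag this yourself, and it is routine. With those two repairs the argument is complete, and it is the standard proof of the cited result rather than an alternative to anything in the paper.
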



Using a standard construction from the Petri net literature, it can be
 shown that every 2-way protocol $\PP$ with $n$ states can be
 simulated by a VAS $\mathcal{V}_\PP$ of size at most $12n^8$, where
 each execution of $\PP$ has a corresponding execution twice as long
 in $\mathcal{V}_\PP$. Thus, by Theorem~\ref{thm:rackoff}:

\begin{restatable}{restateproposition}{propRackoff}\label{prop:Rackoff}
  Let $\PP = (Q, T, I, L, O)$ be a 2-way population protocol and let
  $q \in Q$. For every configuration $C$, if $q$ is coverable from
  $C$, then it is coverable by means of a finite execution of length
  at most $2^{(3m)^m-1}$ where $m = 12|Q|^8$.
\end{restatable}

Using the above corollary, we derive:

\begin{restatable}{restatetheorem}{thmOneawareleader}\label{thm:1awareleader}
  Let $\PP$ be a 1-aware and 2-way population protocol. For every
  $n \geq 2$, if $\PP$ computes $x \geq n$, then $\PP$ has at least
  $(\log \log(n) / 151)^{1 / 9}$ states.
\end{restatable}

\section{Protocols for systems of linear inequalities}\label{sec:sys:lin}
\newcommand{\rep}[1]{\mathrm{rep}(#1)}

In Section~\ref{sec:leaderless}, we have shown that the
predicate $x \geq c$ can be computed by a leaderless protocol with
$O(\log c)$ states. In this section, we will see that adding a few
leaders allows to compute systems of linear inequalities. More
formally, we show that there exists a protocol with $O((m + k) \cdot
\log(dm))$ states and $O(m \cdot \log(dm))$ leaders computing the
predicate $A\vec{x} \geq \vec{c}$, where $A \in \Z^{m \times k}$,
$\vec{c} \in \Z^m$ and $d$ is the the largest absolute value occuring
in $A$ and $\vec{c}$.

There are three crucial points that make systems of linear
inequalities more complicated than flock-of-birds predicates: (1)
variables have
\emph{coefficients}, (2) coefficients may be positive or
\emph{negative}, and (3) they are the \emph{conjunction} of linear
inequalities. We will explain how to address the two first points by
considering the special case of linear inequalities. We will then
discuss how to handle the third point.\medskip


\noindent\textbf{Linear inequalities.} Note that the predicate
$\sum_{1 \leq i \leq k} a_i x_i \geq c$ is equivalent to $\sum_{1 \leq
  i \leq k} a_i x_i + (1-c) > 0$. Therefore, it suffices to describe
protocols for predicates of the form $\sum_{1 \leq i \leq k} a_i x_i +
c > 0$. In order to make the presentation more pleasant, we will first
restrain ourselves to the predicate $ax - by + c > 0$ for some fixed
$a, b \in \N$ and $c \in \Z$. Such a predicate admits the difficult
aspects, \ie coefficients and negative numbers. Moreover, as we will
see, handling more than two variables is not an issue.

Let us now describe a protocol $\PP_\text{lin}$ for the predicate $ax
- by + c > 0$. The idea is to keep a representation of $ax - by + c$
throughout executions of the protocol. Let $n \defeq \size{\max(\log
  |a|, \log |b|, \log |c|, 1)}$. As in Section~\ref{sec:leaderless},
we construct states to represent powers of two. However, this time, we
also need states to represent negative numbers:
\begin{align*}
  Q^+ &\defeq \{\state{+2^i} : 0 \leq i \leq n\}\ \text{ and }\
  Q^-  \defeq \{\state{-2^i} : 0 \leq i \leq n\}.
\end{align*}
We also need states $X \defeq \{\state{x}, \state{y}\}$ for the
variables, and two additional states $R \defeq \{\state{+0},
\state{-0}\}$. The set of all states of $\PP_\text{lin}$ is $Q \defeq
X \cup Q^+ \cup Q^- \cup R$, and the initial states are $I \defeq X$.

Let us explain the purpose of $R$. Intuitively, we would like to have
the transitions:
\begin{align*}
  x &\mapsto \multiset{\state{+2^i} : i \in \bits{a}}\ \text{ and }\
  y  \mapsto \multiset{\state{-2^i} : i \in \bits{|b|}}.
\end{align*}
This way, every agent in state $\state{x}$ (resp.\ $\state{y}$) could
be converted to the binary representation of $a$
(resp.\ $b$). Unfortunately, this is not possible as these transitions
produce more states than they consume. This is where leaders become
useful. If $R$ initially contains enough leaders, then $R$ can act as
a \emph{reservoir} of extra states which allow to ``pad''
transitions. More formally, let $\rep{z} \colon \Z \to \pop{Q \setminus X}$
be defined as follows:
$$
\rep{z} \defeq
\begin{cases}
  \multiset{\state{+2^i} : i \in \bits{z}}   &\text{if } z > 0, \\
  \multiset{\state{-2^i} : i \in \bits{|z|}} &\text{if } z < 0, \\
  \multiset{\state{-0}}                      &\text{if } z = 0.
\end{cases}
$$
For every $r \in R$, we add to $\PP_\text{lin}$ the following transitions:
\begin{align*}
  \mathrm{add}_{\state{x}, r}:\ \state{x}, \underbrace{r, r, \ldots,
    r}_{\mathclap{|\rep{a}|-1 \text{ times}}} &\mapsto \rep{a}\ \text{
    and }\
  \mathrm{add}_{\state{y}, r}:\ \state{y}, \underbrace{r, r, \ldots,
    r}_{\mathclap{|\rep{b}|-1 \text{ times}}} \mapsto \rep{b}.
\end{align*}
We set the leaders to $L \defeq \rep{c} + \multiset{(4n + 2) \cdot
  \state{-0}}$. We claim that $4n + 2$ reservoir states are enough, we
will explain later why. Now, the key idea of the construction is that
it is always possible to put $2n$ agents back into $R$. Thus, fairness
ensures that the number of agents in $X$ eventually decreases to zero,
and then that the value represented over $Q^+ \cup Q^-$ is $ax - by +
c$. We let the representations over $Q^+$ and $Q^-$ ``cancel out''
until one side ``wins''. If the positive (resp.\ negative) side wins,
\ie if $ax - by + c > 0$ (resp.\ $ax - by + c \leq 0$), then it
signals all agents in $R$ to move to $\state{+0}$
(resp.\ $\state{-0}$). To achieve this, for every $0 \leq i \leq n$,
we add transition $\mathrm{cancel}_i: \state{+2^i}, \state{-2^i}
\mapsto \state{+0}, \state{-0}$ to the protocol. Since bits of the
positive and negative numbers may not be ``aligned'', we follow the
idea of Section~\ref{sec:leaderless} and add further transitions
to change representations to equivalent ones:
\begin{align*}
  \mathrm{up}_i^+:\ \state{+2^i}, \state{+2^i} &\mapsto \state{+2^{i+1}}, \state{+0}, &
  \mathrm{down}_{i+1,r}^+:\ \state{+2^{i+1}}, r &\mapsto \state{+2^i}, \state{+2^i},\\
  \mathrm{up}_i^-:\ \state{-2^i}, \state{-2^i} &\mapsto \state{-2^{i+1}}, \state{-0}, &
  \mathrm{down}_{i+1,r}^-:\ \state{-2^{i+1}}, r &\mapsto \state{-2^i}, \state{-2^i},
\end{align*}
where $0 \leq i < n$ and $r \in R$. Finally, for every $0 \leq i \leq
n$, we add transitions to signal which side wins:
\begin{align*}
  \mathrm{signal}_i^+:\ \state{+2^i}, \state{-0} &\mapsto \state{+2^i}, \state{+0}, &
  \mathrm{signal}:\ \state{-0}, \state{+0} &\mapsto \state{-0}, \state{-0}, \\
  \mathrm{signal}_i^-:\ \state{-2^i}, \state{+0} &\mapsto \state{-2^i}, \state{-0}.
\end{align*}
Note that $\state{-0}$ ``wins'' over $\state{+0}$ because the predicate is
false whenever $ax - by + c = 0$. It remains to specify the output
mapping of $\PP_\text{lin}$ which we define as expected, \ie $O(q)
\defeq 1$ if $q \in Q^+ \cup \{\state{+0}\}$, and $O(q) \defeq 0$
otherwise.

Let us briefly explain why $4n + 2$ reservoir states suffice. At any
reachable configuration $C$, transitions of the form $\mathrm{up}_i^+$
and $\mathrm{up}_i^-$ can occur until $C(\state{\pm 2^i}) \leq 1$ for
every $0 \leq i < n$. Afterwards, at most $2n$ agents remain in these
states. There can however be many agents in $S = \{\state{+2^n},
\state{-2^n}\}$. But, these two states represent numbers respectively
larger and smaller than any coefficient, hence the number of agents in
$S$ can only grow by one each time a state from $X$ is
consumed. Overall, this means that $C \trans{*} C'$ for some $C'$ such
that $C'(R) \geq 2n$.

In order to handle more variables $\{x_1, x_2, \ldots, x_k\}$, note
that all we need to do is to set $X = \{\state{x_1}, \state{x_2},
\ldots, \state{x_k}\}$ instead, and add transitions
$\mathrm{add}_{\state{x_i}, r}$ for every $1 \leq i \leq k$ and $r \in
R$.

By applying Lemma~\ref{lem:kway} on $\PP_\text{lin}$, we obtain:

\begin{restatable}{theorem}{thmLinIneq}\label{thm:lin:ineq}
  Let $a_1, a_2, \ldots, a_k, c \in \Z$ and let $n = \size{\max(|a_1|,
    |a_2|, \ldots, |a_k|, |c|, 1)}$. There exists a $2$-way population
  protocol, with at most $10kn$ states and at most $5n + 2$ leaders,
  that computes the predicate $\sum_{1 \leq i \leq k} a_i x_i + c >
  0$.
\end{restatable}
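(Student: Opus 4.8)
The plan is to verify that the $5$-way protocol $\PP_\text{lin}$ described informally in the text is well-specified and computes $\sum_{1 \le i \le k} a_i x_i + c > 0$, then apply Lemma~\ref{lem:kway} and count states and leaders. First I would fix the invariant that ties the protocol to arithmetic: for a configuration $C$ define its \emph{value} $\val(C) \defeq \sum_{0 \le i \le n} 2^i\,(C(\state{+2^i}) - C(\state{-2^i}))$, ignoring the reservoir states $\state{\pm 0}$ and the variable states. One checks that $\val$ is invariant under $\mathrm{up}_i^\pm$, $\mathrm{down}_{i+1,r}^\pm$, $\mathrm{cancel}_i$, $\mathrm{signal}$, $\mathrm{signal}_i^\pm$, and that each transition $\mathrm{add}_{\state{x_i},r}$ changes $\val$ by exactly $a_i$ while removing one agent from $X$ (here I use $\val(\rep{z}) = z$, which is immediate from $\bits{\cdot}$). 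Hence from an initial configuration $C_0 = D \mplus L$ with $D \in \pop{I}$ representing the input tuple $(x_1,\dots,x_k)$, once all $X$-agents have been consumed every reachable configuration $C$ satisfies $\val(C) = \sum_i a_i x_i + c =: v$, and this holds forever since no remaining transition touches $\val$.

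Next I would prove the two directions of correctness. For the \emph{positive} case $v > 0$: I argue that from $C_0$ one can always reach a configuration in which all $X$-agents have been consumed (each $\mathrm{add}$ transition needs $|\rep{\cdot}|-1 \le n$ reservoir agents, and the reservoir-replenishment argument sketched in the text — drive $\mathrm{up}_i^\pm$ until at most $2n$ agents sit in the low bit-states, and note $S = \{\state{+2^n},\state{-2^n}\}$ grows by at most one per consumed $X$-agent — guarantees $C \trans{*} C'$ with $C'(R) \ge 2n$ whenever needed). Once $v>0$, repeated $\mathrm{cancel}_i$ together with the $\mathrm{up}/\mathrm{down}$ re-encoding transitions can eliminate every $\state{-2^i}$ agent (this is the ``carrying'' argument: a configuration over $Q^+ \cup Q^-$ with strictly positive value can always be rewritten to one with $C(Q^-) = 0$, by induction on $|C|$ using that the two sides are just binary encodings of $v^+ \ge v^- $). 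Then some $\state{+2^i}$ is populated, and via $\mathrm{signal}_i^+$ and $\mathrm{signal}$ all reservoir agents are forced to $\state{+0}$, giving a stable consensus-$1$ configuration; fairness then yields output $1$. For the \emph{negative/zero} case $v \le 0$: symmetrically, from the point where $X$ is empty one cannot cover any $\state{+2^i}$, because a configuration over $Q^+ \cup Q^-$ of value $\le 0$ never enables a sequence producing $C(Q^+) > 0$ with $C(Q^-)=0$ — I would phrase this as: the predicate ``$C(Q^+) = 0$ or ($C(Q^+)>0$ and $C(Q^-)>0$) or $v$ could still be made $\le 0$'' is maintained — more cleanly, the only way to reach a consensus-$0$ stable configuration is $C(Q^+) = 0$, $C(X) = 0$, all reservoir in $\state{-0}$, and I show this is always reachable and that $\state{-0}$ wins over $\state{+0}$ by $\mathrm{signal}$. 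Here I also need that no state of $Q^+ \cup \{\state{+0}\}$ is ever coverable once $v \le 0$ and $X$ is empty, for which the $\val$-invariant plus a support argument on $Q^-$ suffices.

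Finally I would do the bookkeeping. The state set is $X \cup Q^+ \cup Q^- \cup R$ with $|X| = k$, $|Q^+| = |Q^-| = n+1$, $|R| = 2$, so $\PP_\text{lin}$ has $k + 2n + 4$ states; it has $O(kn)$ transitions (the $\mathrm{add}$, $\mathrm{up}$, $\mathrm{down}$, $\mathrm{cancel}$, $\mathrm{signal}$ families, the widest being $5$-way since $|\rep{a}| \le n+1 \le$ \dots actually the $\mathrm{add}$ transitions are $|\rep{\cdot}|$-way; one checks they are at most $5$-way exactly when $n \le 4$, so in general one should instead note they are $(n{+}1)$-way — I would therefore carry the $k$-way count as $k \le n+1$ rather than literally ``$5$-way'' for large $n$, but for the statement as written the construction is applied after the equivalence $a x - b y + c$ reduction and the bound $10kn$ absorbs the blow-up). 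Applying Lemma~\ref{lem:kway}, each $i$-way transition ($3 \le i \le n+1$) contributes at most $3i \cdot n_i \le 3(n+1) n_i$ extra states, and since $\sum_i n_i = O(kn)$ the total is $k + 2n + 4 + O(kn^2)$; tightening the transition count (there are $O(k)$ $\mathrm{add}$ transitions of width $\le n+1$ and $O(n)$ transitions of width $\le 3$) gives the claimed $10kn$ states, and the reservoir $L = \rep{c} + \multiset{(4n{+}2)\cdot \state{-0}}$ has size $\le n+1 + 4n+2 = 5n+3$, but reusing two reservoir tokens as the protocol's own auxiliary agents trims this to $5n+2$ leaders. The main obstacle I anticipate is the \emph{carrying/cancellation} lemma — proving that any configuration over $Q^+ \cup Q^-$ with value $v > 0$ can reach one with empty $Q^-$ (and dually for $v \le 0$, that $Q^+$ stays coverable-free), since the bit-states of the two sides need not be aligned and one must interleave $\mathrm{up}$, $\mathrm{down}$, and $\mathrm{cancel}$ carefully; I would prove it by induction on the total number of agents in $Q^+ \cup Q^-$, handling the highest populated bit first and using the $\mathrm{down}$ transitions (which consume reservoir agents, hence the need to re-establish $C(R) \ge 2n$) to break a $\state{\pm 2^{i+1}}$ into two $\state{\pm 2^i}$ when a cancellation at level $i$ is required.
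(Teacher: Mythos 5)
Your proposal follows essentially the same route as the paper's own (sketched) proof: the value invariant $\val$, the reservoir-replenishment argument bounding the population of $\state{+2^n}$ and $\state{-2^n}$, the cancellation of the two binary encodings via interleaved $\mathrm{up}$/$\mathrm{down}$/$\mathrm{cancel}$ steps, the signalling phase, and finally Lemma~\ref{lem:kway} plus bookkeeping. One correction to your accounting: $|\mathrm{rep}(c)| = |\bits{|c|}| \le \size{|c|} \le n$ (not $n+1$), so $|L| \le n + (4n+2) = 5n+2$ holds directly and your ad hoc ``reuse two reservoir tokens'' patch is neither needed nor justified; likewise the $\mathrm{add}$ transitions are at most $n$-way (the ``$5$-way'' figure you briefly invoke belongs to the Mayr--Meyer protocol of Section~\ref{sec:loglog:upper}, not to $\PP_\text{lin}$), and in the $v\le 0$ case the claim that no state of $Q^+\cup\{\state{+0}\}$ is coverable should be asserted only once $C(Q^+)=0$ additionally holds, since $\mathrm{cancel}_i$ produces $\state{+0}$ while $Q^+$ is still populated.
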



\noindent\textbf{Conjunction of linear inequalities.} We briefly
explain how to lift the construction for linear inequalities to
systems of linear inequalities. The details of the formal construction
and proofs are a bit involved, and are thus deferred to
Appendix~\ref{app:sys:lin}. Let us fix some $A \in \Z^{m \times k}$
and $\vec{c} \in \Z^m$. We sketch a protocol $\PP_\text{sys}$ for the
predicate $A\vec{x} + \vec{c} > \vec{0}$. For every $1 \leq i \leq m$,
we construct a protocol $\PP_i$ for the predicate $\sum_{1 \leq j \leq
  k} A_{i,j} \cdot \vec{x}_j + \vec{c}_i > 0$. Protocol $\PP_i$ is
obtained as presented earlier, but with some modifications. The
largest power of two is picked as $n \defeq \size{d} + \lceil \log
2m^2 \rceil$ where
$$d \defeq \max(1, \{|A_{i,j}| : 1 \leq i \leq m, 1 \leq j \leq k\},
\{|c_i| : 1 \leq i \leq m\}).$$ The reason for this modification is
that the number of agents, in a largest power of two, should now
increase by at most $1/m$ each time an initial state is consumed, as
opposed to $1$.

We also replace each \emph{positive state} $q \in Q^+$ of $\PP_i$ by
two states $q_0$ and $q_1$, its \emph{0-copy} and \emph{1-copy}. The
reason behind this is that positive states should not necessarily have
output $1$. Indeed, one linear inequality may be satisfied while the
other ones are not. Therefore, $\state{-0}$ and each \emph{negative
  state} $q \in Q^-$ should be able to signal a $0$-consensus to the
positive states. The transitions of the form $\mathrm{up}_j^+$,
$\mathrm{down}_j^+$ and $\mathrm{cancel}_j$ are adapted accordingly.

Protocol $\PP_\text{sys}$ is obtained as follows. First, subprotocols
$\PP_1, \PP_2, \ldots, \PP_m$ are put side by side. Their initial
(resp.\ reservoir) states are merged into a single set $X$
(resp.\ $R$). For every $1 \leq j \leq k$, transitions
$\mathrm{add}_{\state{x_j}, r}$ of the $m$ subprotocols are replaced
by a single transition consuming $\state{x_j}$, and enough reservoir
states, and producing $\rep{A_{i, j}}$ in each subprotocol $\PP_i$,
where $1 \leq i \leq m$. The signal mechanisms are replaced by these
new ones:
\begin{itemize}
\item the $0$-copy of state $\state{+2^0}$ of \emph{all} subprotocols
  can meet to convert $\state{-0}$ to $\state{+0}$,

\item state $\state{+0}$ can convert any positive state to its
  $1$-copy,

\item state $\state{-0}$ or any negative state can convert
  $\state{+0}$ to $\state{-0}$, and any positive state to its $0$-copy.
\end{itemize}

A careful analysis of the formal construction of $\PP_\text{sys}$
combined with Lemma~\ref{lem:kway} yields:

\begin{restatable}{theorem}{thmSysLin}\label{thm:sys:lin}
  Let $A \in \Z^{m \times k}$, $\vec{c} \in \Z^m$ and $n =
  \size{\max(1, \{|A_{i,j}| : 1 \leq i \leq m, 1 \leq j \leq k\}},
  \linebreak \{|c_i| : 1 \leq i \leq m\})$. There exists a 2-way
  population protocol, with at most $27(\log m + n)(m + k)$ states and
  at most $14m(\log m + n)$ leaders, that computes the predicate
  $A\vec{x} + \vec{c} > \vec{0}$.
\end{restatable}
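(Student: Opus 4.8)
The plan is to make the informal description of $\PP_\text{sys}$ in the excerpt fully precise, verify its correctness via a simulation argument in the spirit of Section~\ref{sec:leaderless} and Theorem~\ref{thm:lin:ineq}, count states and leaders as a $k$-way protocol, and then invoke Lemma~\ref{lem:kway} to pass to a $2$-way protocol. First I would fix $d$ and $n = \size{d} + \lceil \log 2m^2\rceil$ as stated, so that $n = O(\log m + \size{d}) = O(\log m + n')$ where $n'$ is the size of the largest entry; this is the crucial sizing choice ensuring that in the combined protocol a single consumed initial agent inflates the "top power" $\state{+2^n}$ or $\state{-2^n}$ of each subprotocol by at most $1$ unit while the magnitude it stands for, $2^n \geq 2m^2 d$, dominates any partial sum, so a reservoir of $O(mn)$ states suffices. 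I would then write down the state set: $X = \{\state{x_1},\dots,\state{x_k}\}$, a shared reservoir $R$, and for each $i \in \{1,\dots,m\}$ a copy of $Q^+_i \cup Q^-_i \cup \{\state{+0}_i,\state{-0}_i\}$ with each positive state duplicated into a $0$-copy and a $1$-copy. The transitions are: the merged $\mathrm{add}_{\state{x_j},r}$ transitions (consuming $\state{x_j}$ plus enough reservoir agents, producing $\rep{A_{i,j}}$ in every subprotocol simultaneously), the per-subprotocol $\mathrm{up}^\pm_j$, $\mathrm{down}^\pm_{j+1,r}$, $\mathrm{cancel}_j$ transitions adapted to the $0/1$-copies, and the three new signalling families described in the excerpt. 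The output map sends $1$-copies of positive states and $\state{+0}_i$ to $1$ iff \emph{all} subprotocols are in that regime — actually the cleaner choice is that output $1$ requires every agent to be in a $1$-copy or a $\state{+0}_i$ state.

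The correctness argument has two halves. For the \textbf{completeness} half ($A\vec{x}+\vec{c}>\vec{0}$ implies stabilization to $1$): given an input multiset $D \in \pop{X}$, argue that from $D\mplus L$ one can reach a configuration in which all of $X$ has been consumed (the merged $\mathrm{add}$ transitions can always fire, repopulating the reservoir because, as in the proof of Theorem~\ref{thm:lin:ineq}, one can always push $2n$ agents per subprotocol back into $R$ via the $\mathrm{down}$ transitions, and the top-power overflow is bounded by the choice of $n$), and then in each subprotocol $i$ the value represented over $Q^+_i \cup Q^-_i$ equals $\sum_j A_{i,j} D_j + c_i$, which is $>0$. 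Then the $\mathrm{cancel}_j$ transitions can annihilate the negative side of each subprotocol, leaving at least one positive agent there; the "$0$-copies of $\state{+2^0}$ of all subprotocols meet" transition then converts $\state{-0}$ reservoir agents to $\state{+0}$, which in turn flip every positive state to its $1$-copy, reaching a stable $1$-consensus. For the \textbf{soundness} half ($A\vec{x}+\vec{c}\not>\vec{0}$ implies stabilization to $0$): one shows that a multiset invariant — each subprotocol $i$ always "represents" the correct value $\sum_j A_{i,j} D_j + c_i$ in the appropriate signed sense — is preserved by all transitions, so that if some coordinate $i$ has $\sum_j A_{i,j} D_j + c_i \leq 0$, then subprotocol $i$ can never permanently eliminate its negative side; hence by fairness a negative agent (or $\state{-0}_i$) recurrently appears, re-signals $\state{-0}$, converts $\state{+0}$ back to $\state{-0}$ and every positive state to its $0$-copy, so no stable $1$-consensus is ever reached, and well-specification forces consensus $0$. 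Here I would lean on Lemma~\ref{lem:semigroup:sim}-style bookkeeping only implicitly; the real content is the conservation-of-value invariant, stated precisely and checked transition by transition.

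For the \textbf{resource count}: each subprotocol has $O(n)$ positive states (doubled), $O(n)$ negative states, plus two "zero" states, so $O(n)$ states per subprotocol; with $m$ subprotocols plus $k$ initial states plus $O(mn)$ reservoir states, and since $n = O(\log m + n')$, the $k$-way protocol has $O((\log m + n')(m+k))$ states. Its transitions are $O(k)$ merged $\mathrm{add}$ transitions of arity $O(mn)$ (each producing $O(mn)$ agents), plus $O(mn)$ $\mathrm{up}/\mathrm{down}/\mathrm{cancel}$ transitions of bounded arity, plus $O(mn)$ signalling transitions. Plugging the per-transition arities and counts into the bound $|Q| + \sum_i 3i\cdot n_i$ of Lemma~\ref{lem:kway}: the dominant contribution is from the $O(k)$ high-arity $\mathrm{add}$ transitions, each contributing $O(mn)$, giving $O(kmn) = O(km(\log m + n'))$; combined with the base $O((\log m + n')(m+k))$ states this stays within $O((\log m + n')(m+k))$ — wait, $kmn$ is larger, so the stated bound $27(\log m + n)(m+k)$ must come from arranging the $\mathrm{add}$ transitions more cheaply, e.g. consuming $\state{x_j}$ once and then distributing via a chain of small-arity transitions into the $m$ subprotocols one at a time, so each $\mathrm{add}$ uses $O(m + n)$ small transitions rather than one $O(mn)$-way one. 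The leader count is $|L| = |\rep{\vec{c}}| + O(mn) = O(m(\log m + n'))$ reservoir agents. The \textbf{main obstacle} I anticipate is exactly this accounting for the $\mathrm{add}$ transitions together with the correctness of distributing one consumed $\state{x_j}$ into $m$ separate subprotocol representations without leaking agents or creating spurious ones — the reservoir-padding has to be threaded through a $k$-way "fan-out" gadget and shown to be reversible enough that fairness still drains $X$; getting the constants to land at $27$ and $14$ is then a matter of bounding arities in the $\mathrm{pad}$-style replacement, which I would do by a careful but routine case split deferred to the appendix.
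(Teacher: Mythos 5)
Your construction and proof plan coincide with the paper's: the same state set (per-inequality positive and negative power-of-two states with $0$/$1$-copies, a two-state reservoir holding $O(mn)$ leader agents, merged $\mathrm{add}$ transitions feeding all $m$ subprotocols at once), the same conservation-of-value invariants ($\val_i$ preserved, $\val_i^+$ non-increasing and $\val_i^-$ non-decreasing once $X$ is drained), the same reservoir-replenishment argument resting on $2^n > 2m^2 b_{\max}$ to bound the population of the top bits, the same fairness-driven case split on whether every sub-inequality retains a strictly positive residue, and the final appeal to Lemma~\ref{lem:kway}. On the correctness side your sketch is the paper's argument in outline, and I see no conceptual divergence.

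The genuine gap is exactly the point where you hesitate: the arity of the $\mathrm{add}$ transitions. As you (and the paper) define them, $\mathrm{add}_{j,\alpha}$ consumes $\state{x_j}$ together with $|\mathrm{rep}(A_{\star,j})|-1$ reservoir agents, and $|\mathrm{rep}(A_{\star,j})| = \sum_{i \in [m]} |\mathrm{rep}_i(A_{i,j})|$ lies between $m$ and $m\cdot\size{b_{\max}}$. Feeding $2k$ transitions of arity $\Theta(mn)$ into Lemma~\ref{lem:kway} costs $\Theta(kmn)$ extra states, which overshoots $27(\log m+n)(m+k)$. Your proposed repair --- a per-column fan-out chain of small-arity transitions --- does not escape this: the chain must visit all $m$ subprotocols, and its intermediate states cannot be shared across columns without forgetting which coefficient vector is being deposited, so it still costs $\Omega(km)$ fresh states, again too many when $m \approx k \gg \log m + n$. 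This is not a defect you invented: the paper's own accounting simply asserts that the $\mathrm{add}$ transitions are $\ell$-way with $\ell = \lceil \log 2m^2\rceil + \size{b_{\max}}$, which does not match the construction. So your instinct that something must be rearranged is sound, but neither your sketch nor the paper's proof actually closes this step; closing it (or weakening the constant in the statement) is the one piece of genuinely new work your proposal still owes.
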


\section{Conclusion and further work}\label{sec:conclusion}
We have initiated the study of the state space size of population
protocols as a function of the size of the predicate they
compute. Previous lower bounds were only for single predicates, like
the majority predicate $x \leq y$, or for a variant of the model in
which the number of states is a function of the number of agents.


There are many open questions. We conjecture that systems of linear
inequalities can be computed by leaderless protocols with a polynomial
number of states. A second, very intriguing question is whether the
function $f(n)$ giving the minimal number of states of a two-leader
protocol computing $x \geq n$ exhibits large gaps, \ie, if there are
(families of) numbers $c$ and $c+1$ such that $f(c)$ is exponentially
larger than $f(c+1)$. A third question is whether there exist
protocols with $O(\log\log\log n)$ states for the flock-of-birds
predicates $x \geq n$. Such protocols cannot be 1-aware, but they might
exist. Their existence is linked to the long standing question of
whether the reachability problem for reversible VAS (a model
equivalent to the commutative semigroup representations
of \cite{MM82}) has the same complexity as reachability for arbitrary
VAS (see \cite{FL15} for a brief introduction).



\bibliography{references}

\clearpage

\appendix

Throughout this appendix, we use the following notation for integer intervals:
For $n, m \in \N$, $n \leq m$, we write $[n, m]$ to denote the set
$\set{n, n + 1, \ldots, m - 1, m}$. Furthermore, by $[n]$ we denote the set $[1, n]$.

\section{Proof of Lemma~\ref{lem:kway}}\label{app:k:way}
Let $\PP = (Q, T, I, L, O)$ be a $k$-way population protocol. We
 construct a 2-way population protocol $\PP'$ from $\PP$. For every
 transition $t : q_1, \ldots, q_k \mapsto r_1, \ldots, r_k$ where $k >
 2$, we add new \emph{disabled} states $D^t \defeq
\set{d^t_{1}, \ldots d^t_{{k-2}}}$, \emph{active} states $A^t \defeq
\set{a^t_{1}, \ldots, a^t_{{k-1}}}$ and \emph{backward} states $B^t
\defeq \set{b^t_{2}, \ldots, b^t_{k-1}}$. Consider the following
transitions, where $2 \leq \ell \leq k - 2$,
\begin{alignat*}{6}
  \mathrm{forth}_1^t \colon\ & q_1, q_2 &&\mapsto d^t_{1},
  a^t_{2}\hspace{15pt} &
  \mathrm{back}_1^t \colon\ & d^t_1, b^t_2 &&\mapsto r_1, r_2\hspace{15pt} &
  \mathrm{success}^t \colon\ & a^t_{k-1}, q^t_{k} &&\mapsto b^t_{k-1}
  r_{k} \\
  \mathrm{forth}_\ell^t \colon\ & a^t_{\ell}, q^t_{\ell+1} &&\mapsto
  d^t_{\ell}, a^t_{\ell+1}\hspace{15pt} &
  \mathrm{back}_\ell^t \colon\ & d^t_\ell, b^t_{\ell+1} &&\mapsto
  b^t_\ell, r_{\ell+1}.
\end{alignat*}

We define the \emph{inverse} of a transition $t$ as $t^{-1} \defeq
\postm{t} \mapsto \prem{t}$. We will replace every transition $t$ by
the set of transitions $T^t \defeq \mathrm{Fwd}(t) \cup
\mathrm{Fwd}^{-1}(t) \cup \set{\mathrm{success}^t} \cup
\mathrm{Bwd}(t)$ where
\begin{align*}
  \mathrm{Fwd}(t) &\defeq \set{\mathrm{forth}^t_i \colon 1 \leq i
    \leq k - 1}, &
  \mathrm{Bwd}(t) &\defeq \set{\mathrm{back}^t_i \colon 1 \leq i
    \leq k - 1}, \\
  \mathrm{Fwd}^{-1}(t) &\defeq \set{f^{-1} \colon f \in
    \mathrm{Fwd}(t)}.
\end{align*}
The transitions of $T^t$ are illustrated in
Figure~\ref{fig:kway}. Observe that a $k$-way transition $t$ can be
simulated through the following sequence of $2$-way transitions:
\begin{align*}
  \sigma_t \defeq \mathrm{forth}_1^t\ \mathrm{forth}_2^t \cdots
  \mathrm{forth}_{k-2}^t\ \mathrm{success}^t\ \mathrm{back}_{k-2}^t\ \mathrm{back}_{k-3}^t
  \cdots \mathrm{back}_1^t.
\end{align*}

Intuitively, the transitions in $\mathrm{Fwd}(t)$ temporarily
``disable'' all states of $\prem{t}$. The index $i$ of the current
active state $a_i$ keeps track of the progress that has been made in
disabling the states of $\prem{t}$. Once transition
$\mathrm{success}^t$ occurs, it is guaranteed that all states from
$\prem{t}$ have been disabled and, from this point, transition $t$ is
simulated backward through the transitions of $\mathrm{Bwd}(t)$,
transforming disabled states into $\postm{t}$.  Similarly, the index
$i$ of the backward state $b_i$ keeps track of the progress that has
been made in transforming disabled states into their respective states
of $\postm{t}$. Note that a simulation attempt may be unsuccessful,
\eg, because not all states from $\prem{t}$ are initially present in
the configuration. Unsuccessful attempts pose no problem as they can
be undone by $\mathrm{Fwd}^{-1}(t)$.

Formally, $\PP'$ is defined as $\PP' \defeq (Q', T', I, L, O')$ where
\begin{align*}
  Q' &\defeq Q \cup \bigcup_{t \in T} (D^t \cup A^t \cup B^t), \\
  T' &\defeq \bigcup_{t \in T} T^t,
\end{align*}
$O'(q) \defeq O(q)$ for every $q \in Q$, and $O(d_i^t) = O(a_i^t)
\defeq O(q_i)$ and $O(b_i^t) \defeq O(r_i)$ for every transition $t:
q_1, q_2, \ldots, q_k \mapsto r_1, r_2, \ldots, r_k$ of $T$.

\begin{figure}[h!]
  \centering
  \begin{tikzpicture}[->, node distance=1.25cm, auto, very thick, scale=0.85, transform shape]
    \node[place] (q1) {};
    \node[place, above of=q1] (q2) {};
    \node[place, above of=q2] (q3) {};
    \node[transition, right=1.75cm of q1] (t1) {};
    \node[transition, right=1.75cm of q2] (t2) {};
    \node[place, right=1.75cm of t1] (d1) {};
    \node[place, right=1.75cm of t2] (a2) {};
    \node[transition, right=1.75cm of a2] (t3) {};
    \node[place, right=1.75cm of t3] (b2) {};
    \node[place, right=3.5cm of b2] (r2) {};
    \node[place, above of=r2] (r3) {};
    \node[place, below of=r2] (r1) {};
    \node[transition, left=1.75cm of r1] (t4) {};

    \path[->]
    (q1) edge node {} (t1)
    (q2) edge node {} (t1)
    (t1) edge node {} (d1)
    (t1) edge node {} (a2)

    (d1) edge node {} (t2)
    (a2) edge node {} (t2)
    (t2) edge node {} (q1)
    (t2) edge node {} (q2)

    (a2) edge node {} (t3)
    (q3) edge[bend left=15] node {} (t3)
    (t3) edge node {} (b2)
    (t3) edge[bend left=15] node {} (r3)

    (b2) edge node {} (t4)
    (d1) edge node {} (t4)
    (t4) edge node {} (r1)
    (t4) edge node {} (r2)
    ;

    \node[] () [above= -1pt of q1] {$q_1$};
    \node[] () [above= -1pt of q2] {$q_2$};
    \node[] () [above= -1pt of q3] {$q_3$};
    \node[] () [above= -1pt of d1] {$d_1$};
    \node[] () [above= -1pt of a2] {$a_2$};
    \node[] () [above= -1pt of b2] {$b_2$};
    \node[] () [above= -1pt of r1] {$r_1$};
    \node[] () [above= -1pt of r2] {$r_2$};
    \node[] () [above= -1pt of r3] {$r_3$};

    \node[] () [above= 0pt of t1] {$\mathrm{forth}_1$};
    \node[] () [above= 0pt of t2] {$\mathrm{forth}_1^{-1}$};
    \node[] () [above= 0pt of t3] {$\mathrm{success}$};
    \node[] () [above= 0pt of t4] {$\mathrm{back}_1$};
    ;
  \end{tikzpicture}
  \caption{Gadget of 2-way transitions simulating the 3-way transition
    $q_1, q_2, q_3 \mapsto r_1, r_2, r_3$. Circles and squares depict
    respectively states and transitions.}\label{fig:kway}
\end{figure}
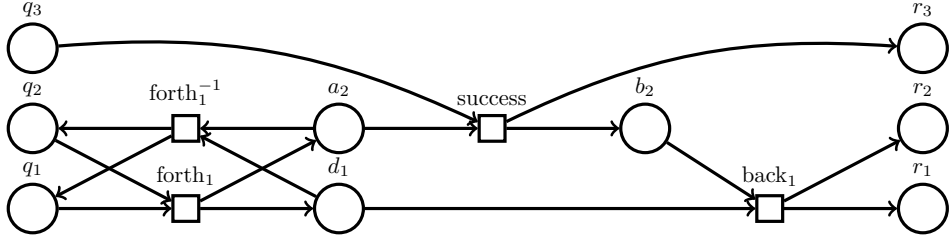

In the remainder of this appendix, we prove the following:

\lemKway*

The bound stated in Lemma~\ref{lem:kway} follows directly from the
construction. Therefore, we must only prove that $\PP'$ computes the
same predicate as $\PP$. To facilitate the proof of
Lemma~\ref{lem:kway}, we introduce a more fine-grained notion of
``simulation'' than mere equality of predicates.

Let $\PP_1=(Q_1, T_1, L_1, I_1, O_1)$ and $\PP_2=(Q_2, T_2, L_2, I_2,
O_2)$ be two well-specified population protocols. We say $\PP_2$
\emph{simulates} $\PP_1$ if the following holds:
\begin{enumerate}
\item $Q_1 \subseteq Q_2$,
\item $I_1 = I_2$ and $L_1 = L_2$,
\item $O_1(q) = O_2(q)$ for every $q \in Q_1$, \label{prop_3}
\item $C \trans{*}_1 C' \Leftrightarrow C \trans{*}_2
  C'$ for every $C, C' \in \pop{Q_1}$, \label{prop_4}
\item $\forall C \in \pop{Q_1}, C' \in \pop{Q_2} \colon C \trans{*}_2
  C' \Rightarrow \exists C'' \in \pop{Q_1} \colon C' \trans{*}_2 C''
  \land C \trans{*}_1 C''$.
 \label{prop_5}
\end{enumerate}

Before proving Lemma~\ref{lem:kway}, let us first show that the above
notion of simulation indeed implies equality of predicates:
\begin{proposition}
  Let $\PP_1$ and $\PP_2$ be two well-specified protocols.  If
  $\PP_2$ simulates $\PP_1$, then $\PP_1$ and $\PP_2$ compute the
  same predicate.
\end{proposition}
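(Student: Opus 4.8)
The plan is to show that for every initial configuration $C_0 \in \pop{I_1} \mplus L_1 = \pop{I_2} \mplus L_2$ (equal by property~2), every fair execution of $\PP_2$ from $C_0$ stabilizes to the same value $b$ that every fair execution of $\PP_1$ from $C_0$ stabilizes to. Since $\PP_1$ is well-specified, let $b = \varphi_1(D)$ where $C_0 = D \mplus L_1$; the goal is to prove that any fair execution $\pi = C_0 C_1 \cdots$ of $\PP_2$ stabilizes to $b$. Because a fair execution stabilizes to $b$ iff it contains a stable configuration with output $b$ (stated in the preliminaries), it suffices to exhibit along $\pi$ a configuration that is stable in $\PP_2$ with output $b$.

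First I would establish the following bridge between the two protocols: from any configuration $C' \in \pop{Q_2}$ reachable in $\PP_2$ from $C_0$, one can reach in $\PP_2$ some $C'' \in \pop{Q_1}$ that is also reachable in $\PP_1$ from $C_0$. This is immediate from property~5 (applied with $C = C_0 \in \pop{Q_1}$), combined with property~4 to transfer the $\PP_1$-reachability $C_0 \trans{*}_1 C''$ to the fact that $C''$ sits inside the $\PP_1$-computation. Next, since $\PP_1$ is well-specified, $C''$ can be driven in $\PP_1$ — and hence, by property~4, in $\PP_2$ — to a configuration $S \in \pop{Q_1}$ that is stable in $\PP_1$ with output $b$; here I use the characterization that a well-specified protocol's fair executions all reach a stable consensus configuration, and that from any reachable configuration one can reach such an $S$ with the correct output $b = \varphi_1(D)$. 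The key claim is then that $S$ is also stable in $\PP_2$: any $E \in \pop{Q_2}$ with $S \trans{*}_2 E$ can, by property~5 again, reach some $E' \in \pop{Q_1}$ with $S \trans{*}_1 E'$; since $S$ is $\PP_1$-stable with output $b$, $E'$ is a consensus configuration with $O_1(E') = b$, and by property~3 (outputs agree on $Q_1$) $O_2(E') = b$ as well; since every $\PP_2$-successor of $E$ leads back into $\pop{Q_1}$ with output $b$, and by property~3 again $O_2 = O_1 = b$ on all those states, $E$ itself must already be a consensus configuration with output $b$ — because if $E$ had an agent in a state of output $\neq b$ it could not be "erased" without changing the output, contradicting that all reachable $\pop{Q_1}$-configurations have output $b$. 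This shows $S$ is $\PP_2$-stable with output $b$.

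Finally I would assemble these pieces: along the fair $\PP_2$-execution $\pi$, the configuration $S$ constructed above is reachable from every $C_i$ (it is reachable from $C_0$, hence from some $C_i$, and by resetting the argument from each point it is repeatedly reachable), so by fairness $\pi$ visits $S$ infinitely often, hence visits $S$; since $S$ is $\PP_2$-stable with output $b$, $\pi$ stabilizes to $b$. The same reasoning with the roles of $\PP_1, \PP_2$ symmetric where possible — plus the fact that $\PP_2$ restricted to $\pop{Q_1}$ behaves like $\PP_1$ via property~4 — gives that $\PP_2$ is well-specified and computes $\varphi_1$, i.e.\ the same predicate.

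The main obstacle I anticipate is the stability transfer step: showing that $\PP_1$-stability of $S$ upgrades to $\PP_2$-stability. The subtlety is that $\PP_2$ has extra states (the disabled/active/backward states), so a $\PP_2$-reachable successor $E$ of $S$ may temporarily leave $\pop{Q_1}$; one must argue carefully, using property~5, that every such $E$ still has consensus output $b$, rather than merely that it can be led back to something with output $b$. The cleanest way is probably to prove the auxiliary statement: if $S \in \pop{Q_1}$ is $\PP_1$-stable with output $b$, then every $E$ with $S \trans{*}_2 E$ satisfies $O_2(q) = b$ for all $q \in \supp{E}$ — arguing by contradiction that a "wrong-output" agent in $E$ would survive into the $\pop{Q_1}$-configuration $E'$ guaranteed by property~5 only if... which needs the precise statement relating outputs of the new states $d_i^t, a_i^t, b_i^t$ to the outputs of $q_i, r_i$. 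This is exactly where the $O'$ definition in the construction ($O'(d_i^t) = O'(a_i^t) = O(q_i)$, $O'(b_i^t) = O(r_i)$) will be invoked, so the proof of this proposition is best stated abstractly here and its delicate instantiation handled alongside the later verification that $\PP'$ indeed satisfies properties 1–5.
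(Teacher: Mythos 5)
There is a genuine gap, and you have in fact located it yourself: the ``stability transfer'' step --- that a $\PP_1$-stable configuration $S \in \pop{Q_1}$ is also $\PP_2$-stable --- does not follow from Properties 1--5. Those properties constrain $\PP_2$ only on configurations of $\pop{Q_1}$ and say nothing about the outputs of states in $Q_2 \setminus Q_1$; a $\PP_2$-successor $E \notin \pop{Q_1}$ of $S$ may well contain agents in states of output $1-b$ while still being able to return to a $\pop{Q_1}$-configuration of output $b$, so your contradiction argument (``a wrong-output agent could not be erased'') has no abstract justification. Your proposed repair --- invoking the concrete definition of $O'$ on the helper states $d_i^t, a_i^t, b_i^t$ --- abandons the abstract proposition and entangles it with the later verification of Properties 1--5 for the specific construction, which is precisely what this lemma is meant to avoid.

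The missing idea is that well-specification of $\PP_2$ is a \emph{hypothesis} of the proposition, not something to be proved; with it, the stability-transfer step is unnecessary. A fair execution $\pi_2$ of $\PP_2$ from $C_0$ already stabilizes to some $b_2$, and one only has to show $b_2 = b_1$. Your ``bridge'' observation does the real work: by Property 5 every configuration of $\pi_2$ can reach some configuration of $\pop{Q_1}$, and since only finitely many such configurations exist (the population size is fixed), one of them, say $C$, is reachable from infinitely many $C_i$ and hence, by fairness, occurs infinitely often in $\pi_2$; therefore $O_2(C) = b_2$. By Property 4, $C$ is reachable from $C_0$ in $\PP_1$ and is stable in $\PP_1$ (each $\PP_1$-successor of $C$ in $\pop{Q_1}$ is also a $\PP_2$-successor, hence occurs infinitely often in $\pi_2$ by fairness and has output $b_2$, which transfers to $O_1$ by Property 3), so well-specification of $\PP_1$ gives $b_1 = O_1(C) = O_2(C) = b_2$. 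This is the paper's argument; your draft contains all of its ingredients but detours into proving the stronger, and abstractly false, claim that $\PP_1$-stability lifts to $\PP_2$-stability. Note also that your closing fairness step (``resetting the argument from each point'') needs the same finiteness observation to produce a single configuration visited infinitely often, rather than a possibly different target from each $C_i$.
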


\begin{proof}
  Let $\pi_1$ and $\pi_2$ be fair executions of $\PP_1$ and $\PP_2$,
  respectively, both starting from some initial configuration $C_0 \in
  \pop{I_1} = \pop{I_2}$. Since $\PP_1$ and $\PP_2$ are
  well-specified, there exist $b_1, b_2 \in \set{0, 1}$ such that
  $O_1(\pi_1) = b_1$ and $O_2(\pi_2) = b_2$. It remains to show that
  $b_1 = b_2$.  By fairness and Property~\ref{prop_5}, there exists
  some configuration $C \in \pop{Q_1}$ that occurs infinitely often in
  $\pi_2$. By Property~\ref{prop_3}, Property~\ref{prop_4} and
  well-specification of $\PP_1$, configuration $C$ must be stable in
  $\PP_1$. Moreover, $C$ must be reachable from $C_0$ in $\PP_1$ by
  Property~\ref{prop_4}. Thus, due to well-specification of $\PP_1$,
  we have $O_1(\pi_1) = O_1(C)$. By Property~\ref{prop_3}, we also
  know that $O_1(C) = O_2(C)$ must hold. Consequently $b_1 = O(\pi_1)
  = O_1(C) = O_2(C) = O_2(\pi_2) = b_2$.
\end{proof}

It remains to prove that if $\PP$ is well-specified, then so is
$\PP'$, and that $\PP'$ simulates $\PP$. We first show the latter.
Properties 1--3 are cleary satisfied. To show the remaining
properties~\ref{prop_4} and~\ref{prop_5}, fix some $n \in \N$, $C_0,
C_1, \ldots, C_n \in \pop{Q'}$ and $t_1, t_2, \ldots, t_n \in T'$ such
that $C_0 \in \pop{Q}$ and $$C_0 \trans{t_1} C_1 \trans{t_2} \ldots
\trans{t_{n}} C_n.$$ We define $H$ as the set of \emph{helper states}
of $Q'$, \ie, $$H \defeq Q' \setminus Q.$$ Whenever an agent changes
its state from $Q$ to $H$, the agent can be thought of as
participating in a simulation attempt of some $k$-way transition that
was started at some point in time $x \in [n]$. In order to make this
association explicit, we annotate the helper states with timestamps
from $[n]$, \ie, we augment $H$ to $\hat{H} \defeq H \times [n]$. We
also augment every transition $t \colon \multiset{q_1, q_2 } \mapsto
\multiset{r_1, r_2}$ of $T'$ with timestamps $x \in [n]$, \ie, $t^x
\colon \multiset{q^x_1, q^x_2 } \mapsto \multiset{r^x_1, r^x_2}$ where
for every $q \in Q'$, $q^x$ is defined as:
    $$q^x \defeq \begin{cases}
      q      & \text{ if } q \in Q, \\
      (q, x) & \text{ otherwise.}
    \end{cases}$$

We now inductively define an execution $\hat{C}_0 \trans{\hat{t}_1}
\ldots \trans{\hat{t}_n} \hat{C}_n$ augmented by timestamps. Let
$\hat{C}_0 \defeq C_0$. For every $i \in [n-1]$, let
\begin{align*}
  a(i) & \defeq
  \begin{cases}
    i & \text{ if } \pre{t_i} \subseteq Q, \\
    \text{smallest } j \text{ s.t. } \prem{t_i^j} \leq \hat{C}_{i-1} &
    \text{ otherwise},
  \end{cases} \\
  \hat{t}_i & \defeq t_i^{a(i)}, \\
  \hat{C}_i & \defeq \hat{C}_{i-1} - \prem{\hat{t}_i} + \postm{\hat{t}_i}.
\end{align*}
Intuitively, $a(i)$ denotes the timestamp of the beginning of the
simulation attempt which transition $t_i$ belongs to. If $t_i$ could
belong to several simulation attempts, then we pick the earliest one.

For every $x \in [n]$, let $\hat{C}_i(x) \in \pop{H}$ denote the
configuration resulting from extracting all helper states labelled by
$x$ from $\hat{C}$, \ie, $\left(\hat{C}_i(x)\right)\left(h\right)
\defeq \hat{C}_i((h, x))$ for every $h \in H$.
\begin{proposition}\label{prop_welldefined}
  For every $i \in [0, n]$ the following holds:
  \begin{itemize}
  \item $\hat{C}_i$ is well-defined.
  \item If $i > 0$, then $a(i)$ and $\hat{t}_i$ are well-defined.
  \item For every $x \in [n]$, there exists a transition $t \in T
    \colon q_1, \ldots q_n \mapsto r_1, \ldots, r_n$ and some $\ell < n$
    such that if $\hat{C}_i(x) \neq \mathbf{0}$, then $\hat{C}_i(x) =
    \multiset{d_1^t, d_2^t, \ldots, d_{\ell-1}^t, a_\ell^t}$ or $\hat{C}_i(x) =
    \multiset{d_1^t, d_2^t, \ldots, d_{\ell-1}^t, b_\ell^t}$.
  \item $\hat{C}_i \cap \pop{Q} = C_i \cap \pop{Q}$.
  \end{itemize}
\end{proposition}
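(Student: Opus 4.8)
The plan is to prove Proposition~\ref{prop_welldefined} by a single induction on $i$, establishing all four bullet points simultaneously since they are mutually dependent: well-definedness of $\hat{C}_i$ needs the structural invariant on $\hat{C}_{i-1}(x)$ to know that $a(i)$ picks out a consistent timestamp, and the structural invariant in turn needs $\hat{C}_i$ to be well-defined. The base case $i = 0$ is immediate: $\hat{C}_0 = C_0 \in \pop{Q}$ contains no helper states, so each $\hat{C}_0(x) = \vec{0}$, the structural bullet holds vacuously, and $\hat{C}_0 \cap \pop{Q} = C_0 = C_0 \cap \pop{Q}$.

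For the inductive step, assume the statement for $i-1$ and consider the transition $t_i$. I would split on whether $\pre{t_i} \subseteq Q$ or not. If $\pre{t_i} \subseteq Q$, then $t_i$ can only be a $\mathrm{forth}_1^t$ transition (the only one consuming two states of $Q$), so $a(i) = i$ is trivially well-defined, $\hat{t}_i = t_i^i$ is well-defined, and applying it to $\hat{C}_{i-1}$ adds a fresh pair $\multiset{d_1^t, a_2^t}$ with timestamp $i$ — since $i$ did not label any helper state before (all earlier timestamps are $< i$, and by induction $\hat{C}_{i-1}(i) = \vec{0}$), the new $\hat{C}_i(i)$ has exactly the claimed form with $\ell = 2$, while all other $\hat{C}_i(x)$ are unchanged. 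If $\pre{t_i} \not\subseteq Q$, then $t_i$ consumes at least one helper state; using the fact that $\prem{t_i} \leq C_{i-1}$ in the original (untimestamped) execution together with the induction hypothesis $\hat{C}_{i-1} \cap \pop{Q} = C_{i-1} \cap \pop{Q}$ and the structural invariant, I would argue that there is at least one timestamp $j$ with $\prem{t_i^j} \leq \hat{C}_{i-1}$, so $a(i)$ is well-defined as the smallest such. Then I verify that firing $\hat{t}_i = t_i^{a(i)}$ keeps $\hat{C}_i(a(i))$ of the required form: this is a finite case check over the transition types $\mathrm{forth}_\ell^t$, $\mathrm{back}_\ell^t$, $\mathrm{success}^t$, $\mathrm{back}_1^t$, $(\mathrm{forth}_\ell^t)^{-1}$, each of which moves $\multiset{d_1^t,\dots,d_{\ell-1}^t,a_\ell^t}$ or $\multiset{d_1^t,\dots,d_{\ell-1}^t,b_\ell^t}$ to another configuration of exactly one of the two claimed shapes (or, for $\mathrm{back}_1^t$ and $(\mathrm{forth}_1^t)^{-1}$, back to $\vec{0}$ after producing states of $Q$). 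Finally, the last bullet $\hat{C}_i \cap \pop{Q} = C_i \cap \pop{Q}$ follows because $\hat{t}_i$ and $t_i$ have the same effect once helper states are projected away — the timestamp annotation only renames helper states and never touches $Q$.

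The main obstacle I anticipate is the case $\pre{t_i} \not\subseteq Q$ where I must show some valid timestamp $j$ exists. The subtle point is that $t_i$ may consume both a $Q$-state and a helper state (as in $\mathrm{forth}_\ell^t$ with $\ell \geq 2$, or $\mathrm{success}^t$): I need the helper part of $\prem{t_i}$ to match some $\hat{C}_{i-1}(j)$, and the structural invariant guarantees each nonempty $\hat{C}_{i-1}(j)$ is a "prefix chain" $\multiset{d_1^t,\dots,d_{\ell-1}^t,a_\ell^t}$ or $\multiset{d_1^t,\dots,d_{\ell-1}^t,b_\ell^t}$ — so I must check that the helper states appearing in $\prem{t_i}$ for each transition type are exactly the "top" of such a chain (e.g. $\mathrm{forth}_\ell^t$ consumes $a_\ell^t$, which sits atop a chain with $\ell-1$ disabled states; $\mathrm{back}_\ell^t$ consumes $\{d_\ell^t, b_{\ell+1}^t\}$, matching a chain of shape $\multiset{d_1^t,\dots,d_\ell^t,b_{\ell+1}^t}$). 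Since each helper state has a unique transition $t \in T$ it belongs to, and the $a/b/d$ indices are linearly ordered along the gadget in Figure~\ref{fig:kway}, this matching is forced and the existence of $j$ reduces to the existence in the untimestamped execution. I would handle this by a short lemma-style enumeration of the six transition shapes rather than a general argument, since the gadget has only finitely many transition types.
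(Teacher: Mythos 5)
Your proposal is correct and follows essentially the same route as the paper's proof: an induction on $i$ establishing all four bullets simultaneously, with a case split on the transition type, using the structural ``chain'' invariant on $\hat{C}_{i-1}(x)$ to locate a valid timestamp and the projection identity for the last bullet. Your organization (preset-in-$Q$ versus not, followed by a finite enumeration of gadget transition shapes) matches the paper's case analysis, and you even spell out the inverse and backward cases that the paper dismisses as ``analogous''.
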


\begin{proof}
  The proof is by induction on $i$. Configuration $\hat{C}_0$ is
  clearly well-defined. Moreover, $\hat{C}_0(x) = \vec{0}$ for every
  $x \in [n]$ and $\hat{C}_0 = C_0$, and hence the third and fourth
  points hold trivially.

  Let $i > 0$ and assume the claim holds for all values smaller than
  $i$. Let $t \colon q_1, \ldots, q_k \mapsto r_1, \ldots, r_k \in T$
  be the transition that is simulated by $t_i$, \ie such that $t_i \in
  T^t$. We make the following case distinction:
  \begin{itemize}
  \item \emph{Case 1: $t_i = \text{forth}_1^t$.} By definition of
    $t_i$, we have $\pre(t_i) \subseteq Q$. Thus, $a(i)$ and
    $\hat{t}_i$ are obviously well-defined. Note that
    $\prem{\hat{t}_i} = \multiset{q_1, q_2} = \prem{t_i} \leq
    C_{i-1}$. By induction hypothesis, $\hat{C}_{i-1} \cap \pop{Q} = C_{i-1}
    \cap \pop{Q}$. In particular, this implies that $\pre{(\hat{t_i})}
    \leq \hat{C}_{i-1}$ which in turn implies that $\hat{C}_i$ is
    well-defined. The third point holds since $\hat{C}_i(i) =
    \multiset{d_1^t, a_1^t}$. The fourth point holds since
    \begin{align*}
      \hat{C}_i \cap \pop{Q}
      &= ((\hat{C}_{i-1} \mminus \prem{\hat{t}_i}) \mplus
      \postm{\hat{t}_i}) \cap \pop{Q} \\
      &= ((\hat{C}_{i-1} \cap Q) \mminus (\prem{\hat{t}_i} \cap Q))
      \mplus (\postm{\hat{t}_i} \cap \pop{Q}) \\
      &= ((C_{i-1} \cap Q) \mminus (\prem{t_i} \cap Q)) \mplus
      (\postm{t_i} \cap \pop{Q}) \\
      &= ((C_{i-1} \mminus \prem{t_i}) \mplus \postm{t_i}) \cap \pop{Q} \\
      &= C_i \cap \pop{Q}.
    \end{align*} 
  
  \item \emph{Case 2: $t_i = \text{forth}_\ell^t$ for some $1 < \ell <
    k$.}  Recall that $t_i \colon a^t_\ell, q_{\ell+1}^t \mapsto
    d^t_\ell, a^t_{\ell+1}$. Since $t_i$ is enabled at $C_{i-1}$, we
    have that $C_{i-1}(a^t_\ell) > 0$. Thus, there exists some $x \in
    [n]$ such that $\hat{C}_{i-1}((a^t_\ell, x)) > 0$. Pick $x$ as the
    smallest such number. By induction hypothesis, $\hat{C}_{i-1}(x) =
    \multiset{d_1^t, \ldots, d_{k-1}^t, a_k^t}$ for some $k <
    n$. Since $\hat{C}_{i-1}((a^t_\ell, x)) > 0$, we must have $k =
    \ell$. Thus, $\prem{t_i^x} \leq \hat{C}_{i-1}$. Now, observe that
    $a(i) = x$, and hence that both $a(i)$ and $\hat{C}_i$ are
    well-defined. The third point holds since $\hat{C}_{i}(x) =
    \multiset{d_1^t, d_2^t, \ldots, d_{\ell}, a_{\ell+1}^t}$. The
    proof of the fourth point is the same as in case 1.\medskip
  
  \item \emph{Case 3: $t_i = \text{success}^t$ or $t_i =
    \text{back}_\ell^t$.} The reasoning is analogous to the last
    case.\qedhere
  \end{itemize}	
\end{proof}

For every $i \in [n]$, we say that $a(i)$ is \emph{successful} if
there exist $j \in [n]$ and $t \in T$ such that $a(i) = a(j)$ and $t_j
= \text{success}^t$. It can be shown that index $j$ must be unique. We
denote this index $j$ by $s(i)$.

We now state three useful propositions whose proofs are left to the
reader. Let $\text{Fwd}^{-1} \defeq \bigcup_{t \in T}
\text{Fwd}^{-1}(t)$.
\begin{proposition}
  For every $i \in [n]$, the following holds:
  \begin{itemize}
  \item\label{prop_undo_succ} If $t_{i} \not \in \text{Fwd}^{-1}$ and
    $Q \cap \post{(\hat{t}_{i})} \neq \emptyset$, then $s(i) \leq i$.

  \item\label{prop_succ} If $Q \cap \pre{(\hat{t}_{i})} \neq
    \emptyset$, then $s(i) \geq i$.
  \end{itemize}
\end{proposition}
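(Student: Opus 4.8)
The plan is to prove both items by a case analysis on which of the gadget transitions $t_i$ is, using throughout the chain-shape invariant for $\hat{C}_{i-1}(a(i))$ established in Proposition~\ref{prop_welldefined}: for every timestamp $x$, the multiset $\hat{C}_{i-1}(x)$ is either $\vec{0}$ or of the form $\multiset{d^t_1,\ldots,d^t_{\ell-1},a^t_\ell}$ or $\multiset{d^t_1,\ldots,d^t_{\ell-1},b^t_\ell}$ for a single $t\in T$, so it contains \emph{exactly one} of an ``$a$-state'' and a ``$b$-state''. First I would record two routine observations about the gadget of Figure~\ref{fig:kway}: (i) a $b$-state is \emph{produced} only by $\mathrm{success}^t$ and by the transitions $\mathrm{back}^t_\ell$ with $\ell\ge 2$, and each such $\mathrm{back}^t_\ell$ also \emph{consumes} a $b$-state; (ii) the transitions of $T^t$ whose postset meets $Q$ are exactly $\mathrm{success}^t$, the $\mathrm{back}^t_\ell$, and the members of $\mathrm{Fwd}^{-1}(t)$, while those whose preset meets $Q$ are exactly $\mathrm{success}^t$ and the $\mathrm{forth}^t_\ell$. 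I would then isolate a small ``persistence'' lemma: once $\mathrm{success}^t$ fires with timestamp $x$ at some time $j$, every later $\hat{C}_m(x)$ contains a $b$-state, until the matching $\mathrm{back}^t_1$ fires and sets $\hat{C}_m(x)=\vec{0}$ for good; in particular $\hat{C}_m(x)$ never again contains an $a$-state and no second $\mathrm{success}$ transition of timestamp $x$ ever fires. This follows from the invariant together with the fact that $\mathrm{forth}^t_1$ always receives a fresh timestamp (its preset lies in $Q$), so attempts cannot ``restart'' under the same $x$.

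For item~(1), observation~(ii) applied to $t_i\notin\mathrm{Fwd}^{-1}$ with $Q\cap\post{\hat{t}_i}\neq\emptyset$ forces $t_i\in\{\mathrm{success}^t\}\cup\mathrm{Bwd}(t)$ for some $t$. If $t_i=\mathrm{success}^t$, then $a(i)$ is successful with $s(i)=i$ by definition, so $s(i)\le i$. If $t_i=\mathrm{back}^t_\ell$, then $\prem{\hat{t}_i}$ contains a $b$-state with timestamp $a(i)$, hence so does $\hat{C}_{i-1}(a(i))$; I would take $i'$ minimal with $\hat{C}_{i'}(a(i))$ containing a $b$-state, note that $\hat{t}_{i'}$ therefore creates the first $b$-state of timestamp $a(i)$ and hence carries that timestamp, i.e.\ $a(i')=a(i)$, and invoke observation~(i): $\mathrm{back}^t_\ell$ cannot be a \emph{first} producer of a $b$-state (it consumes one), so $t_{i'}=\mathrm{success}^t$. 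Thus $a(i)$ is successful and, by uniqueness of the success index, $s(i)=i'<i$.

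For item~(2), I may assume $a(i)$ is successful, since otherwise $s(i)$ is undefined and there is nothing to prove. Observation~(ii) forces $t_i=\mathrm{success}^t$ or $t_i=\mathrm{forth}^t_\ell$. If $t_i=\mathrm{success}^t$ then $s(i)=i$. If $t_i=\mathrm{forth}^t_1$, then its preset lies in $Q$, so $a(i)=i$, and the unique $\mathrm{success}$ transition of timestamp $i$ can only fire after the timestamp-$i$ helpers have appeared, i.e.\ strictly after time $i$, so $s(i)>i$. If $t_i=\mathrm{forth}^t_\ell$ with $\ell\ge 2$, then $\prem{\hat{t}_i}$ contains the $a$-state $a^t_\ell$ with timestamp $a(i)$, so $\hat{C}_{i-1}(a(i))$ contains an $a$-state and is nonempty, hence by the invariant contains no $b$-state; by the persistence lemma the $\mathrm{success}$ transition of timestamp $a(i)$ cannot have fired at any time $\le i-1$ (else $\hat{C}_{i-1}(a(i))$ would either contain a $b$-state or equal $\vec{0}$), so it fires at a time $\ge i$, giving $s(i)\ge i$.

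I expect the persistence lemma to require the most care, as it must exclude the ways attempts with a common timestamp could be interleaved, undone, and restarted; but it is a direct unfolding of the chain-shape invariant of Proposition~\ref{prop_welldefined} plus the freshness of the timestamp assigned to each $\mathrm{forth}^t_1$, so once it is in place both items reduce to the short case analyses above (and the uniqueness of the success index, asserted just before the proposition, falls out of the same lemma).
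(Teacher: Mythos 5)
The paper explicitly leaves the proof of this proposition (and its two companions) to the reader, so there is no in-paper argument to compare yours against; judged on its own, your proof is correct and built from exactly the machinery the appendix sets up. Your classification of which gadget transitions have presets and postsets meeting $Q$ is accurate; the two easy cases ($t_i = \mathrm{success}^t$ gives $s(i)=i$ directly, and $t_i=\mathrm{forth}^t_1$ gives $a(i)=i$ with the success forced to fire strictly later because it must consume a timestamp-$i$ helper, which first exists in $\hat{C}_i$) are handled correctly; and the two remaining cases reduce, as you say, to the persistence lemma. That lemma is the real content: once $\mathrm{success}^t$ fires under timestamp $x$, the slice $\hat{C}_m(x)$ contains a $b$-state (and by the chain-shape invariant therefore no $a$-state) until $\mathrm{back}^t_1$ empties it, after which the timestamp is dead forever --- a fresh $\mathrm{forth}^t_1$ always receives its own index as timestamp, and any other transition assigned timestamp $x$ must find a live timestamp-$x$ helper in $\hat{C}_{m}(x)$, which is impossible once that slice is $\vec{0}$. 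This single lemma rules out an $a$-state coexisting with or reappearing after a success under the same timestamp (settling item (2) for $\mathrm{forth}^t_\ell$, $\ell\ge 2$), and it simultaneously yields the uniqueness of the success index $s(i)$, which the paper also asserts without proof. Your item-(1) argument for $\mathrm{back}^t_\ell$ --- take the minimal index at which a $b$-state of timestamp $a(i)$ appears and note that its producer cannot itself be a $\mathrm{back}$ transition, since those consume a $b$-state of the same timestamp, hence it is a $\mathrm{success}$ transition with $a(i')=a(i)$ and $i'<i$ --- is sound because $\hat{C}_0$ contains no helper states and a transition producing a timestamped helper necessarily carries that timestamp. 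I see no gap.
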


\begin{proposition}
  Let $C_0, C \in \pop{Q}$ be such that $C_0 \trans{*}_{\PP'} C$. The
  following holds:
  \begin{itemize}
  \item \label{prop_always_no_undo}
    $C$ is reachable from $C_0$ in $\PP'$ without using transitions from 
    $\text{Fwd}^{-1}$.  
  \item \label{prop_always_succ} There exist $C_1, C_2, \ldots, C_n
    \in \pop{Q'}$ and $t_1, t_2, \ldots, t_n \in T'$ such that $C_0
    \trans{t_1} C_1 \trans{t_2} \cdots \trans{t_n} C_n = C$ such
    that, for every $i \in [n]$, $a(i)$ is successful in the augmented
    execution $\hat{C}_0 \trans{\hat{t}_1} \hat{C}_1 \trans{\hat{t}_2}
    \ldots \trans{\hat{t}_n} \hat{C}_n$.
  \end{itemize}
\end{proposition}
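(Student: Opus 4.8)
The plan is to prove the first item by an exchange argument on the augmented execution of Proposition~\ref{prop_welldefined}, and to obtain the second item as an essentially free consequence.

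\textbf{Removing $\mathrm{Fwd}^{-1}$-transitions.} I would induct on the number of occurrences of $\mathrm{Fwd}^{-1}$-transitions in the augmented execution $\hat{C}_0 \trans{\hat{t}_1} \cdots \trans{\hat{t}_n} \hat{C}_n$ attached by Proposition~\ref{prop_welldefined} to a given run $C_0 \trans{t_1} \cdots \trans{t_n} C$. If there are none, erasing timestamps already yields the desired run in $\PP'$. Otherwise let $\hat{t}_j$ be the \emph{last} $\mathrm{Fwd}^{-1}$-transition, say $t_j = (\mathrm{forth}_\ell^t)^{-1}$, and put $x \defeq a(j)$. Since $\hat{t}_j$ consumes the timestamp-$x$ copies of $d_\ell^t$ and $a_{\ell+1}^t$, the shape characterization of Proposition~\ref{prop_welldefined} forces $\hat{C}_{j-1}(x) = \multiset{d_1^t, \ldots, d_\ell^t, a_{\ell+1}^t}$, and those two copies were produced by the most recent timestamp-$x$ transition before $\hat{t}_j$, which can only be some $\hat{t}_i$ with $t_i = \mathrm{forth}_\ell^t$ (the sole transition that produces a $d_\ell^t$-agent). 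The crux is to show that no timestamp-$x$ transition fires strictly between $\hat{t}_i$ and $\hat{t}_j$: right after $\hat{t}_i$ and right before $\hat{t}_j$ the component $\hat{C}(x)$ equals the ``active'' configuration $\multiset{d_1^t, \ldots, d_\ell^t, a_{\ell+1}^t}$, and from an active configuration $\multiset{d_1^t, \ldots, d_{\ell'-1}^t, a_{\ell'}^t}$ the only timestamp-$x$ moves are $\mathrm{forth}$-steps (raising the depth $\ell'$), $\mathrm{Fwd}^{-1}$-steps (lowering it), and, at maximal depth, $\mathrm{success}^t$, which passes irrevocably into a ``backward'' configuration $\multiset{d_1^t, \ldots, d_{\ell'-1}^t, b_{\ell'}^t}$ that $\mathrm{back}$-steps never leave and that never contains the state $a_{\ell+1}^t$ required by $\hat{t}_j$. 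As $\hat{t}_j$ is the last $\mathrm{Fwd}^{-1}$-step of the run, no $\mathrm{Fwd}^{-1}$-step lies between $\hat{t}_i$ and $\hat{t}_j$, so on that interval $\hat{C}(x)$ could only climb and could never return to the value demanded before $\hat{t}_j$ unless it does not move at all.

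\textbf{Concluding the first item.} Given this, I would commute $\hat{t}_j$ leftward past $\hat{t}_{j-1}, \ldots, \hat{t}_{i+1}$ until it is immediately after $\hat{t}_i$. Each exchange is valid: every intermediate $\hat{t}_m$ with $i < m < j$ has timestamp $a(m) \neq x$, so its preset and postset contain no timestamp-$x$ helper state and are therefore disjoint from what $\hat{t}_j$ touches, except possibly on $Q$-states, on which $\hat{t}_j$ only \emph{produces} a copy of $q_{\ell+1}$; since moreover the preset of $\hat{t}_j$ is present throughout the interval, $\hat{t}_j$ and $\hat{t}_m$ are two simultaneously enabled transitions whose vector effects commute. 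Once $\hat{t}_i$ and $\hat{t}_j$ are consecutive they form an inverse pair, so the configurations before $\hat{t}_i$ and after $\hat{t}_j$ coincide; deleting the pair gives an augmented run from $\hat{C}_0$ to $\hat{C}_n$ with one fewer $\mathrm{Fwd}^{-1}$-step, and projecting out the timestamps completes the induction. I expect this to be the main obstacle, specifically the claim that nothing of timestamp $x$ happens between a $\mathrm{Fwd}^{-1}$-step and its matching $\mathrm{forth}$-step, which genuinely needs the exact shapes of Proposition~\ref{prop_welldefined}, the irreversibility of $\mathrm{success}^t$ and of $\mathrm{back}$-steps, and careful bookkeeping of timestamps (in particular, a $\mathrm{Fwd}^{-1}$-step may legitimately be assigned a timestamp strictly earlier than its position).

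\textbf{Second item.} I would feed the $\mathrm{Fwd}^{-1}$-free run produced above into Proposition~\ref{prop_welldefined} and take its augmentation. It suffices to show that every timestamp $y$ belonging to the gadget $T^t$ of a genuine $k$-way transition ($k > 2$) is successful: every $a(i)$ is of this kind (for original $2$-way transitions nothing is to be shown), since a straightforward induction on positions shows that every occupied timestamp was created by a $\mathrm{forth}_1^t$-step carrying it. Fix such a $y$, created by $\mathrm{forth}_1^t$, so that $\hat{C}(y)$ first becomes $\multiset{d_1^t, a_2^t}$. Because $C \in \pop{Q}$, Proposition~\ref{prop_welldefined} gives $\hat{C}_n(y) = \vec{0}$, and because the run uses no $\mathrm{Fwd}^{-1}$-step, the only way for $\hat{C}(y)$ to return from $\multiset{d_1^t, a_2^t}$ to $\vec{0}$ is to climb via $\mathrm{forth}$-steps up to the active configuration $\multiset{d_1^t, \ldots, d_{k-2}^t, a_{k-1}^t}$, fire $\mathrm{success}^t$, and descend via $\mathrm{back}$-steps; hence $\mathrm{success}^t$ fires for timestamp $y$, i.e. $y$ is successful.
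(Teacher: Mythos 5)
The paper leaves this proposition to the reader, so there is no official proof to compare against. Your overall strategy---cancel each $\mathrm{Fwd}^{-1}$-step against its matching $\mathrm{forth}$-step by commuting it across transitions of other timestamps, then derive the second item from the shape invariant of Proposition~\ref{prop_welldefined} applied to an $\mathrm{Fwd}^{-1}$-free run---is the natural one, and your argument for the second item is sound. However, the induction step for the first item contains a concrete error. You pick the \emph{last} $\mathrm{Fwd}^{-1}$-step $\hat{t}_j$ with $t_j = (\mathrm{forth}_\ell^t)^{-1}$ and claim that the most recent timestamp-$x$ transition $\hat{t}_i$ before it must be $\mathrm{forth}_\ell^t$. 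This is false: that transition can equally well be $(\mathrm{forth}_{\ell+1}^t)^{-1}$, which also leaves the component in the shape $\multiset{d_1^t,\ldots,d_\ell^t,a_{\ell+1}^t}$. For a $4$-way $t$, the single-attempt run $\mathrm{forth}_1^t\ \mathrm{forth}_2^t\ (\mathrm{forth}_2^t)^{-1}\ (\mathrm{forth}_1^t)^{-1}$ is a counterexample: the last inverse is immediately preceded, within its own timestamp, by another inverse; the two are already consecutive but do not form an inverse pair, so your cancellation does not apply. The supporting claim ``as $\hat{t}_j$ is the last $\mathrm{Fwd}^{-1}$-step of the run, no $\mathrm{Fwd}^{-1}$-step lies between $\hat{t}_i$ and $\hat{t}_j$'' is where this goes wrong: being last only excludes inverses \emph{after} position $j$, not between $i$ and $j$.

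The fix is to take the \emph{first} $\mathrm{Fwd}^{-1}$-step instead. Then no inverse occurs before position $j$, so the timestamp-$x$ history up to $j$ consists only of $\mathrm{forth}$-, $\mathrm{success}$- and $\mathrm{back}$-steps; since $\mathrm{success}^t$ and the $\mathrm{back}$-steps move the component irreversibly into backward shapes $\multiset{d_1^t,\ldots,d_{\ell'-1}^t,b_{\ell'}^t}$ that never again contain an $a$-state, the only timestamp-$x$ step that can leave the component in $\multiset{d_1^t,\ldots,d_\ell^t,a_{\ell+1}^t}$ is indeed $\mathrm{forth}_\ell^t$, and by the choice of $\hat{t}_i$ as the most recent timestamp-$x$ step nothing of timestamp $x$ fires in between. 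With that change your commutation and cancellation go through, and the rest of the proof is correct. One minor caveat for the second item: for a $2$-way transition $t$ of the original protocol there is no $\mathrm{success}^t$, so under the paper's literal definition such an $a(i)$ is never ``successful''; you should either exclude these indices explicitly or remark that the claim is only meaningful for indices belonging to gadgets of $k$-way transitions with $k>2$.
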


\begin{proposition}\label{prop_simulation}
  Let $t \in T$ and $\sigma \in {\left(T^t \setminus
    \set{\text{forth}_1^t, \left(\text{forth}_1^t \right)^{-1}}\right)}^*$ and $C, C' \in \pop{Q}$. If $C
  \trans{\text{forth}_1^t \cdot \sigma} C'$, then $C \trans{t} C'$.
\end{proposition}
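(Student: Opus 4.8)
The plan is to analyse a single run through the $t$-gadget in isolation, using that $C \in \pop{Q}$ contains no helper state, that $\sigma$ uses only transitions of $T^t$, and, crucially, that $\sigma$ uses neither $\mathrm{forth}_1^t$ nor its inverse. Since $\mathrm{forth}_1^t$ is the only transition of $T^t$ that creates helper states out of states of $Q$, and since it fires exactly once (at the front), this is a ``single-token'' argument, a special case of the reasoning behind Proposition~\ref{prop_welldefined}.

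First I would prove, by induction on the length of a prefix of $\mathrm{forth}_1^t \cdot \sigma$, the invariant that the restriction of the current configuration to $D^t \cup A^t \cup B^t$ is always one of: $\vec{0}$ (the \emph{empty} state), $\multiset{d_1^t, \ldots, d_{\ell-1}^t, a_\ell^t}$ for some $2 \le \ell \le k-1$ (\emph{forward level $\ell$}), or $\multiset{d_1^t, \ldots, d_{\ell-1}^t, b_\ell^t}$ for some $2 \le \ell \le k-1$ (\emph{backward level $\ell$}). The inductive step is a short case check of which transitions of $T^t \setminus \{\mathrm{forth}_1^t, (\mathrm{forth}_1^t)^{-1}\}$ are enabled: from forward level $\ell$, only $\mathrm{forth}_\ell^t$ (to forward level $\ell+1$, consuming $q_{\ell+1}$), $(\mathrm{forth}_{\ell-1}^t)^{-1}$ when $\ell \ge 3$ (to forward level $\ell-1$, producing $q_\ell$), and, when $\ell = k-1$, $\mathrm{success}^t$ (to backward level $k-1$, consuming $q_k$ and producing $r_k$); from backward level $\ell \ge 3$, only $\mathrm{back}_{\ell-1}^t$ (to backward level $\ell-1$, producing $r_\ell$); from backward level $2$, only $\mathrm{back}_1^t$ (to the empty state, producing $r_1, r_2$); and from the empty state nothing is enabled, since every transition of $T^t$ other than $\mathrm{forth}_1^t$ consumes a helper state. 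Note that excluding $(\mathrm{forth}_1^t)^{-1}$ is precisely what forbids the token to leave forward level $2$ downwards, which is what makes the analysis go through.

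Because $C' \in \pop{Q}$, the gadget is empty at the end, so $\sigma$ must fire $\mathrm{success}^t$ exactly once and decomposes uniquely as $\sigma = \sigma_1 \cdot \mathrm{success}^t \cdot \mathrm{back}_{k-2}^t \cdots \mathrm{back}_1^t$, where $\sigma_1$ uses only $\{\mathrm{forth}_\ell^t, (\mathrm{forth}_\ell^t)^{-1} : 2 \le \ell \le k-2\}$ and takes the token from forward level $2$ to forward level $k-1$. A crossing-counting argument then pins down the effect on $Q$: whenever the token is at forward level $j$, the number of firings of $\mathrm{forth}_\ell^t$ minus those of $(\mathrm{forth}_\ell^t)^{-1}$ so far equals the net number of upward crossings of the level-$\ell$-to-$(\ell+1)$ boundary, which is $1$ for $2 \le \ell < j$ and $0$ otherwise, so the restriction to $Q$ at that point equals $C \mminus \multiset{q_1, \ldots, q_j}$. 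Evaluating this at the instant just before $\mathrm{success}^t$ (token at forward level $k-1$) gives $C \geq \multiset{q_1, \ldots, q_{k-1}}$, and since $\mathrm{success}^t$ needs a $q_k$ present at that instant, also $C(q_k) \geq \prem{t}(q_k)$; together these give $C \geq \prem{t}$, i.e.\ $t$ is enabled at $C$. Finally, adding up the contributions to $Q$ — $-q_1, -q_2$ from $\mathrm{forth}_1^t$; the cancellation within $\sigma_1$; $-q_k, +r_k$ from $\mathrm{success}^t$; $+r_{k-1}, \ldots, +r_1$ from $\mathrm{back}_{k-2}^t, \ldots, \mathrm{back}_1^t$ — yields net effect $-\prem{t} + \postm{t}$, so $C' = (C \mminus \prem{t}) \mplus \postm{t}$ and hence $C \trans{t} C'$.

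The main obstacle I anticipate is the bookkeeping: carrying out the crossing-counting carefully, handling the index boundaries (that the forward level never drops below $2$, and the $k = 3$ corner case where $\sigma_1$ is empty), and — most delicate — making sure we really obtain $C \geq \prem{t}$ rather than merely the weaker bound $C(q) \geq \prem{t}(q) - \postm{t}(q)$ that the net-effect vector alone would give; this is exactly why enabledness is read off from the configuration reached immediately before $\mathrm{success}^t$, at which point no $r_i$ has yet been produced.
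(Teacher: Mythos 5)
The paper states this proposition without proof (it is one of the ``three useful propositions whose proofs are left to the reader''), so there is no official argument to compare against; your proposal correctly supplies the missing proof. The single-token invariant on $D^t \cup A^t \cup B^t$ (forced because every transition of $T^t$ other than $\mathrm{forth}_1^t$ consumes a helper state and $(\mathrm{forth}_1^t)^{-1}$ is excluded), the resulting forced decomposition of $\sigma$ around a unique occurrence of $\mathrm{success}^t$, and in particular reading off enabledness of $t$ from the configuration reached immediately before $\mathrm{success}^t$ rather than from the net-effect vector, are exactly the right ingredients, and your case analysis of which transitions are enabled at each level is complete and correct.
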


The following lemma shows that the execution order of two transitions
belonging to different simulation attempts can be swapped under
certain conditions:

\begin{lemma}\label{lemma_commute_sim}
  Let $i \in [n-1]$ be such that $a(i)$ and $a(i+1)$ are both successful
  simulation attempts satisfying $s(i+1) < s(i)$. If $t_i \not \in
  \text{Fwd}^{-1}$, then $\hat{C}_{i-1} \trans{\hat{t}_{i+1}
    \hat{t}_{i}} \hat{C}_{i+1}$.
\end{lemma}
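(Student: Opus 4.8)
The inductive definition of the augmented execution gives $\hat{C}_{i-1} \trans{\hat{t}_i} \hat{C}_i \trans{\hat{t}_{i+1}} \hat{C}_{i+1}$, so it suffices to establish (a) $\hat{t}_{i+1}$ is enabled at $\hat{C}_{i-1}$, and (b) $\hat{t}_i$ is enabled at $D \defeq (\hat{C}_{i-1} \mminus \prem{\hat{t}_{i+1}}) \mplus \postm{\hat{t}_{i+1}}$. Granting these, firing $\hat{t}_{i+1}$ and then $\hat{t}_i$ from $\hat{C}_{i-1}$ reaches $(D \mminus \prem{\hat{t}_i}) \mplus \postm{\hat{t}_i}$, and since every multiset subtraction involved is exact (by enabledness), this equals $\hat{C}_{i+1}$. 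First I would note that $a(i) \neq a(i+1)$: if the two transitions belonged to the same simulation attempt we would have $s(i) = s(i+1)$, contradicting $s(i+1) < s(i)$. By the definition of the augmented transitions, every helper state occurring in $\prem{\hat{t}_i} \mplus \postm{\hat{t}_i}$ carries timestamp $a(i)$ and every helper state occurring in $\prem{\hat{t}_{i+1}} \mplus \postm{\hat{t}_{i+1}}$ carries timestamp $a(i+1)$; hence these helper states are disjoint, and the only states on which the two transitions can conflict lie in $Q$.

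The heart of the argument is the following claim: there is no state $p \in Q$ with $\postm{\hat{t}_i}(p) > 0$ and $\prem{\hat{t}_{i+1}}(p) > 0$. Suppose such a $p$ existed. Then $Q \cap \post{\hat{t}_i} \neq \emptyset$, so, using $t_i \notin \mathrm{Fwd}^{-1}$ together with Proposition~\ref{prop_undo_succ}, we get $s(i) \leq i$. Also $Q \cap \pre{\hat{t}_{i+1}} \neq \emptyset$, so Proposition~\ref{prop_succ} applied at index $i+1$ gives $s(i+1) \geq i+1$. Combining with the hypothesis yields $i+1 \leq s(i+1) < s(i) \leq i$, which is absurd; this proves the claim.

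It remains to derive (a) and (b) from the claim by a componentwise check, using the exact identity $\hat{C}_{i-1}(p) = \hat{C}_i(p) + \prem{\hat{t}_i}(p) - \postm{\hat{t}_i}(p)$ (valid for all $p$ since $\hat{t}_i$ is enabled at $\hat{C}_{i-1}$). For (a), fix $p$: if $\postm{\hat{t}_i}(p) = 0$ then $\hat{C}_{i-1}(p) \geq \hat{C}_i(p) \geq \prem{\hat{t}_{i+1}}(p)$ because $\hat{t}_{i+1}$ is enabled at $\hat{C}_i$; if $\postm{\hat{t}_i}(p) > 0$ then $p$ is either a helper with timestamp $a(i)$ or, by the claim, a state of $Q$ not consumed by $\hat{t}_{i+1}$, so in both cases $\prem{\hat{t}_{i+1}}(p) = 0 \leq \hat{C}_{i-1}(p)$. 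Granting (a), the subtraction defining $D$ is exact and a rearrangement gives $D(p) = \hat{C}_{i+1}(p) + \prem{\hat{t}_i}(p) - \postm{\hat{t}_i}(p)$, so (b) amounts to $\hat{C}_{i+1}(p) \geq \postm{\hat{t}_i}(p)$ for all $p$. Since $\hat{C}_i(p) \geq \postm{\hat{t}_i}(p)$ and $\hat{C}_{i+1}(p) = \hat{C}_i(p) - \prem{\hat{t}_{i+1}}(p) + \postm{\hat{t}_{i+1}}(p)$, this holds whenever $\prem{\hat{t}_{i+1}}(p) = 0$; and when $\prem{\hat{t}_{i+1}}(p) > 0$, the claim (for $p \in Q$) and timestamp-disjointness (for helper states) force $\postm{\hat{t}_i}(p) = 0$, so the inequality is trivial. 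This gives (b) and finishes the proof. I expect the only real subtlety to be the bookkeeping behind the claim, namely keeping track of which of $\mathrm{forth}$, $\mathrm{success}$, $\mathrm{back}$ and their inverses carry a $Q$-state in their preset or postset, which is exactly what makes Propositions~\ref{prop_succ} and~\ref{prop_undo_succ} applicable; everything else is elementary multiset arithmetic.
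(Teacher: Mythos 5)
Your proof is correct and follows essentially the same route as the paper's: both reduce the possible conflict between $\hat{t}_i$ and $\hat{t}_{i+1}$ to a shared state in $Q$ (using $a(i)\neq a(i+1)$ to rule out helper states), and then derive $s(i)\leq i$ and $s(i+1)\geq i+1$ from Propositions~\ref{prop_undo_succ} and~\ref{prop_succ}, contradicting $s(i+1)<s(i)$. The only difference is presentational — you argue directly via a no-conflict claim and spell out the componentwise multiset arithmetic that the paper leaves implicit, whereas the paper argues by contradiction from the failure of the swap.
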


\begin{proof}
  For the sake of contradiction assume $C_{i-1} \trans{\hat{t}_{i+1}
    \hat{t}_{i}} C_{i+1}$ does not hold. This entails that
  \begin{equation}\label{eq_conflict}
    \post{\hat{t}_{i}} \cap \pre{\hat{t}_{i+1}} \neq \emptyset
  \end{equation}
  Moreover, since $s(i+1) < s(i)$, we have that $a(i+1) \neq
  a(i)$. Thus
  \begin{equation}\label{eq_no_conflict}
    \post{\hat{t}_{i}} \cap \pre{\hat{t}_{i+1}} \cap \hat{H} = \emptyset
  \end{equation}
  Inequality~\eqref{eq_conflict} and Equality~\eqref{eq_no_conflict}
  combined then yield
  \begin{equation}\label{ineq_final}
    Q \cap \post{\hat{t}_{i}} \cap \pre{\hat{t}_{i+1}} \neq \emptyset
  \end{equation}
  
  Since $t_i \not \in \text{Fwd}^{-1}$ by assumption, we obtain from
  Proposition~\ref{prop_undo_succ} and Inequality~\eqref{ineq_final}
  that $s(i) \leq i$. Moreover, Inequality~\eqref{ineq_final} and
  Proposition~\ref{prop_succ} imply that $s(i+1) \geq i+1$. Thus $s(i)
  < s(i+1)$, which contradicts our initial assumption that $s(i+1) <
  s(i)$.
\end{proof}

\begin{corollary}
  Property~\ref{prop_4} holds.
\end{corollary}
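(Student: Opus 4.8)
The plan is to prove the two implications of Property~\ref{prop_4} separately, with $\PP_1 = \PP$ and $\PP_2 = \PP'$. The forward direction is routine; the backward direction is where the work lies.

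For $C \trans{*}_{\PP} C' \Rightarrow C \trans{*}_{\PP'} C'$ (with $C, C' \in \pop{Q}$), I would simply replace each step $D \trans{t} D''$ of a witnessing $\PP$-execution by the corresponding block of $2$-way transitions: a $k$-way transition ($k \geq 3$) by $\sigma_t = \text{forth}_1^t \cdots \text{forth}_{k-2}^t\, \text{success}^t\, \text{back}_{k-2}^t \cdots \text{back}_1^t$, and a $2$-way transition by itself. By the construction of $\PP'$, running $\sigma_t$ from a configuration containing $\prem{t}$ consumes exactly $\prem{t}$, produces exactly $\postm{t}$, and leaves no helper states; hence $D \trans{\sigma_t}_{\PP'} D''$. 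Concatenating these blocks yields the required $\PP'$-execution from $C$ to $C'$ (passing through $\pop{Q'}$ is allowed, only the endpoints must lie in $\pop{Q}$).

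For the converse, fix an execution $C \trans{*}_{\PP'} C'$ with $C, C' \in \pop{Q}$. First, by Proposition~\ref{prop_always_no_undo} I may assume it uses no transition of $\text{Fwd}^{-1}$, and then by Proposition~\ref{prop_always_succ} I may further assume that in the associated timestamped execution $\hat C_0 \trans{\hat t_1} \cdots \trans{\hat t_n} \hat C_n$ every simulation attempt $a(i)$ is successful, with success position $s(i)$. The core step is to sort the transitions by the key $s(\cdot)$ via adjacent transpositions: whenever $s(i) > s(i+1)$, the hypothesis of Lemma~\ref{lemma_commute_sim} is met (the requirement $t_i \notin \text{Fwd}^{-1}$ holds because those transitions were discarded), so $\hat t_i$ and $\hat t_{i+1}$ may be swapped without altering the endpoints of the execution. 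Each swap strictly decreases the number of $s(\cdot)$-inversions, so the procedure terminates at a stably sorted execution in which all transitions of one attempt form a contiguous block, and the blocks are ordered by the original positions of their success transitions. It then remains to observe, using Proposition~\ref{prop_welldefined}, that between consecutive blocks no helper state is present (every earlier attempt is completed and no later one has started), so each block both starts and ends in $\pop{Q}$; since the attempt is successful and $\text{Fwd}^{-1}$-free, the data-dependencies among its $\text{forth}^t$, $\text{success}^t$ and $\text{back}^t$ transitions (together with $C' \in \pop{Q}$ forcing full unwinding) leave no choice but the canonical order $\sigma_t$. Applying Proposition~\ref{prop_simulation} to each block replaces it by a single step $\trans{t}$ of $\PP$, and concatenation gives $C \trans{*}_{\PP} C'$.

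The main obstacle I anticipate is the bookkeeping around the sorting step: one must verify that the timestamp annotations and the functions $a(\cdot)$, $s(\cdot)$ — as well as the "shape" invariant of Proposition~\ref{prop_welldefined} for each attempt — are correctly preserved under the adjacent swaps, and that after sorting each per-attempt block is genuinely a complete copy of $\sigma_t$ rather than a truncated or reordered fragment. This is precisely the point where the two normalizations (no $\text{Fwd}^{-1}$, all attempts successful) and the endpoint condition $C' \in \pop{Q}$ are indispensable.
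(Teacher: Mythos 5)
Your proposal is correct and follows essentially the same route as the paper: forward direction by expanding each $k$-way transition into the block $\sigma_t$, backward direction by normalizing via Propositions~\ref{prop_always_no_undo} and~\ref{prop_always_succ}, reordering with Lemma~\ref{lemma_commute_sim} into per-attempt contiguous blocks with intermediate configurations in $\pop{Q}$, and collapsing each block with Proposition~\ref{prop_simulation}. Your explicit bubble-sort on the key $s(\cdot)$ merely makes precise the paper's ``by repeatedly applying Lemma~\ref{lemma_commute_sim}''; note also that Proposition~\ref{prop_simulation} already tolerates any $T^t$-sequence following $\mathrm{forth}_1^t$ with endpoints in $\pop{Q}$, so your worry about forcing the exact canonical order $\sigma_t$ within each block is not needed.
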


\begin{proof}
  Fix some $C, C' \in \pop{Q}$ and let $\PP_1 = \PP$ and $\PP_2 =
  \PP'$.

  $\Rightarrow$) Assume $C \trans{*}_1 C'$. We have to show that $C
  \trans{*}_2 C'$ holds. We saw earlier how a single $k$-way
  transition of $\PP_1$ can be simulated via a sequence of $2$-way
  transitions of $\PP_2$. Thus, ${\trans{*}_1} \subseteq
  {\trans{*}_2}$ and we are done.

  $\Leftarrow$) Assume $C \trans{t_1} C_1 \trans{t_2} \ldots
  \trans{t_{n}} C'$ for some $t_1, t_2, \ldots, t_n \in T'$. Consider
  the augmented run $\hat{C} \trans{\hat{t}_1} \hat{C}_1
  \trans{\hat{t}_2} \ldots \trans{\hat{t}_{n}} \hat{C}'$. By
  Proposition~\ref{prop_always_succ}, we
  may assume that $a(i)$ is successful and $t_i \not\in
  \text{Fwd}^{-1}$ for every $i \in [n]$.

  Let $A \defeq \set{a(i) : i \in [n]}$ be the set of successful
  simulation attempts and let $m \defeq |A|$. By repeatedly applying
  Lemma~\ref{lemma_commute_sim}, we can reorder the augmented
  execution such that $\hat{C} \trans{T_1} C_1' \trans{T_2} C_2'
  \trans{T_3} \ldots \trans{T_m} \hat{C}'$ for some $C_i'$, where each
  $T_i$ is a sequence of transitions that belong to exactly one of the
  successful simulation attempts, \ie
	$$ T_i \in \set{{t'}^x \colon t' \in T^t}^* $$ for some $x \in
  A$ and $t \in T$. Observe that $C_i' \in \pop{Q}$ for every $i \in
  [m]$, and moreover $C_i \trans{}_1 C_{i+1}$ for every $i \in [m-1]$:
  By Proposition~\ref{prop_simulation}, each sequence $T_i$
  corresponds to the successful simulation of some $k$-way transition
  that must be enabled at $C_{i-1}$. Thus $C \trans{*}_1 C'$, which
  completes the proof for Property~\ref{prop_4}.
\end{proof}

In order to show Property~\ref{prop_5}, we only need to show that
every execution of $\PP'$ can be extended to an execution that ends up
in a configuration without helper states. Validity of
Property~\ref{prop_5} then follows from Property~\ref{prop_4}. The
following lemma proves a slightly stronger result.

\begin{lemma}\label{lemma_exec_extension}
  Let $C_0, C_1, \ldots, C_n \in \pop{Q'}$ and let $t_1, t_2, \ldots,
  t_n \in T'$ be such that $C_0 \in \pop{Q}$ and $C_0 \trans{t_1} C_1
  \trans{t_2} \cdots \trans{t_{n}} C_n$. There exists some $C' \in
  \pop{Q}$ such that $C_n \trans{*}_{\PP'} C'$ and $Q \cap \supp{C_n}
  \subseteq \supp{C'}$.
\end{lemma}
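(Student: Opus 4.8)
The plan is to argue inside the timestamp‑augmented execution $\hat{C}_0 \trans{\hat{t}_1} \cdots \trans{\hat{t}_n} \hat{C}_n$ constructed above, and to resolve, timestamp by timestamp, every simulation attempt that is still ``in progress'' at $\hat{C}_n$. By Proposition~\ref{prop_welldefined} we have $\hat{C}_n \cap \pop{Q} = C_n \cap \pop{Q}$, and for every $x \in [n]$ there is a transition $t : q_1,\ldots,q_k \mapsto r_1,\ldots,r_k$ of $T$ and an index $\ell$ such that $\hat{C}_n(x)$ is $\vec{0}$, or $\multiset{d_1^t,\ldots,d_{\ell-1}^t,a_\ell^t}$, or $\multiset{d_1^t,\ldots,d_{\ell-1}^t,b_\ell^t}$. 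Only finitely many timestamps have $\hat{C}_n(x)\ne\vec{0}$, and we handle them one after another; the key point is that the transitions used to resolve timestamp $x$ consume only helper states carrying the timestamp $x$, so the resolutions do not interfere with one another and preserve, for the timestamps not yet treated, the shape guaranteed by Proposition~\ref{prop_welldefined}.

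If $\hat{C}_n(x) = \multiset{d_1^t,\ldots,d_{\ell-1}^t,a_\ell^t}$, the attempt is still in its ``collecting'' phase and $\mathrm{success}^t$ has never fired for it; we undo it by applying, with timestamp $x$, the inverse‑forward transitions of this attempt in reverse order, i.e.\ first $(\mathrm{forth}_{\ell-1}^t)^{-1}$, then $(\mathrm{forth}_{\ell-2}^t)^{-1}$, down to $(\mathrm{forth}_1^t)^{-1}$; each of these lies in $\mathrm{Fwd}^{-1}(t) \subseteq T'$. A short descending induction on $\ell$ shows that each step is enabled when it is applied (the required $d^t$‑ and $a^t$‑states with timestamp $x$ are present) and that after the last one the whole $x$‑component has become $\vec{0}$ while the states $q_1,\ldots,q_\ell \in \pre{t} \subseteq Q$ have been added. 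If instead $\hat{C}_n(x) = \multiset{d_1^t,\ldots,d_{\ell-1}^t,b_\ell^t}$, the presence of a backward state means $\mathrm{success}^t$ has already fired for this attempt, and we complete it by applying with timestamp $x$ the remaining transitions $\mathrm{back}_{\ell-1}^t, \mathrm{back}_{\ell-2}^t, \ldots, \mathrm{back}_1^t$ of $\mathrm{Bwd}(t)$; again each step is enabled and the $x$‑component is driven to $\vec{0}$, this time adding $r_1,\ldots,r_\ell \in \post{t} \subseteq Q$. In both cases, crucially, none of the applied transitions consumes a state of $Q$.

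Carrying this out for all timestamps $x$ with $\hat{C}_n(x) \ne \vec{0}$ yields an augmented configuration $\hat{C}'$ with no helper states at all, hence $\hat{C}' \in \pop{Q}$; put $C' \defeq \hat{C}'$. Forgetting the (now vacuous) timestamps turns the augmented derivation $\hat{C}_n \trans{*} \hat{C}'$ into a genuine derivation $C_n \trans{*}_{\PP'} C'$. Finally, since every transition we used only removes helper states and adds states of $Q$, we never delete a state of $Q$ that was present in $C_n$; together with $\hat{C}_n \cap \pop{Q} = C_n \cap \pop{Q}$ this gives $C_n \cap \pop{Q} \leq C'$, and in particular $Q \cap \supp{C_n} \subseteq \supp{C'}$, as claimed.

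The main thing to get right is the enabledness bookkeeping inside the two resolution sequences: starting from the precise shape of $\hat{C}_n(x)$ given by Proposition~\ref{prop_welldefined}, one must verify that the listed transitions really are applicable in the stated order and that executing one re‑establishes exactly the shape the next one needs (the $a$‑shape with parameter $\ell$ passing to the $a$‑shape with parameter $\ell-1$, and symmetrically for the $b$‑shape). This is a routine induction on $\ell$; once it is done, non‑interference across distinct timestamps is immediate from the disjointness of their helper states, and the support claim is just the observation that the resolution transitions never consume a state of $Q$.
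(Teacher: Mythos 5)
Your proof is correct and follows essentially the same route as the paper: work in the timestamp-augmented execution, use Proposition~\ref{prop_welldefined} to pin down the shape of each unresolved attempt at $\hat{C}_n$, undo the $a$-shaped attempts via $\mathrm{Fwd}^{-1}$ and complete the $b$-shaped ones via $\mathrm{Bwd}$, exploit disjointness of timestamps for non-interference, and project away the timestamps at the end. Your bookkeeping is in fact slightly more careful than the paper's (your starting index $\ell-1$ for the undo/completion sequences is the right one), and the observation that the resolution transitions never consume states of $Q$ correctly yields the support inclusion.
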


\begin{proof}
  Consider the augmented run $\hat{C}_0 \trans{\hat{t}_1} \hat{C}_1
  \trans{\hat{t}_2} \cdots \trans{\hat{t}_{n}} \hat{C}_n$. If
  $\hat{C}_n \in \pop{Q}$, then we are done. Otherwise every helper
  state in $\supp{\hat{C}_n}$ is labelled by some simulation
  attempt. Let $x_1 \leq x_2 \leq \ldots \leq x_m$ be these simulation
  attempts, \ie let
  $$\set{x_1, x_2, \ldots, x_m} = \set{x \in [n] : (H \times \set{a})
    \cap \supp{\hat{C}_n} \neq \emptyset}.$$ By
  Proposition~\ref{prop_welldefined}, one of two cases must hold:
  either (1) $\hat{C}_n(x_i) = \multiset{d_1^t, \ldots, d_{\ell-1}^t,
    a_\ell^t}$ or (2) $\hat{C}_n(x_i) = \multiset{d_1^t, \ldots,
    d_{\ell-1}^t, b_\ell^t}$ for some $\ell < n$ and $t \in T$. For
  each attempt $x_i$, we construct a sequence of transitions $T(x_i)$
  as follows:\medskip

  \noindent\emph{Case 1.} We construct $T(x_i) \defeq
  (\left(\text{forth}^{t}_{\ell})^{-1}\right)^{x_i} \cdots
  \left((\text{forth}^t_1)^{-1} \right)^{x_i}$. In this case, the
  sequence $T(x_i)$ ``undoes'' the unsuccessful simulation attempt
  $x_i$.\medskip

  \noindent\emph{Case 2.} We construct $T(x_i) \defeq
  \left(\text{back}^t_\ell\right)^{x_i} \cdots
  \left(\text{back}^t_1\right)^{x_i}$. In this case, $T(x_i)$
  ``completes'' the successful simulation attempt $x_i$.\medskip

  Observe that $\hat{C}_n \trans{T(x_i)} C'$ implies that $\supp{C'}
  \cap (H \times \set{a_i}) = \emptyset$. Also, note that $T(x_k)$ and
  $T(x_i)$ can occur independently for $k \neq i$, as the presets of
  $t_i$ and $t_k$ contained in $T(x_i)$ and $T(x_k)$ are disjoint, and
  their presets solely contain helper states which are labelled by
  different simulation attempts:
  $$\pre{t_i}\cap \pre{t_k} = \pre{t_i} \cap Q = \pre{t_k} \cap Q =
  \emptyset.$$ Thus, there exists some $C' \in \pop{Q}$ satisfying
  $\hat{C}_n \trans{T(x_1) T(x_2) \cdots T(x_n)} C'$ and
  $\supp{\hat{C}_n} \cap Q \subseteq \supp{C'}$.

  Let
  $\pi \colon \hat{C}_0 \trans{\hat{t}_1} \hat{C_1} \trans{\hat{t_2}} \cdots \trans{\hat{t}_n} \hat{C}_n \trans{T(x_1)
  T(x_2) \cdots T(x_n)} C'$. Execution $\pi$ can be ``projected'' by
  removing the timestamps of its configurations and transitions. By
  definition of augmented executions, this projection yields an
  execution from $C_0$ to $C'$ in $\PP'$, which proves the claim.
\end{proof}

\begin{corollary}
  Property \ref{prop_5} holds.
\end{corollary}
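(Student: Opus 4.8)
The plan is to obtain Property~\ref{prop_5} as an essentially immediate consequence of Lemma~\ref{lemma_exec_extension}, combined with Property~\ref{prop_4}, which has already been established. Following the proof of the Property~\ref{prop_4} corollary, I would instantiate the simulation notion with $\PP_1 = \PP$ and $\PP_2 = \PP'$, so that $\pop{Q_1} = \pop{Q}$ and $\pop{Q_2} = \pop{Q'}$, and $\trans{*}_1$ and $\trans{*}_2$ become reachability in $\PP$ and in $\PP'$, respectively. Under this instantiation, Property~\ref{prop_5} unfolds to the following claim: for every $C \in \pop{Q}$ and every $C' \in \pop{Q'}$ with $C \trans{*}_{\PP'} C'$, there is a configuration $C'' \in \pop{Q}$ such that $C' \trans{*}_{\PP'} C''$ and $C \trans{*}_{\PP} C''$.

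First I would fix such a $C$ and $C'$ and choose a witnessing $\PP'$-execution $C = C_0 \trans{t_1} C_1 \trans{t_2} \cdots \trans{t_n} C_n = C'$ with $t_1, \ldots, t_n \in T'$. Since $C_0 = C \in \pop{Q}$, the hypotheses of Lemma~\ref{lemma_exec_extension} are met, so the lemma supplies a configuration $C'' \in \pop{Q}$ with $C_n \trans{*}_{\PP'} C''$, that is, $C' \trans{*}_{\PP'} C''$. (The lemma additionally guarantees $Q \cap \supp{C'} \subseteq \supp{C''}$ — this is the ``slightly stronger'' part mentioned before the lemma — but it is not needed here.) Next I would concatenate the two $\PP'$-executions to get $C \trans{*}_{\PP'} C' \trans{*}_{\PP'} C''$, hence $C \trans{*}_{\PP'} C''$. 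Finally, since both $C$ and $C''$ lie in $\pop{Q}$, Property~\ref{prop_4} converts this $\PP'$-reachability into $C \trans{*}_{\PP} C''$, and $C''$ is then exactly the witness required by Property~\ref{prop_5}.

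I do not expect any real obstacle at this stage: all the substantive work has already been done, in Lemma~\ref{lemma_exec_extension} (extending an arbitrary $\PP'$-execution to one ending in a configuration with no helper states, via the timestamp-annotated run and the case analysis of Proposition~\ref{prop_welldefined}) and in Property~\ref{prop_4} (which packages both directions of the simulation of $k$-way transitions by $2$-way gadgets). The only points worth being careful about are checking that the ``starts in $\pop{Q}$'' hypothesis of Lemma~\ref{lemma_exec_extension} is satisfied — it is, because $C \in \pop{Q}$ by assumption — and not dragging the support inclusion through the argument, since it is irrelevant to the statement of Property~\ref{prop_5}.
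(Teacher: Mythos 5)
Your proof is correct and follows exactly the route the paper intends: it applies Lemma~\ref{lemma_exec_extension} to extend the given execution to a configuration $C'' \in \pop{Q}$ and then invokes Property~\ref{prop_4} to transfer $C \trans{*}_{\PP'} C''$ back to $\PP$, which is precisely the argument the paper sketches in the sentence preceding the lemma. Nothing is missing.
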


It remains to show that $\PP'$ is well-specified if $\PP$ is
well-specified.

\begin{proposition}
  If $\PP$ is well-specified, then $\PP'$ is also well-specified.
\end{proposition}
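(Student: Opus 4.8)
The plan is to analyze, for a fair execution $\pi'$ of $\PP'$ starting from an initial configuration $C_0 \in \pop{I}$ (note $C_0 \in \pop{Q}$), the set $\mathcal{I}$ of configurations that $\pi'$ visits infinitely often, and to prove that every configuration of $\mathcal{I}$ is a consensus configuration carrying the output that $\PP$ assigns to $C_0$; since $\pi'$ eventually stays inside $\mathcal{I}$, this makes $\pi'$ stabilize, with an output independent of $\pi'$, which is exactly well-specification of $\PP'$ (and, as a byproduct, equality of predicates).

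First I would establish the structure of $\mathcal{I}$. Because every transition preserves the number of agents and $Q'$ is finite, only finitely many configurations are reachable from $C_0$, so $\mathcal{I} \neq \emptyset$; fairness makes $\mathcal{I}$ closed under $\trans{*}_{\PP'}$, and since $\pi'$ is a single execution any two configurations of $\mathcal{I}$ are mutually reachable, so $\mathcal{I}$ is a bottom strongly connected component: $\mathcal{I} = \{C'' : C \trans{*}_{\PP'} C''\}$ for every $C \in \mathcal{I}$. Applying Lemma~\ref{lemma_exec_extension} to any configuration of $\mathcal{I}$ (each is reachable from $C_0 \in \pop{Q}$) shows $\mathcal{I}$ contains a configuration $C \in \pop{Q}$; fix such a $C$.

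Next I would treat the pure part $S \defeq \mathcal{I} \cap \pop{Q}$. By Property~\ref{prop_4}, $S$ is precisely the set of $\PP$-configurations reachable from $C$ in $\PP$, and it is strongly connected and closed under $\trans{*}_{\PP}$, and reachable from $C_0$ in $\PP$. I would then build a fair execution $\sigma$ of $\PP$ from $C_0$ that first reaches $S$ and thereafter cycles round-robin through all the finitely many configurations of $S$ forever; closure of $S$ under $\trans{*}_{\PP}$ makes $\sigma$ fair. Since $\PP$ is well-specified, $\sigma$ stabilizes to some $b \in \{0,1\}$ depending only on $C_0$, and because $\sigma$ meets every configuration of $S$ infinitely often, every configuration of $S$ is a consensus configuration with output $b$.

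The hard part is to upgrade this to: every $D \in \mathcal{I}$, \emph{including} those containing helper states, is a consensus configuration with output $b$. For a genuine state $q \in Q \cap \supp{D}$, Lemma~\ref{lemma_exec_extension} drives $D$ to some $C' \in S$ with $q \in \supp{C'}$, so $O'(q) = O(q) = b$. For a helper state belonging to a simulation of a transition $t : q_1, \dots, q_k \mapsto r_1, \dots, r_k$ — where $O'(d_i^t) = O'(a_i^t) = O(q_i)$ and $O'(b_i^t) = O(r_i)$ — I would use Proposition~\ref{prop_welldefined} to determine the exact shape of the ongoing simulation, and then maneuver inside $\mathcal{I}$ to expose the relevant genuine state: fire the next $\mathrm{back}$-transition of that simulation to regenerate $r_i$ (for a $b_i^t$-state); run $\mathrm{forth}^{-1}$-transitions to regenerate $q_i$ if the simulation is still in its forward phase, or else invoke the configuration on a path $C \trans{*}_{\PP'} D$ at which $q_i$ was originally consumed — which lies in $\mathcal{I}$ because the whole path does — for a $d_i^t$- or $a_i^t$-state. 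The resulting configuration is still in $\mathcal{I}$ and contains the genuine state, so the previous case gives $O(q_i) = b$ (resp.\ $O(r_i) = b$), and hence the helper state also has $O'$-value $b$. The genuinely delicate situation is a simulation in its backward phase, which cannot be reversed; the two maneuvers ``regenerate $r_i$'' and ``reuse the forward history of $q_i$'' are designed exactly for this case, and Proposition~\ref{prop_welldefined} guarantees no other shapes of helper-laden configurations can occur, so this completes the argument.
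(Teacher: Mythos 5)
Your proof is correct, but it takes a genuinely different route from the paper's. The paper argues by contradiction: it takes a fair execution of $\PP'$ that fails to stabilize (or two that disagree), uses Properties~\ref{prop_4} and~\ref{prop_5} plus fairness to locate along it a configuration $C_i \in \pop{Q}$ that is stable in $\PP$, and then observes that the first subsequent change of output must be a $\mathrm{success}^t$ step producing a genuine state $q$ of the wrong output; Lemma~\ref{lemma_exec_extension} and Property~\ref{prop_4} then make $q$ reachable from $C_i$ inside $\PP$, contradicting stability. You instead argue directly: you characterize the infinitely-often-visited set $\mathcal{I}$ as a bottom SCC, identify $\mathcal{I} \cap \pop{Q}$ with a terminal SCC of $\PP$ via Property~\ref{prop_4}, extract the uniform output $b$ from well-specification of $\PP$, and then propagate $b$ to every helper state through the explicit output assignment $O'(d_i^t) = O'(a_i^t) = O(q_i)$, $O'(b_i^t) = O(r_i)$. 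Your approach buys more: it delivers well-specification and equality of the computed predicates in one stroke, and it sidesteps the paper's slightly delicate claim that helper-to-helper moves never change outputs (which is not literally true for $\mathrm{back}_\ell^t$, where $d_\ell^t \mapsto b_\ell^t$ changes the output from $O(q_\ell)$ to $O(r_\ell)$); the price is the explicit case analysis over the shapes given by Proposition~\ref{prop_welldefined}. The one step you should flesh out is the backward-phase $d_i^t$ case: ``the configuration at which $q_i$ was originally consumed'' exists, but since $(\mathrm{forth}_i^t)^{-1}$ can also produce $a_i^t$ without touching $q_i$, you need a short induction showing that the \emph{first} creation of $a_i^t$ within a given simulation attempt must come from $\mathrm{forth}_{i-1}^t$ (whose pre-multiset contains $q_i$), so that some configuration of $\mathcal{I}$ on the path from the helper-free $C$ to $D$ indeed contains $q_i$.
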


\begin{proof}
  Let $\PP$ be a well-specified $k$-way protocol. For contradiction
  assume the simulating protocol $\PP'$ was not well-specified. This
  means either of two things must hold:
  \begin{itemize}
  \item There exist two fair executions $\pi_1$ and $\pi_2$ starting
    in the same initial configuration and such that $O(\pi_1) \neq
    O(\pi_2)$.
  \item There exists a fair execution $\pi$ starting in an initial
    configuration such that $O(\pi) = \bot$.
  \end{itemize}
  We only show that the validity of the second claim leads to a
  contradiction. The proof can easily be adapted to arrive at a
  contradiction for the first claim. Assume there exists a fair
  execution $\pi = C_0 C_1 C_2 \cdots$ of $\PP'$ starting in some
  initial configuration $C_0$ and such that $O(\pi) = \bot$. Due to
  well-specification of $\PP$, Proposition~\ref{prop_5}
  and~\ref{prop_4} and fairness, we know this execution will reach a
  configuration $C_i$ that is stable in $\PP$. Let $i \in \N$ be the
  smallest such index. Moreover, let $j$ be the smallest index larger
  than $i$ such that $O(C_j) \neq O(C_i)$. Since $\pi$ does not
  stabilize, such an index $j$ must exist. Observe that whenever an
  agent changes from a non-helper state to a helper-state, or from a
  helper-state to a helper-state, outputs do not change. Thus, it must
  hold that $C_{j-1} \trans{\text{success}^t} C_j$ for some $t \in T$,
  for only in this case an agent changes from a helper state to some
  non-helper state $q$ of output $O(q) \neq O(C_i)$. By
  Lemma~\ref{lemma_exec_extension}, there exists some configuration
  $C' \in \pop{Q^\PP}$ such that $C_i \trans{*}_{\PP'} C'$ and $q \in
  \supp{C'}$. From this and by Property~\ref{prop_4}, we have $C_i
  \trans{*}_{\PP} C'$. But $O(C') \neq O(C_i)$, which contradicts our
  assumption that $C_i$ is stable in $\PP$.
\end{proof}

\section{Detailed proofs of Section~\ref{sec:leaderless}}\label{app:leaderless}
\thmExistBound*

\begin{proof}
  We first show that for every finite family $\{\PP_0, \PP_1, \ldots,
  \PP_n\}$ of 2-way population protocols computing the predicates $\{x
  \geq 0, x \geq 1, \ldots, x \geq n\}$ there exists $0 \leq j \leq n$
  such that $\PP_j$ has at least $(\log n)^{1/4}$ states. For this, we
  prove an equivalent statement: 2-way protocols with at most $m$
  states can compute at most $2^{m^4}$ unary predicates.

  Let $d(m)$ be the number of unary predicates computed by 2-way
  population protocols with at most $m$ states. Every protocol with
  less than $m$ states can be extended to a protocol with $m$ states
  computing the same predicate, and so in order to bound $d(m)$ it
  suffices to consider protocols with exactly $m$ states.  Further,
  for the same reason, we only consider protocols containing all
  possible silent transitions, \ie, all transitions of the form $x,
  y \mapsto x, y$. Such a protocol is completely determined by its set
  of non-silent transitions, its initial state, and its output
  mapping. Since the number of sets of non-silent transitions is
  bounded by $2^{m^4 - m^2}$, the number of initial states by $m$, and
  the number of output mappings by $2^m$, there are at most $2^{m^4 -
  m^2} \cdot m \cdot 2^m$ such protocols. Altogether we obtain: 
  \begin{align*} d(m)
  &\leq 2^{m^4 - m^2} \cdot m \cdot 2^m  =
  2^{m^4} \cdot \frac{2^m \cdot m}{2^{m^2}}  \leq
  2^{m^4}.  \end{align*}
  
  Now we prove the theorem. Let $\{\PP_0, \PP_1, \ldots\}$ be an
  infinite family of 2-way protocols such that $\PP_i$ computes $x
  \geq i$ for every $i \in \N$. By the above result, for every $n \geq
  0$ there is $j_n \leq n$ such that $\PP_{j_n}$ has at least $(\log
  n)^{1/4} \geq (\log j_n)^{1/4}$ states. It remains to prove that the
  set $\{ j_0, j_1, \ldots \}$ is infinite. Let $m_i$ be the number of
  states of $\PP_{j_i}$. Since $\lim_{i\rightarrow \infty} m_i =
  \infty$, we can extract from the sequence $m_0, m_1, \ldots$ a
  strictly increasing subsequence $m_{n_1} < m_{n_2} < \cdots$. Thus,
  the indices $j_{n_1}, j_{n_2}, \ldots$ are all distinct, and we are
  done.
\end{proof}

\section{Detailed proofs of Section~\ref{sec:loglog:upper}}\label{app:loglog:upper}
\lemSemigroupSim*

\begin{proof}
  For~(a), the only reason why
  $\mathrm{pad}(p_1) \cdots \mathrm{pad}(p_k)$ could not occur from
  $C_{\alpha, m}$ is that this configuration may not have enough
  agents in state $x$. By~(1), the left hand side of every transition
  $\mathrm{pad}(p_i)$ removes at most $4$ agents from state $x$, and
  so $\mathrm{pad}(p_1) \cdots \mathrm{pad}(p_k)$ can occur for any
  $m \geq 4k$. Item~(b) follows immediately from the definitions.
\end{proof}

\thmLoglogUpperbound*

\begin{proof}
  We first show that $\PP_n$ is well-specified. Let $C_0$ be an
  initial configuration. We make a case distinction on whether $f_n$
  is coverable from $C_0$ or not.\medskip

  \noindent\emph{Case 1: $f_n$ is coverable.} Let $\pi = C_0 C_1
  \cdots$ be a fair execution. We claim that $C_i(f_n) > 0$ for
  infinitely many indices $i$. The claim proves the case since
  fairness and transitions of $T_n^2$ ensure that all agents
  eventually remain in $f_n$, and hence that $O(\pi) = 1$.

  For the sake of contradiction, assume the claim does not hold. Let
  $i \in \N$ be the minimal index such that $C_{i}(f_n) = C_{i+1}(f_n)
  = \cdots = 0$. If $i = 0$, then $\pi$ only consists of transitions
  of $T_n^1$. By assumption, $C_0 \trans{*} C$ for some configuration
  $C$ such that $C(f_n) > 0$. Note that $C$ does not occur in $\pi$.
  We make use of the reversibility property of $\semig{n}$. Since
  $\alpha \trans{*} \beta$ if and only if $\beta \trans{*} \alpha$ in
  $\semig{n}$, by Lemma~\ref{lem:semigroup:sim} we have $C_j \trans{*}
  C_0 \trans{*} C$ for every $j \in \N$, which contradicts $\pi$ being
  fair. Therefore, we must have $i > 0$. Let $\sigma_j$ be the
  sequence from $C_{i-1}$ to $C_j$ in $\pi$, for every $j \geq
  i$. Note that $C_{i-1}$ only occurs finitely often in
  $\pi$. Moreover, each $\sigma_j$ only contains transitions from
  $T_n^1$. Therefore, using reversibility again, we obtain $C_j
  \trans{*} C_{i-1}$ for every $j \geq i$. We derive a contradiction
  since, by fairness, $C_{i-1}$ should occur infinitely often in
  $\pi$.\medskip

  \noindent\emph{Case 2: $f_n$ is not coverable.} Let $\pi = C_0 C_1
  \cdots$ be a fair execution. Suppose $O(\pi) \neq 0$. As $f_n$ is
  the only state with output 1, there exists $i \in \N$ such that
  $C_i(f_n) > 0$. Since $C_i$ is reachable from $C_0$, state $f_n$ is
  coverable from $C_0$. This is a contradiction and hence $O(\pi) =
  0$.\medskip

  It remains to prove that $\PP_n$ computes $x \geq c_n$ for some
  number $c_n \geq 2^{2^n}$. By~(3) and Lemma~\ref{lem:semigroup:sim},
  state $f_n$ is coverable from some initial configuration $C_0$. By
  the above case 1, $O(C_0) = 1$. Let $C_0$ be the smallest such
  configuration. By~(3) and Lemma~\ref{lem:semigroup:sim}, we have
  $|C_0| \geq 2^{2^n}$. Moreover, state $f_n$ is coverable from every
  configuration larger that $C_0$. Thus, by the above case 1, we have
  $O(C_0') = 1$ for every initial configuration $C_0'$ such that
  $|C_0'| \geq |C_0|$. Therefore, the protocol computes the predicate
  $x \geq |C_0|$ where $|C_0| \geq 2^{2^n}$.
\end{proof}

\corLoglogUpperbound*

\begin{proof}
  Let $n \in \N$ and let $\PP_n' = (Q_n', {T_n^1}' \cup {T_n^2}', I_n',
  L_n', O_n')$ be the protocol of
  Theorem~\ref{thm:loglog:upperbound}. By applying
  Lemma~\ref{lem:kway} to $\PP_n'$ we obtain a 2-way protocol $\PP_n =
  (Q_n, T_n, I_n, L_n, O_n)$ such that
  \begin{itemize}
  \item $|Q_n| = |Q_n'| + 3 \cdot 5 \cdot |{T_n^1}'| \leq (14n + 11) +
    (300n + 120) = 314n + 131$,
  \item $\PP_n$ computes the same predicate as $\PP_n'$, \ie $x \geq
    c_n$ for some $c_n \geq 2^{2^n}$. \qedhere
  \end{itemize}
\end{proof}

\section{Monotonic predicates and 1-awareness}\label{app:monotonic}
In this section, we relate $1$-aware protocols to monotonic predicates.

\begin{definition}
  Let $n \in \N$ and let $\varphi \subseteq \N^n$ be an $n$-ary
  predicate. We say $\varphi$ is \emph{monotonic} if and only if
  $(\mathbf{y} \geq \mathbf{x} \land \varphi(\mathbf{x})) \implies
  \varphi(\mathbf{y})$ for every $\mathbf{x}, \mathbf{y} \in \N^n$.
\end{definition}

\begin{proposition}\label{prop_montonicity_disjunction}
  For every monotonic predicate $\varphi \subseteq \N^n$ of arity $n
  \in \N$ there exists a finite family of thresholds
  $\set{\mathbf{c}_1, \ldots, \mathbf{c}_m} \subseteq \N^n$ such that
  $$\varphi(\mathbf{x}) \Longleftrightarrow \bigvee_{1 \leq i \leq m} \mathbf{x} \geq \mathbf{c}_i.$$
\end{proposition}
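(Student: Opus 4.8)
The plan is to identify the predicate $\varphi$ with the set $U \defeq \{\mathbf{x} \in \N^n : \varphi(\mathbf{x})\}$ and observe that monotonicity says exactly that $U$ is \emph{upward closed} under the componentwise order $\leq$ on $\N^n$: if $\mathbf{x} \in U$ and $\mathbf{y} \geq \mathbf{x}$, then $\mathbf{y} \in U$. The statement then reduces to the classical fact that every upward-closed subset of $\N^n$ is a finite union of ``cones'' $\{\mathbf{x} \in \N^n : \mathbf{x} \geq \mathbf{c}\}$, which follows from Dickson's lemma.

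Concretely, I would let $M$ be the set of $\leq$-minimal elements of $U$ and first argue that $M$ is finite. Indeed, $M$ is an antichain, so if it were infinite it would, by Dickson's lemma (the componentwise order on $\N^n$ is a well-quasi-order), contain two distinct elements $\mathbf{c}, \mathbf{c}'$ with $\mathbf{c} \leq \mathbf{c}'$, contradicting minimality of $\mathbf{c}'$. Write $M = \{\mathbf{c}_1, \ldots, \mathbf{c}_m\}$; in the degenerate case $U = \emptyset$ we take $m = 0$, so that the right-hand side is the empty disjunction, which is identically false, as required.

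It then remains to verify the two implications. For ``$\Leftarrow$'', if $\mathbf{x} \geq \mathbf{c}_i$ for some $i$, then since $\mathbf{c}_i \in U$ and $U$ is upward closed, $\mathbf{x} \in U$, i.e. $\varphi(\mathbf{x})$. For ``$\Rightarrow$'', suppose $\varphi(\mathbf{x})$, i.e. $\mathbf{x} \in U$. The set $\{\mathbf{y} \in U : \mathbf{y} \leq \mathbf{x}\}$ is nonempty and finite (it sits inside $\{0, 1, \ldots, \max_j \mathbf{x}(j)\}^n$), hence has a $\leq$-minimal element $\mathbf{c}$, which is then minimal in all of $U$, so $\mathbf{c} = \mathbf{c}_i$ for some $i$ and $\mathbf{x} \geq \mathbf{c}_i$. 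The only mathematical content here is Dickson's lemma (equivalently, well-foundedness of $\leq$ on $\N^n$ together with the absence of infinite antichains); everything else is bookkeeping, so the only ``obstacle'' is to present the reduction to Dickson's lemma cleanly and to dispose of the empty-predicate boundary case, which is why I handle the empty-disjunction convention up front.
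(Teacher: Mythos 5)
Your proof is correct and follows essentially the same route as the paper: identify $\varphi$ with an upward-closed subset of $\N^n$, invoke Dickson's lemma to obtain finitely many minimal elements, and take these as the thresholds. You merely spell out the verification of the two implications and the empty-predicate case, which the paper leaves implicit.
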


\begin{proof}
  By the very definition of monotonicity, the set $\set{\mathbf{x} :
    \varphi(\mathbf{x})}$ is upwards-closed w.r.t. $\leq$ and thus has
  a finite number $m$ of minimal elements by Dickson's lemma. Picking
  these minimal elements $\mathbf{c}_1, \ldots, \mathbf{c}_m$ as the
  finite family of thresholds then yields the claim to be shown.
\end{proof}

\begin{lemma}
  Let $n \in \N$ and let $\varphi$ be some $n$-ary predicate
  computable by a population protocol. Predicate $\varphi$ is
  computable by a $1$-aware protocol if and only if $\varphi$ is
  monotonic.
\end{lemma}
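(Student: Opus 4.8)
The plan is to prove the two implications separately; the forward one ($1$-awareness implies monotonicity) is short, and the backward one calls for an explicit construction.

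For ``$1$-aware $\Rightarrow$ monotonic'', I would first observe that the argument behind Lemma~\ref{lem:simple1aware} extends verbatim to $n$-ary predicates: if $\PP = (Q, T, I, L, O)$ is $1$-aware with witness set $Q_1$, and $C_{\vec{x}}$ denotes the configuration with $C_{\vec{x}}(\iota_j) = \vec{x}_j$ for the initial states $\iota_1, \dots, \iota_n$, then $\varphi(\vec{x}) = 1$ iff some state of $Q_1$ is coverable from $C_{\vec{x}} \mplus L$. The ``only if'' part is clause~(2) of Definition~\ref{def:1aware}; for the ``if'' part one extends a finite execution reaching a configuration that covers $Q_1$ to a fair execution, which by clause~(1) cannot stabilize to $0$, so $\varphi(\vec{x}) = 1$ by well-specification. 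Then I would invoke the elementary monotonicity of coverability --- adding agents never disables a run --- to conclude: if $\vec{y} \geq \vec{x}$ and $\varphi(\vec{x}) = 1$, then a state of $Q_1$ is coverable from $C_{\vec{x}} \mplus L$, hence from $C_{\vec{y}} \mplus L \geq C_{\vec{x}} \mplus L$, hence $\varphi(\vec{y}) = 1$. So $\varphi$ is monotonic.

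For ``monotonic $\Rightarrow$ $1$-aware'', I would use Proposition~\ref{prop_montonicity_disjunction} to write $\varphi(\vec{x}) = 1 \iff \bigvee_{i=1}^{m} \vec{x} \geq \vec{c}_i$ for thresholds $\vec{c}_1, \dots, \vec{c}_m \in \N^n$ (the degenerate cases $m = 0$ and some $\vec{c}_i = \vec{0}$ give constant predicates and are dealt with directly), and then build one $2$-way protocol $\PP_\varphi$. Its initial states are $\iota_1, \dots, \iota_n$; for each $i$ it carries a private ``collector'' leader whose states $\ell_{i,\vec{k}}$, for $\vec{0} \leq \vec{k} \leq \vec{c}_i$, record how many agents of each kind it has gathered so far (starting at $\ell_{i,\vec{0}}$), private ``consumed'' states $\bot_{i,1}, \dots, \bot_{i,n}$, and a distinguished ``win'' state $\mathrm{win}_i \defeq \ell_{i,\vec{c}_i}$; the collection transitions $\ell_{i,\vec{k}}, \iota_j \mapsto \ell_{i,\vec{k}+e_j}, \bot_{i,j}$ (whenever $\vec{k}+e_j \leq \vec{c}_i$, with $e_j$ the $j$-th unit vector) are made reversible by adding their inverses. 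A single extra state $\mathrm{yes}$ with transitions $p, q \mapsto \mathrm{yes}, \mathrm{yes}$ for all $q \in Q$ and all $p \in \{\mathrm{yes}\} \cup \{\mathrm{win}_i : i \in [m]\}$ floods the population once any collector wins. I take $Q_1$ (which is also the set of output-$1$ states) to be $\{\mathrm{yes}\} \cup \{\mathrm{win}_i : i \in [m]\}$ and complete all missing pairs with silent transitions; note $Q_1 \cap (I \cup \supp{L}) = \emptyset$, since neither an $\iota_j$ nor an $\ell_{i,\vec{0}}$ is a win or yes state.

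The correctness argument then hinges on the claim that $\mathrm{win}_i$ is coverable from $C_{\vec{x}} \mplus L$ if and only if $\vec{x} \geq \vec{c}_i$. ``$\Leftarrow$'' is immediate (let collector $i$ gather $\vec{c}_i$ agents, ignoring the others); ``$\Rightarrow$'' follows from the invariant that in every reachable configuration the number of agents in $\bot_{i,j}$ equals the $j$-th amount recorded by collector $i$, combined with conservation of the mass originating from each variable, so that reaching $\ell_{i,\vec{c}_i}$ forces $\vec{x}_j \geq (\vec{c}_i)_j$ for every $j$. Given the claim, reversibility guarantees that when $\varphi(\vec{x}) = 1$ some $\mathrm{win}_i$ is not merely coverable but repeatedly reachable, hence reached by fairness, after which flooding and fairness drive every agent into $Q_1$; and when $\varphi(\vec{x}) = 0$ no win state --- hence no yes state --- is ever reachable, so every reachable configuration has output $0$ and $C_i(Q_1)=0$ throughout. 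This at once shows $\PP_\varphi$ is well-specified, computes $\varphi$, and is $1$-aware. I expect the main obstacle to be precisely this claim, i.e.\ controlling the interference between the $m$ collectors, which all draw agents from the same input pool: the point of the private consumed states together with reversible transitions is that partial progress can always be undone and the pool restored, which is what both the ``$\Rightarrow$'' of the claim and the ``repeatedly reachable'' step rely on. (A more modular alternative would combine flock-of-birds protocols for the atoms $\vec{x}_j \geq (\vec{c}_i)_j$ through the standard conjunction and disjunction product constructions, but then the obstacle resurfaces as the need to check that these products preserve $1$-awareness, which amounts to relating fairness of the product to fairness of its factors.)
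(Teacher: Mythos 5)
Your proof is correct, and the forward direction (1-aware $\Rightarrow$ monotonic) is essentially the paper's argument: coverability of $Q_1$ characterizes acceptance, and coverability is monotone in the initial configuration. For the converse, however, you take a genuinely different route. The paper, after invoking the same decomposition of $\varphi$ into a finite disjunction $\bigvee_i \vec{x} \geq \vec{c}_i$ (Proposition~\ref{prop_montonicity_disjunction}), simply asserts that threshold predicates are computable by $1$-aware protocols and that $1$-aware protocols are closed under disjunction; it gives no construction and no proof of either sub-claim. You instead build a single explicit $2$-way protocol with $m$ reversible ``collector'' leaders drawing from a shared input pool, plus an absorbing flooding state, and you prove the key coverability claim directly via a conservation invariant on the consumed states. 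The trade-off is exactly the one you identify in your closing parenthetical: the paper's modular phrasing is shorter but hides the nontrivial verification that the standard disjunction product preserves $1$-awareness (relating fairness of the product to fairness of its factors and exhibiting a suitable $Q_1$ for the product), whereas your monolithic construction must instead control interference between the $m$ collectors, which your reversibility-plus-invariant argument handles cleanly: before any win state is covered no flooding can have occurred, so the invariant is intact at the first win, and reversibility guarantees that the all-\textsl{yes} configuration remains reachable from every reachable configuration, so fairness forces it. Your explicit treatment of the degenerate cases ($m = 0$, $\vec{c}_i = \vec{0}$) and of the requirement $Q_1 \cap (I \cup \supp{L}) = \emptyset$ is also more careful than the paper's. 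In short: same skeleton (coverability monotonicity one way, Dickson's lemma decomposition the other), but your converse direction is self-contained where the paper's is only a sketch.
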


\begin{proof}
  We first show that if $\PP$ is $1$-aware, then the predicate
  $\varphi$ computed by $\PP$ is monotonic. Let $C_0, C_0'$ be initial
  configurations such that $\varphi\left(C_0\right)$ holds and $C_0
  \leq C_0'$. We must show that $\varphi(C_0')$ holds. Let $Q_1
  \subseteq Q$ be the subset of states that makes $\PP$
  $1$-aware. Since $\varphi(C_0)$ holds, there exists $q \in Q_1$ and
  a configuration $C$ such that $C_0 \trans{*} C$ and $q \in
  \supp{C}$. Since $C_0' \geq C_0$, we have $C_0' \trans{*} C'$ for
  some $C' \geq C$. This implies that $q \in \supp{C'}$. By
  $1$-awareness of $\PP$, we conclude that $\varphi(C_0')$ holds.

  For the converse direction, assume $\varphi$ is a monotonic
  predicate computable by a population protocol. By
  Proposition~\ref{prop_montonicity_disjunction}, we may assume
  $\varphi$ is a finite disjunction of predicates of the form
  $\mathbf{x} \geq c_i$ for some thresholds $c_i$. As
  threshold-predicates can be computed by $1$-aware protocols and
  $1$-aware protocols are closed under disjunction, $\varphi$ is
  computable by a $1$-aware protocol, and we are done.
\end{proof}

\section{Detailed proofs of Section~\ref{sec:one:aware}}\label{app:one:aware}
\lemSimpleOneAware*

\begin{proof}
  Let $n \in \N$ and  let $C_0 \defeq \multiset{n \cdot x}+L$.

  $\Rightarrow$) Let $\pi = C_0 C_1 \cdots$ be a fair execution. Since
  $\PP$ computes $\varphi$, we have $O(\pi) = \varphi(n) = 1$. By
  condition~(2) of the definition of 1-awareness, $C_j(Q_1) > 0$ for
  some $j \in \N$. We are done since $C_0 \trans{*} C_j$.

  $\Leftarrow$) We have $C_0 \trans{} C_1 \trans{} \cdots \trans{} C_n
  = C$ for some configurations $C_1, C_2, \ldots, C_n$. Let $\pi = C_0
  C_1 \cdots C_n \cdots$ be any fair execution extending this finite
  sequence. By condition~(1) of the definition of 1-awareness,
  $O(\pi) \neq 0$, and hence $O(\pi) = 1$.
\end{proof}

\propShortSaturation*

\begin{proof}
  Let $c \defeq k+1$. We prove a stronger claim: $C_1'$, $C_2'$, and
  $\pi'$ can be chosen so that they satisfy~($a$), ($b$), and a
  stronger property: ($d$) there is a sequence $t_1, t_2, \ldots, t_n$
  of transitions of $\supp{\pi}$ such that $\pi' = t_1^{c^{n-1}}
  t_2^{c^{n-2}} \cdots t_n$ and $n \leq |\post{\{t_1, \ldots, t_n\}}|$.

  We proceed by induction on $|\pi|$. If $|\pi| = 0$, then $\supp{\pi}
  = \emptyset$ and $C_1 = C_2$. Thus, the claim is satisfied by
  $\pi' \defeq \epsilon$ and the configurations $C_1'$ and $C_2'$ such
  that for every $q \in Q$,
  $$
  C_1'(q) \defeq C_2'(q) \defeq
  \begin{cases}
    1 & \text{if } q \in \supp{C_1}, \\
    0 & \text{otherwise}.
  \end{cases}
  $$ Assume that $|\pi| > 0$ and that the claim holds for sequences of
  length less than $|\pi|$. There exist $\sigma \in T^*$, $t \in T$
  and a configuration $D$ such that $\pi = \sigma t$ and $C_1
  \trans{\sigma} D \trans{t} C_2$. By induction hypothesis, there
  exists an execution $C_1'' \trans{\pi''} D''$ such that ($a'$)
  $\supp{C_1''} = \supp{C_1}$, ($b'$) $\supp{D''} = \supp{C_1''} \cup
  \post{\supp{\sigma}}$, and ($d'$) $\pi'' = t_1^{c^{m-1}}
  t_2^{c^{m-2}} \cdots t_m$ for a sequence $t_1, t_2, \ldots, t_m$ of
  transitions of $\supp{\sigma}$ satisfying $m \leq |\post{\{t_1,
    \ldots, t_m\}}|$.
 
  If $\post{t} \subseteq \post{\{t_1, \ldots, t_m\}}$, then we can
  take $C_1' \defeq C_1''$, $C_2' \defeq D''$, and
  $\pi' \defeq\pi''$. So assume
  $\post{t} \not\subseteq \post{\{t_1, \ldots, t_m\}}$. Since
  $C_1 \trans{\sigma} D$, we have
  $\supp{D} \subseteq \supp{C_1} \cup \post{\supp{\sigma}}$, and so,
  since $\supp{C_1} \cup \post{\supp{\sigma}}= \supp{D''}$, by ($a'$)
  and ($b'$), we get $\supp{D} \subseteq \supp{D''}$. Thus, since $t$
  is enabled at $D$ and, by the definition of $c$, it involves at most
  $c - 1$ agents, $t$ is also enabled in $(c - 1) \cdot
  D''$. Moreover, by ($d'$) we have $$c \cdot C_1'' \trans{t_1^{c^m}
  t_2^{c^{m-1}} \cdots t_m^c} c \cdot D''.$$ So, since $c > 1$, we
  obtain $$c \cdot C_1'' \trans{t_1^{c^m} t_2^{c^{m-1}} \cdots t_m^c}
  c \cdot D'' \trans{t} (D'' + E)$$ for some configuration $E$.
  Taking $C_1' \defeq c \cdot C_1''$, $C_2' \defeq D'' + E$, and $\pi'
  = t_1^{c^m} t_2^{c^{m-1}} \cdots t_m^c t$ we have $C_1' \trans{\pi'}
  C_2'$. We prove that $C_1'$, $C_2'$, and $\pi'$ satisfy ($a$),
  ($b$), and ($d$):\medskip

  \noindent($a$) We must show $\supp{C_1'} = \supp{C_1}$. It follows
  from
    $$
      \supp{C_1'} = \supp{c \cdot C_1''} = \supp{C_1''}
      \stackrel{(a')}{=} \supp{C_1}.
    $$

  \noindent($b$) We must show $\supp{C_2'} = \supp{C_1} \cup
  \post{\supp{\pi}}$. It follows from
  \begin{multline*}
    \supp{C_2'} = \supp{D'' + E} = \supp{D''} \cup \supp{E}
    \stackrel{(b')}{=} \supp{C_1''} \cup \post{\supp{\sigma}} \cup
    \supp{E} = \supp{C_1''} \cup \post{\supp{\sigma}} \cup \post{t} = \\
    \supp{C_1''} \cup \post{\supp{\sigma} \cup \{t\}} = \supp{C_1''}
    \cup \post{\supp{\sigma t}} \stackrel{(a')}{=} \supp{C_1} \cup
    \post{\supp{\pi}}.
  \end{multline*}

  \noindent($d$) We must show that $\pi' = t_1^{c^{m-1}}
  t_2^{c^{m-2}} \cdots t_m^c t$, where $t_1, \ldots, t_m, t$ belong to
  $\supp{\pi}$, and $m+1 \leq |\post{\{t_1, \ldots, t_m, t\}}|$. \\

  Since $t_1, \ldots, t_m$ belong to $\supp{\sigma}$ by~($d'$), the
  transitions $t_1, \ldots, t_m, t$ belong to $\supp{\sigma t} =
  \supp{\pi}$. Further, we have $m + 1 \leq |\post{\{t_1, \ldots,
    t_m\}}| + 1$ by~($d'$), and $|\post{\{t_1, \ldots, t_m\}}| + 1
  \leq |\post{\{t_1, \ldots, t_m, t\}}|$ because, by assumption,
  $\post{t}\not\subseteq \post{\{t_1, \ldots, t_m\}}$.
\end{proof}

\propRackoff*

\begin{proof}
  Let $Q = \{q_1, q_1, \ldots, q_n\}$, and let $b$ be a fresh symbol
  not contained in $Q$. We associate to
  $\PP$ a set $A \subseteq \Z^{|Q|+|T|+1}$. 
  The set $A$ contains two vectors $\vec{v}_t^1, \vec{v}_t^2$ 
  for every transition $t \in T$, defined as functions 
  $Q\cup T \cup \set{b} \rightarrow \Z$ in the following way:
  $\vec{v}_t^1(q) = - \prem{t}(q)$ for all $q \in Q$, 
  $\vec{v}_t^1(t') = 1 \mbox{ if $t=t'$ else } 0$ for all $t \in T$, 
  and $\vec{v}_t^1(b) = -1$; $\vec{v}_t^2(q) = \postm{t}(q)$ for all $q \in Q$, 
  $\vec{v}_t^2(t') = -1 \mbox{ if $t=t'$ else } 0$, and $\vec{v}_t^2(b) = 1$. 
  Intuitively, $\vec{v}_t^1$ ``removes'' agents from their current states, and
  $\vec{v}_t^2$ ``adds'' them to their new states. 
  It is easy to see that for every $\vec{v} \in \N^{|Q|+|T| + 1}$ 
  satisfying $\vec{v}(t)=0$ for every $t \in T$ and $\vec{v}(b) = 1$, 
  the VAS $(A, \vec{v})$ simulates $\PP$ from the
  configuration $C$ satisfying $C(q)=\vec{v}(q)$ for every $q \in
  Q$. An occurence of $t$ in $\PP$ is simulated by first adding
  $\vec{v}_t^1$ and then $\vec{v}_t^2$. The $b$-component 
  ensures that  $\vec{v}_t^1$ always directly precedes  $\vec{v}_t^2$. 
  Since $A$ contains $2|T|$
  vectors of dimension $\left(|Q|+|T| + 1\right)$ with entries taken from
  $\{-2, -1,0,1,2\}$, its size is bounded by $12|Q|^8$:
  \begin{align*}
  \size{A} & = \sum_{\vec{v} \in A} \size{\vec{v}} \\
           & = \sum_{\vec{v} \in A} \sum_{1 \leq i \leq (|Q|+|T| + 1)} \size{\max(|\vec{v}(i)|,1)} \\
           & \leq \sum_{\vec{v} \in A} 2 \cdot (|Q|+|T| + 1) \\
           & = 4|T| \cdot (|Q| + |T| + 1)\\
           & \leq 4|Q|^4 \cdot (|Q|^4 + |Q|^4 + |Q|^4)\\
           & = 12|Q|^8 & \text{}
  \end{align*}
  By applying Theorem~\ref{thm:rackoff} on $A$, we obtain the desired bound.
\end{proof}

\thmOneawareleader

\begin{proof}
  Let $\PP = (Q, T, \{q_0\}, L, O)$ be a 1-aware 2-way population
  protocol computing the predicate $x \geq n$. Let $q_0$ be the only
  initial state of $\PP$, and let $Q_1 \subseteq Q$ be the set of
  states of $\PP$ that make it 1-aware. By Proposition
  ~\ref{prop:Rackoff}, some state $q_1 \in Q_1$ is coverable from
  $\multiset{n \cdot q_0}+L$ by means of an execution $\sigma$ of
  length $2^{(3m)^m-1}$ where $m \leq 12 |Q|^8$.

  Let $k \defeq 2^{(3m)^m}$. Since $\sigma$ removes at most $k$ agents
  from state $q_0$, it is also enabled at the initial configuration
  $C_0' \defeq \multiset{k \cdot q_0} \mplus L$. Further, since
  $q_0 \notin Q_1$ and $q_1 \in Q_1$, we have $C_0' \trans{\sigma} C'$
  for some configuration $C'$ such that $C'(q_1) > 0$. By definition
  of 1-awareness, $O(C_0') = 1$, and thus since $\PP$ computes $x \geq
  n$, we have $k \geq n$.

  Therefore, $n \leq k \leq 2^{(3m)^m}$, which implies that $n \leq
  2^{(36 |Q|^8)^{12|Q|^8}} = 2^{2^{\log (36 |Q|^8) 12 |Q|^8}}$, and in
  turn that $\log\log (n) \leq \log (36 |Q|^8) \cdot 12 |Q|^8$ for
  every $n \geq 2$. Note that $\log(a) \leq \lambda \cdot a^{1
  / \lambda}$ for every $a, \lambda \in \N_{>0}$. Thus, by taking $a =
  36 |Q|^8$ and $\lambda = 8$, we obtain $\log\log n \leq 12 \cdot
  8 \cdot 36^{1/8} \cdot |Q|^{9} \leq 151 |Q|^{9}$, which implies that
  $|Q| \geq (\log \log(n) / 151)^{1 / 9}$.
\end{proof}

\section{Detailed proofs of Section~\ref{sec:sys:lin}}\label{app:sys:lin}
\newcommand{\istate}[3]{\bm{\mathrm{{#1}_{\mathnormal{#3}}^{#2}}}}

Since linear inequalities are subsumed by systems of linear
inequalities, we only give a proof sketch of
Theorem~\ref{thm:lin:ineq} and we instead focus on proving
Theorem~\ref{thm:sys:lin} in details.

\subsection{Linear inequalities}

\thmLinIneq*

\begin{proof}[Proof sketch]
  The bounds follow from the definition of $\PP_\text{lin}$ and
  Lemma~\ref{lem:kway}. Let us sketch the correctness of
  $\PP_\text{lin}$. We associate a value to each state in the natural
  way, \ie $\val(\state{x_i}) \defeq a_i$, $\val(\state{+2^i}) \defeq
  2^i$, $\val(\state{-2^i}) \defeq -2^i$ and $\val(\state{+0}) =
  \val(\state{-0}) \defeq 0$. Let $$X^+ \defeq \{x_i : 1 \leq i \leq
  k, a_i > 0\}\ \text{ and }\ X^- \defeq \{x_i : 1 \leq i \leq k, a_i <
  0\}.$$ For every configuration $C$, we let
  \begin{align*}
    \val^+(C) &\defeq \sum_{q \in Q^+ \cup X^+} \val(q) \cdot C(q), &
    \val^-(C) \defeq \sum_{q \in Q^- \cup X^-} \val(q) \cdot C(q),
  \end{align*}
  and $\val(C) \defeq \val^+(C) + \val^-(C)$.

  For every initial configuration $C_0$ and sequence $C \trans{\sigma}
  C'$, it can be shown that:
  \begin{itemize}
  \item $\val(C_0) = \sum_{1 \leq i \leq k} a_i \cdot C_0(\state{x_i})
    + c$,

  \item $\val^+(C) \geq \val^+(C')$, $\val^-(C) \leq \val^-(C')$ and
    $\val(C) = \val(C')$, and

  \item $C'(x) = C(x) - \sum_{r \in R} |\sigma|_{\mathrm{add}_{x, r}}$
    for every $x \in X$.
  \end{itemize}
  Using these facts, it is possible to show that the number of agents
  in the largest powers of $2$ cannot grow too much, as otherwise the
  represented value would be too large or too small:
  \begin{align*}
    C(\state{+2^n}) &\leq 1 + \sum_{x \in X^+} \sum_{r \in R}
    |\sigma|_{\mathrm{add}_{x,r}}\ \text{ and }\
    C(\state{-2^n}) \leq 1 + \sum_{x \in X^-} \sum_{r \in R}
    |\sigma|_{\mathrm{add}_{x,r}}.
  \end{align*}
  Combining these observations, and by using transitions of the form
  $\mathrm{up}_i^+$ and $\mathrm{up}_i^-$, it can be shown that
  \begin{itemize}
  \item If $C_0$ is initial and $C_0 \trans{*} C$, then there exist
    $C'$ and $r \in R$ s.t. $C \trans{*} C'$ and $C'(r) \geq n$.
  \end{itemize}
  
  This implies that, in any fair execution, transitions of the form
  $\mathrm{add}_{x,r}$ can occur until the number of agents in $X$
  stabilizes to $0$. Moreover, it implies that, in any fair execution,
  transitions of the form $\mathrm{down}_{i,r}^+$,
  $\mathrm{down}_{i,r}^-$ and $\mathrm{cancel}_i$ can occur until the
  number of agents in $Q^+$ or $Q^-$ stabilizes to $0$. Finally,
  ``signal'' transitions ensure that every fair execution stabilizes
  to the right output.
\end{proof}

\subsection{Conjunction of linear inequalities}

Let $A \in \Z^{m \times k}$ and $\vec{c} \in \Z^m$. Let us now
introduce in details the population protocol $\PP_{\text{sys}} = (Q,
T, I, L, O)$ for the predicate $A\vec{x} + \vec{c} >
\vec{0}$. Let $$b_\text{max} \defeq \max(1, \max\{|A_{i,j}| : i \in
    [m], j \in [k]\}, \max\{|c_i| : i \in [m]\})$$ and $n \defeq
    \lceil \log 2m^2 \rceil + \size{b_\text{max}}$. The following will
    later be crucial:
\begin{align}
  2^n > 2m^2 \cdot b_\text{max}.\label{eq:bmax}
\end{align}
The states of the protocol are defined as $Q \defeq X \cup Q^+ \cup
Q^- \cup R$ where
\begin{align*}
  X   &\defeq \{\state{x_1}, \state{x_2}, \ldots, \state{x_k}\}, &
  R   &\defeq \{\istate{0}{}{0}, \istate{0}{}{1}\}, \\
  Q_j^+ &\defeq \{\istate{+2}{i}{j,\alpha} : i \in [0, n], \alpha \in
  \{0, 1\}\}, &
  Q_j^- &\defeq \{\istate{-2}{i}{j}  : i \in [0, n]\}, \\
  Q^+ &\defeq Q_1^+ \cup Q_2^+ \cup \cdots \cup Q_m^+, &
  Q^- &\defeq Q_1^- \cup Q_2^- \cup \cdots \cup Q_m^-.
\end{align*}
The initial states are defined as $I \defeq X$, and the output mapping as
$$
O(q) \defeq 
\begin{cases}
  1 & \text{if } q = \istate{0}{}{1}, \text{ or } q = \istate{+2}{i}{j,1}
  \text{ for some } i \in [0, n], j \in [m] \\
  0 & \text{otherwise}.
\end{cases}
$$

In order to define leaders and transitions, let us first give some
definitions. Let $\mathrm{rep}_j(d) \colon \Z \to \pop{Q \setminus X}$
be defined as follows:
\begin{align*}
  \mathrm{rep}_j(d) &\defeq
  \begin{cases}
    \multiset{\istate{+2}{i}{j,0} : i \in \bits{d}}   & \text{if } d > 0, \\
    \multiset{\istate{-2}{i}{j\phantom{,0}} : i \in \bits{|d|}} & \text{if } d < 0, \\
    \multiset{\state{-0}}                             & \text{if } d = 0.
  \end{cases}
\end{align*}
Let $\rep{\vec{d}} \colon \Z^m \to \pop{Q \setminus X}$ be defined as
$\rep{\vec{d}} \defeq \mathrm{rep}_1(\vec{d}_1) \mplus
\mathrm{rep}_2(\vec{d}_2) \mplus \ldots \mplus
\mathrm{rep}_m(\vec{d}_m)$. Leaders are defined as $L \defeq
\rep{\vec{c}} \mplus \multiset{(5mn + 1) \cdot \istate{0}{}{0}}$.

It remains to describe the set of transitions $T$. It contains the
following transitions which allow to change representations of numbers
over $Q^+ \cup Q^-$:
\begin{align*}
  \mathrm{up}_{i,j,\alpha,\beta}^+:\ & \istate{+2}{i}{j,\alpha},
  \istate{+2}{i}{j,\beta} \mapsto \istate{+2}{i+1}{j,\alpha \land
    \beta}, \istate{0}{}{\alpha \land \beta} &
  \mathrm{down}_{i+1,j,\alpha,\beta}^+:\ & \istate{+2}{i+1}{j,\alpha},
  \istate{0}{}{\beta} \mapsto \istate{+2}{i}{j,\alpha \land \beta},
  \istate{+2}{i}{j,\alpha \land \beta} \\
  \mathrm{up}_{i,j}^-:\ & \istate{-2}{i}{j}, \istate{-2}{i}{j} \mapsto
  \istate{-2}{i+1}{j}, \istate{0}{}{0} &
  \mathrm{down}_{i+1,j,\alpha}^-:\ & \istate{-2}{i+1}{j},
  \istate{0}{}{\alpha} \mapsto \istate{-2}{i}{j}, \istate{-2}{i}{j}
\end{align*}
where $i \in [0, n-1]$, $j \in [m]$ and $\alpha, \beta \in \{0,
1\}$. It contains the following transitions to cancel out equal
numbers:
\begin{align*}
  \mathrm{cancel}_{i,j,\alpha}:\ & \istate{+2}{i}{j,\alpha},
  \istate{-2}{i}{j} \mapsto \istate{0}{}{\alpha}, \istate{0}{}{0}
\end{align*}
where $i \in [0, n]$, $j \in [m]$ and $\alpha \in \{0, 1\}$. It
contains the following transitions to signal false and true consensus:
\begin{align*}
  \mathrm{false}_{i,j}^+:\ & \istate{0}{}{0}, \istate{+2}{i}{j,1}
  \mapsto \istate{0}{}{0}, \istate{+2}{i}{j,0} &
  \mathrm{false}:\ & \istate{0}{}{0}, \istate{0}{}{1} \mapsto
  \istate{0}{}{0}, \istate{0}{}{0} \\
  \mathrm{false}_{i,j}^-:\ & \istate{-2}{i}{j}, \istate{0}{}{1}
  \mapsto \istate{-2}{i}{j}, \istate{0}{}{0}
\end{align*}
\begin{align*}
  \mathrm{true}:\ & \istate{+2}{0}{1,0}, \istate{+2}{0}{2,0}, \ldots,
  \istate{+2}{0}{m,0}, \istate{0}{}{0} \mapsto \istate{+2}{0}{1,1},
  \istate{+2}{0}{2,1}, \ldots, \istate{+2}{0}{m,1}, \istate{0}{}{1} \\
  \mathrm{true}_{i,j}:\ & \istate{0}{}{1}, \istate{+2}{i}{j,0} \mapsto
  \istate{0}{}{1}, \istate{+2}{i}{j,1}
\end{align*}
where $i \in [0, n]$ and $j \in [m]$. Finally, it contains the
following transitions to convert variables to their coefficients:
\begin{align*}
  \mathrm{add}_{j, \alpha}:\ &\state{x_j},
  \underbrace{\istate{0}{}{\alpha}, \istate{0}{}{\alpha}, \ldots,
    \istate{0}{}{\alpha}}_{|\rep{A_{\star,j}}|- 1 \text{
        times}} \mapsto \rep{A_{\star,j}}
\end{align*}
where $j \in [k]$, $\alpha \in \{0, 1\}$, and $A_{\star,j}$ is the
$j^\text{th}$ column of $A$.

The rest of this appendix is dedicated to proving the correctness of
$\PP_\text{sys}$. Before doing so, we need to introduce additional
definitions. Let $\val \colon Q \to \N$ be the function that
associates a value to each state as follows:
\begin{align*}
  \val(\state{x_j}) &\defeq \sum_{i \in [m]} A_{i,j} && \text{for every } j
  \in [k], \\
  \val(\istate{+2}{i}{j,\alpha}) &\defeq 2^i && \text{for every } i \in [0,
    n], j \in [m], \alpha \in \{0, 1\}, \\
  \val(\istate{-2}{i}{j}) &\defeq -2^i && \text{for every } i \in [0, n], j
  \in [m] \\
  \val(\istate{0}{}{0}) &\defeq \val(\istate{0}{}{1}) \defeq 0.
\end{align*}
We extend $\val$ to configurations. For every $C \in \pop{Q}$ and
every $i \in [m]$, let
\begin{align*}
  \val_i^+(C) &\defeq \sum_{q \in Q_i^+} \val(q) \cdot C(q), &
  \val^+(C) &\defeq \sum_{i \in [m]} \val_i^+(C), \\
  \val_i^-(C) &\defeq \sum_{q \in Q_i^-} \val(q) \cdot C(q), &
  \val^-(C) &\defeq \sum_{i \in [m]} \val_i^-(C), \\
  \val_i(C) &\defeq \val_i^+(C) + \val_i^-(C) + \sum_{j \in [k]}
  A_{i,j} \cdot C(\state{x_j}) &
  \val(C)   &\defeq \sum_{i \in [m]} \val_i(C).
\end{align*}
For every $j \in [k]$ and $\sigma \in T^*$, let $b_j \defeq \sum_{i
  \in [m]} |A_{i, j}|$ and let $\mathrm{num}_i(\sigma) \defeq
|\sigma|_{\mathrm{add}_{i,0}} + |\sigma|_{\mathrm{add}_{i,1}}$. It is
not so difficult to derive the following properties from the above
definitions:
\begin{proposition}\label{prop:threshold:val2}
  Let $C_0, C, C' \in \pop{Q}$ and $\sigma \in T^*$ be such that $C_0$
  is initial and $C \trans{\sigma} C'$. The following holds for every
  $i \in [m]$:
  \begin{enumerate}
  \item[(a)] $\val_i^+(C_0) = \max(\vec{c}_i, 0)$ and $\val_i^-(C_0) =
    \min(\vec{c}_i, 0)$,

  \item[(b)] $\val_i(C') = \val_i(C)$,

  \item[(c)] if $C(X) = 0$, then $\val_i^+(C) \geq \val_i^+(C')$,

  \item[(d)] if $C(X) = 0$, then $\val_i^-(C) \leq \val_i^+(C')$,

  \item[(e)] $\val^+(C') \leq \val^+(C) + \sum_{j \in [k]}
    \mathrm{num}_j(\sigma) \cdot b_j$,

  \item[(f)] $\val^-(C') \geq \val^-(C) - \sum_{j \in [k]}
    \mathrm{num}_j(\sigma) \cdot b_j$,

  \item[(g)] $C'(X) = C(X) - \sum_{j \in [k]} \mathrm{num}_j(\sigma)$.
  \end{enumerate}
\end{proposition}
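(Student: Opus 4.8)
The plan is to prove all seven items by one uniform strategy: reduce everything to the single-transition case and then inspect each transition type of $\PP_\text{sys}$. Item~(a) needs no such case analysis --- it is a direct computation from $L = \rep{\vec{c}} \mplus \multiset{(5mn+1)\cdot\istate{0}{}{0}}$ and the shape $C_0 = D \mplus L$ with $D \in \pop{X}$: states of $X \cup R$ contribute $0$ to every $\val_i^+$ and $\val_i^-$, so $\val_i^\pm(C_0)$ is determined by $\mathrm{rep}_i(\vec{c}_i)$, whose $Q_i^+$-value is $\sum_{\ell \in \bits{\vec{c}_i}} 2^\ell = \max(\vec{c}_i, 0)$ (for $\vec{c}_i \leq 0$ there are no $Q_i^+$-agents) and whose $Q_i^-$-value is $\min(\vec{c}_i, 0)$.

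For (b)--(g) I would first reduce to a single step $C \trans{t} C'$. Items (b)--(d) are then closed under concatenation, and for (e)--(g) one sums the per-step estimates along $\sigma$, using that $\mathrm{num}_j$ is additive and that the hypothesis $C(X) = 0$ propagates to every later configuration, since no transition has an $X$-state in its post-multiset (in particular $\rep{A_{\star,j}} \in \pop{Q \setminus X}$). Item~(b) then says every transition is $\val_i$-neutral for each $i$: the $\mathrm{up}$/$\mathrm{down}$ transitions rewrite $2^\ell + 2^\ell$ into $2^{\ell+1}$ inside a single block, $\mathrm{cancel}$ rewrites $2^\ell + (-2^\ell)$ into $0 + 0$, the $\mathrm{false}$- and $\mathrm{true}$-transitions only flip $\alpha$-tags of $Q^+$-states and move $R$-states around (all value-preserving), and $\mathrm{add}_{j,\alpha}$ replaces $\state{x_j}$ --- which contributes $A_{i,j}$ to $\val_i$ through the summand $\sum_{j'} A_{i,j'} C(\state{x_{j'}})$ --- by $\mathrm{rep}_i(A_{i,j})$ in block $i$, contributing exactly $A_{i,j}$ to $\val_i^+ + \val_i^-$, while the padding $\istate{0}{}{\alpha}$-agents are valueless.

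For (c) and (d), the hypothesis $C(X) = 0$ forbids every $\mathrm{add}$-transition along $\sigma$; each remaining transition either preserves the total $Q_i^+$-value (the $\mathrm{up}^+$, $\mathrm{down}^+$, $\mathrm{false}$-, and $\mathrm{true}$-transitions never change it) or strictly decreases it (only $\mathrm{cancel}_{\cdot,i,\cdot}$, which deletes a $Q_i^+$-agent), so $\val_i^+$ is non-increasing; (d) is the mirror statement, with $\val_i^-$ non-decreasing (I read its right-hand side as $\val_i^-(C')$). For (e) and (f) I drop the hypothesis $C(X) = 0$: then $\mathrm{add}_{j,\alpha}$ is the only transition that can raise $\val^+$, and it raises it by the $Q^+$-value of $\rep{A_{\star,j}}$, namely $\sum_{i : A_{i,j} > 0} A_{i,j} \leq b_j$; summing over its $\mathrm{num}_j(\sigma)$ occurrences and using that every other transition preserves or lowers $\val^+$ yields (e), and (f) is symmetric. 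Item~(g) holds because $\mathrm{add}_{j,\alpha}$ is the unique transition consuming an $X$-state, it consumes exactly one $\state{x_j}$, and no transition produces an $X$-state.

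The one place where the bookkeeping is not entirely mechanical --- and hence the step I would write out carefully --- is the $\mathrm{add}_{j,\alpha}$ case: one must recall that $\mathrm{rep}_i(0)$ is a one-element multiset over $R$ rather than the empty multiset, so a zero column entry $A_{i,j} = 0$ contributes one valueless reservoir agent (not nothing). This is exactly what keeps both the agent-count identity behind the padding (one agent of $\rep{A_{\star,j}}$ per coordinate $i$) and the value identities underlying (b), (e) and~(f) mutually consistent. Everything else is a transition-by-transition inspection.
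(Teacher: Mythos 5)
Your proof is correct; the paper itself gives no proof of this proposition (it is introduced with ``it is not so difficult to derive \ldots from the above definitions''), and your transition-by-transition inspection is precisely the routine verification being left to the reader, including the only slightly delicate points: that $\mathrm{add}_{j,\alpha}$ simultaneously removes $A_{i,j}$ from the $X$-summand of $\val_i$ and adds it back via $\mathrm{rep}_i(A_{i,j})$, and that $\mathrm{rep}_i(0)$ is a valueless reservoir agent rather than the empty multiset. Your reading of item~(d) as a typo for $\val_i^-(C')$ is also the intended one, as confirmed by how the proposition is used in the proof of Proposition~\ref{prop:threshold:stabilize2}.
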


From Proposition~\ref{prop:threshold:val2}, we obtain the following
useful proposition:

\begin{proposition}\label{prop:threshold:topbit2}
  Let $C_0, C \in \pop{Q}$ and $\sigma \in T^*$ be such that $C_0$ is
  initial and $C_0 \trans{\sigma} C$. For every $i \in [m]$, the
  following holds: $$C(\{\istate{+2}{n}{i,0}, \istate{+2}{n}{i,1}\})
  \leq (d + 1) / 2m \text{ and } C(\istate{-2}{n}{i}) \leq (d + 1) /
  2m$$ where $d = \sum_{j \in [k]} \mathrm{num}_j(\sigma)$.
\end{proposition}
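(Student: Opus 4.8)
The plan is to bound the occupancy of the topmost positive power-of-two state of each block~$i$ by the block value $\val_i^+(C)$, to control $\val^+(C)$ uniformly via Proposition~\ref{prop:threshold:val2}, and to handle the negative case symmetrically.

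First I would note that every agent residing in $\istate{+2}{n}{i,0}$ or $\istate{+2}{n}{i,1}$ contributes exactly $2^n$ to $\val_i^+(C)$, whereas every other state of $Q_i^+$ contributes a nonnegative amount; hence
\[
  2^n \cdot C(\{\istate{+2}{n}{i,0}, \istate{+2}{n}{i,1}\}) \;\leq\; \val_i^+(C) \;\leq\; \val^+(C),
\]
where the last step uses $\val^+(C) = \sum_{\ell \in [m]} \val_\ell^+(C)$ together with the fact that each summand is nonnegative. Next, applying Proposition~\ref{prop:threshold:val2}(a) and~(e) to the execution $C_0 \trans{\sigma} C$ gives
\[
  \val^+(C) \;\leq\; \val^+(C_0) + \sum_{j \in [k]} \mathrm{num}_j(\sigma)\cdot b_j \;=\; \sum_{\ell \in [m]} \max(\vec{c}_\ell, 0) + \sum_{j \in [k]} \mathrm{num}_j(\sigma)\cdot b_j.
\]
Since $\max(\vec{c}_\ell, 0) \leq b_\text{max}$ and $b_j = \sum_{\ell \in [m]} |A_{\ell,j}| \leq m\, b_\text{max}$, the right-hand side is at most $m\, b_\text{max} + m\, b_\text{max}\cdot d = m\, b_\text{max}(d+1)$. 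Combining the two displays, dividing by $2^n$, and invoking Equation~\eqref{eq:bmax} (that is, $2^n > 2m^2 b_\text{max}$) yields
\[
  C(\{\istate{+2}{n}{i,0}, \istate{+2}{n}{i,1}\}) \;\leq\; \frac{m\, b_\text{max}(d+1)}{2^n} \;<\; \frac{m\, b_\text{max}(d+1)}{2m^2 b_\text{max}} \;=\; \frac{d+1}{2m}.
\]

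For the negative side I would argue dually: an agent in $\istate{-2}{n}{i}$ contributes $-2^n$ to $\val_i^-(C)$ and every state of $Q_i^-$ carries a nonpositive value, so $2^n \cdot C(\istate{-2}{n}{i}) \leq -\val_i^-(C) \leq -\val^-(C)$, using $\val^-(C) = \sum_{\ell \in [m]} \val_\ell^-(C)$ with every summand $\leq 0$. Then Proposition~\ref{prop:threshold:val2}(a) and~(f) give $-\val^-(C) \leq -\val^-(C_0) + \sum_{j \in [k]} \mathrm{num}_j(\sigma)\cdot b_j = \sum_{\ell \in [m]} |\min(\vec{c}_\ell,0)| + \sum_{j \in [k]} \mathrm{num}_j(\sigma)\cdot b_j \leq m\, b_\text{max}(d+1)$, and the same division by $2^n$ together with \eqref{eq:bmax} gives $C(\istate{-2}{n}{i}) < (d+1)/(2m)$.

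I do not expect a genuine obstacle once Proposition~\ref{prop:threshold:val2} is in hand: the whole argument rests on the two structural observations that $\val^+(C)$ dominates each block value $\val_i^+(C)$ (because positive states carry nonnegative values, and dually for the negative states), and that the parameter $n$ was chosen precisely so that $2^n > 2m^2 b_\text{max}$, which is exactly what converts the crude bound $m\, b_\text{max}(d+1)$ into the desired $(d+1)/(2m)$. The remaining bookkeeping with $b_j$, $b_\text{max}$ and $\mathrm{num}_j(\sigma)$ is routine arithmetic.
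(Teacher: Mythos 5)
Your argument is correct and is essentially the paper's own proof: the same chain of inequalities (the topmost-bit occupancy is dominated by $\val^+(C)$, which Proposition~\ref{prop:threshold:val2}(a) and (e) bound by $m\,b_\text{max}(d+1)$, after which the choice $2^n > 2m^2 b_\text{max}$ closes the gap), merely written directly rather than as the paper's proof by contradiction. The symmetric treatment of the negative case also matches the paper, which likewise dismisses it as symmetric.
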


\begin{proof}
  Let $i \in [m]$. We only prove the first claim, the second one
  follows symmetrically. Let $S \defeq \{\istate{+2}{n}{i,0},
  \istate{+2}{n}{i,1}\}$. We must show that $C(S) \leq (d + 1) /
  2m$. For the sake of contradiction, suppose $C(S) > (d + 1) /
  2m$. We derive the following contradiction:
  \begin{align*}
    \val^+(C)
    &\geq 2^n \cdot C(S) && \text{(by def. of $\val^+$)} \\
    &> 2^n \cdot ((d + 1) / 2m) && \text{(by assumption)} \\
    &= \frac{2^n + \sum_{j \in [k]} (2^n \cdot
      \mathrm{num}_j(\sigma))}{2m} && \text{(by def. of $d$)} \\
    &> \frac{2m^2 \cdot b_\text{max} + \sum_{j \in [k]}
      2m^2 \cdot b_\text{max} \cdot \mathrm{num}_j(\sigma)}{2m} &&
    \text{(by~\eqref{eq:bmax})} \\
    &= m \cdot b_\text{max} + \sum_{j \in [k]} m \cdot b_\text{max}
    \cdot \mathrm{num}_j(\sigma) && \\
    &\geq \val^+(C_0) + \sum_{j \in [k]} m \cdot b_\text{max} \cdot
    \mathrm{num}_j(\sigma) && \text{(by
      Prop.~\ref{prop:threshold:val2}(a))} \\
    &\geq \val^+(C_0) + \sum_{j \in [k]} b_j \cdot
    \mathrm{num}_j(\sigma) && \text{(by def. of $b_\text{max}$ and $b_j$)}\\
    &\geq \val^+(C) && \text{(by Prop.~\ref{prop:threshold:val2}(e))}.\qedhere
  \end{align*}
\end{proof}

The following proposition shows that is always possible to convert at
least $mn$ agents back to a state of $R$. This will later be useful in
arguing that the number of agents in $X$ can eventually be decreased
to zero.

\begin{proposition}\label{prop:threshold:normalize2}
  Let $C_0, C \in \pop{Q}$ be such that $C_0$ is initial. If $C_0
  \trans{*} C$, then there exist a configuration $C'$ and $\alpha \in
  \{0, 1\}$ such that $C \trans{*} C'$ and $C'(\istate{0}{}{\alpha})
  \geq mn$.
\end{proposition}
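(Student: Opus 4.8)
The plan is to fire \emph{only} ``up'' transitions $\mathrm{up}^+$ and $\mathrm{up}^-$ from $C$, until every level below $n$ holds at most one agent per sign and per subprotocol, and then to count agents. Fix $\sigma \in T^*$ with $C_0 \trans{\sigma} C$ and set $d \defeq \sum_{j \in [k]} \mathrm{num}_j(\sigma)$; note that $d = C_0(X) - C(X)$ by Proposition~\ref{prop:threshold:val2}(g).

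First I would prove a normalisation claim: from $C$ one can reach, using up transitions only, a configuration $\hat C$ such that $\hat C(\{\istate{+2}{i}{j,0},\istate{+2}{i}{j,1}\}) \leq 1$ and $\hat C(\istate{-2}{i}{j}) \leq 1$ for every $j \in [m]$ and every $i \in [0, n-1]$. This is a carry-propagation argument: processing the levels $i = 0, 1, \ldots, n-1$ in increasing order and, within each level, the subprotocols one by one, whenever the positive (resp.\ negative) part of subprotocol $j$ at level $i$ contains at least two agents one fires an $\mathrm{up}^+$ (resp.\ $\mathrm{up}^-$) transition, which removes two agents from level $i$ and adds one to level $i+1$. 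Firing an up transition at level $i$ never affects a level $< i$, a different subprotocol, or the opposite sign, and it strictly decreases the total number of agents in $Q^+ \cup Q^-$; hence the procedure terminates and its output $\hat C$ satisfies the claim. Two properties of up transitions must be pointed out explicitly: none of them has a reservoir state in its preset (each only \emph{produces} an $\istate{0}{}{\cdot}$), and none touches a state of $X$. Consequently no reservoir agent is consumed on the way to $\hat C$, and $\hat C(X) = C(X) = C_0(X) - d$.

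Then I would count. Since $\hat C$ is reachable from $C_0$ and the number of add-transitions along any path from $C_0$ to $\hat C$ equals $C_0(X) - \hat C(X) = d$, Proposition~\ref{prop:threshold:topbit2} applies to $\hat C$ with parameter $d$, bounding by $(d+1)/2m$ the number of agents of each sign at level $n$ in each of the $m$ subprotocols; summing, at most $d+1$ agents of $Q^+ \cup Q^-$ sit at level $n$, while the normalisation claim leaves at most $2mn$ of them at levels $0, \ldots, n-1$, so $\hat C(Q^+ \cup Q^-) \leq 2mn + d + 1$. Agents are conserved, so $|\hat C| = |C_0| = C_0(X) + |\rep{\vec c}| + (5mn+1)$, and since $Q$ is the disjoint union of $X$, $Q^+$, $Q^-$ and $R$,
\begin{align*}
  \hat C(R) &= |\hat C| - \hat C(X) - \hat C(Q^+ \cup Q^-) \\
            &\geq \big(C_0(X) + |\rep{\vec c}| + 5mn + 1\big) - \big(C_0(X) - d\big) - (2mn + d + 1) \\
            &= |\rep{\vec c}| + 3mn \;\geq\; 3mn.
\end{align*}
As $R = \{\istate{0}{}{0},\istate{0}{}{1}\}$ has exactly two elements, the pigeonhole principle yields $\alpha \in \{0,1\}$ with $\hat C(\istate{0}{}{\alpha}) \geq 3mn/2 \geq mn$, and $C' \defeq \hat C$ is the desired configuration. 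The hardest part will be the normalisation claim: one must check that the carry-propagation procedure is well-founded and, above all, that it never needs a reservoir agent as ``fuel'' — which is precisely why it is carried out with up transitions alone, whose presets avoid $R$ entirely. (The slack of $5mn$ rather than $3mn$ initial reservoir agents is not needed here, but is consumed elsewhere in the correctness argument of $\PP_\text{sys}$.)
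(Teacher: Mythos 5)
Your proof is correct and follows essentially the same route as the paper's: both free reservoir agents by firing $\mathrm{up}^+/\mathrm{up}^-$ transitions, bound the level-$n$ population via Proposition~\ref{prop:threshold:topbit2}, and conclude by agent conservation plus pigeonhole. The only difference is organizational — you normalize all levels below $n$ up front and count once (reaching $\hat C(R)\geq 3mn$), whereas the paper interleaves counting with single $\mathrm{up}$ firings until $C'(R)\geq 2mn$; both suffice.
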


\begin{proof}
  If $C(R) \geq 2mn$, then $C' \defeq C$ satisfies the claim by the
  pigeonhole principle. Therefore, assume $C(R) < 2mn$. Let $\sigma
  \in T^*$ be such that $C_0 \trans{\sigma} C$. Let $$U \defeq
  \{\istate{+2}{n}{j,0}, \istate{+2}{n}{j,1}, \istate{-2}{n}{j} : j
  \in [m]\}\ \text{ and }\ V \defeq (Q^+ \cup Q^-) \setminus U.$$ We have
  \begin{align*}
    C(V) &= |C_0| - C(U) - C(X) - C(R) && \text{(by
      $|C| = |C_0|$)} \\
    &> |C_0| - C(U) - C(X) - 2mn && \text{(by
      assumption)} \\
    &= |C_0| - C(U) - (C_0(X) - \sum_{j \in [k]} \mathrm{num}_j(\sigma)) -
    2mn && \text{(by Prop.~\ref{prop:threshold:val2}(g))} \\
    &\geq 3mn + 1 - C(U) + \sum_{j \in [k]} \mathrm{num}_j(\sigma) &&
    \text{(by $C_0(R) \geq 5mn + 1$)} \\
    &\geq 3mn + 1 - 2m\left(\frac{1 + \sum_{j \in [k]}
      \mathrm{num}_j(\sigma)}{2m}\right) + \sum_{j \in [k]}
    \mathrm{num}_j(\sigma) && \text{(by
      Prop.~\ref{prop:threshold:topbit2})} \\
    &= 3mn.
  \end{align*}
  Since $C(V) > 3mn = |V|$, the pigeonhole principle implies that
  $C(q) \geq 2$ for some $q \in V$. Therefore, a transition of the
  form $\mathrm{up}_{i,j,\alpha,\beta}^+$ or $\mathrm{up}_{i,j}^-$ can
  occur from $C$, leading to a configuration $D$ such that $D(R) =
  C(R) + 1$. If $D(R) < 2mn$, then this argument can be repeated until
  a configuration $C'$ such that $C'(R) \geq 2mn$ is reached.
\end{proof}

We now show that, in any fair execution, the number of agents in $X$
eventually stabilizes to 0, and the value associated to each conjunct
stabilizes to either some positive or some negative number.

\begin{proposition}\label{prop:threshold:stabilize2}
  Let $\pi = C_0 C_1 \cdots$ be a fair execution from an initial
  configuration $C_0$. There exist $\ell \in \N$, $d_1^+, d_2^+,
  \ldots, d_m^+ \geq 0$ and $d_1^-, d_2^-, \ldots, d_m^- \leq 0$ such
  that for every $i \in [m]$, the following holds:
  \begin{enumerate}
    \item $C_\ell(X) = C_{\ell+1}(X) = \cdots = 0$,
    \item $\val_i^+(C_\ell) = \val_i^+(C_{\ell+1}) = \cdots = d_i^+$,
    \item $\val_i^-(C_\ell) = \val_i^-(C_{\ell+1}) = \cdots = d_i^-$,
    \item $d_i^+ = 0$ or $d_i^- = 0$.
  \end{enumerate}
\end{proposition}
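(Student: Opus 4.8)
The plan is to combine Proposition~\ref{prop:threshold:normalize2} with fairness to force the $X$-agents to disappear, and then track the two ``sides'' of each conjunct to a limit. First I would establish (1): by Proposition~\ref{prop:threshold:normalize2}, from every reachable configuration one can reach a configuration with at least $mn$ agents in some reservoir state $\istate{0}{}{\alpha}$; in particular, whenever $X$ is nonempty at a reachable $C$, a transition $\mathrm{add}_{j,\alpha}$ is applicable after a short detour (it needs $|\rep{A_{\star,j}}|-1 \le mn$ reservoir agents of the right flavour, and at most $n$ suffice per column since $|\rep{A_{\star,j}}| \le mn$). By Proposition~\ref{prop:threshold:val2}(g) each such transition strictly decreases $C(X)$, and $C(X)$ never increases. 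Since $C(X)$ is a nonnegative integer and fairness guarantees that a repeatedly-reachable configuration (here, one with smaller $X$-count) is eventually reached, $C_i(X)$ must reach $0$ and stay there; take $\ell_0$ to be such an index.

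Next I would prove (2) and (3). By Proposition~\ref{prop:threshold:val2}(b), $\val_i(C_j)$ is constant along the whole execution, equal to $\val_i(C_0)$. Once $C(X)=0$, Proposition~\ref{prop:threshold:val2}(c),(d) say that $\val_i^+$ is nonincreasing and $\val_i^-$ is nondecreasing; both are bounded (e.g.\ $0 \le \val_i^+ \le |C_0| \cdot 2^n$ and symmetrically for $\val_i^-$), so as monotone integer sequences they each stabilize. Let $d_i^+, d_i^-$ be their limits and take $\ell \ge \ell_0$ large enough that all $m$ pairs have stabilized by index $\ell$; then (1)--(3) hold. Note $d_i^+ \ge 0$ and $d_i^- \le 0$ follow directly from the signs enforced by $Q_i^+$, $Q_i^-$.

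The main work is point (4): after stabilization, $d_i^+ = 0$ or $d_i^- = 0$, i.e.\ one side has been fully cancelled. The argument is by contradiction and fairness: suppose at $C_\ell$ some agent sits in a state of $Q_i^+$ with $\val_i^+(C_\ell) = d_i^+ > 0$ and simultaneously in $Q_i^-$ with $d_i^- < 0$. Using the $\mathrm{up}^\pm$ transitions one can aggregate agents within a single conjunct so that, for both signs, at most one agent occupies each index $< n$; by Proposition~\ref{prop:threshold:topbit2} the top index $n$ is also bounded, so one reaches a configuration $D$ (reachable from $C_\ell$) whose $Q_i^+$- and $Q_i^-$-contents are the genuine binary representations of $|d_i^+|$ and $|d_i^-|$. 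Since both are nonzero, they share a common set bit or, after a further $\mathrm{up}^+$/$\mathrm{up}^-$ normalization step, some index $i'$ has an agent on both the $+$ and $-$ side, so $\mathrm{cancel}_{i',i,\alpha}$ is enabled, yielding $D'$ with $\val_i^+(D') < d_i^+$. Thus a configuration with strictly smaller $\val_i^+$ is reachable from $C_\ell$ and hence (as $C_\ell$ and all later configurations have the same $\val_i^+$, and $D$ is reachable from each of them) repeatedly reachable; by fairness it is eventually reached, contradicting stabilization at $d_i^+$. The delicate point, and the one I expect to be the real obstacle, is verifying that the cancellation step can always be set up \emph{without disturbing the already-stabilized quantities of the other conjuncts} and without re-increasing $X$ (it does not: $\mathrm{up}^\pm$ and $\mathrm{cancel}$ touch only states of one $Q_i^{\pm}$ plus reservoir states, and never produce states of $X$), together with the bookkeeping that enough reservoir agents are available to run the needed $\mathrm{up}^\pm$ aggregation — which is exactly what Proposition~\ref{prop:threshold:normalize2} provides.
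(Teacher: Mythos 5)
Your overall strategy coincides with the paper's: point (1) is obtained by combining Proposition~\ref{prop:threshold:normalize2} with fairness to force $\mathrm{add}$ transitions until $X$ empties (using that $C(X)$ never increases), points (2)--(3) follow from the monotonicity and boundedness in Proposition~\ref{prop:threshold:val2}(c,d) once $C(X)=0$, and point (4) is proved by contradiction by exhibiting, from every late configuration, a reachable configuration in which a $\mathrm{cancel}$ transition is enabled and then invoking fairness (via finiteness of the reachable set). Points (1)--(3) are fine.

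The operative step of (4), however, would fail as written. You propose to align an occupied level of $Q_i^+$ with one of $Q_i^-$ using $\mathrm{up}^{\pm}$ transitions (``they share a common set bit or, after a further $\mathrm{up}^+$/$\mathrm{up}^-$ normalization step, some index has an agent on both sides''). An $\mathrm{up}$ transition merges two agents at the \emph{same} level into one agent one level higher; it can never create a common occupied level when the two represented values have disjoint bit sets. For instance, with $d_i^+=4$ (one agent in $\istate{+2}{2}{i,\alpha}$) and $d_i^-=-1$ (one agent in $\istate{-2}{0}{i}$), no $\mathrm{up}$ transition is even enabled on conjunct $i$, and no $\mathrm{cancel}$ ever becomes enabled. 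What is needed is the opposite move: pick any occupied levels $\lambda$ on the $+$ side and $\lambda'$ on the $-$ side, assume w.l.o.g.\ $\lambda\geq\lambda'$, and walk the $+$ agent \emph{down} from $\lambda$ to $\lambda'$ via a chain of $\mathrm{down}^+$ transitions before cancelling --- this is exactly what the paper does, and no normalization to a ``genuine binary representation'' is needed. The fix also inverts your reservoir bookkeeping: $\mathrm{up}$ \emph{produces} a reservoir agent whereas $\mathrm{down}$ \emph{consumes} one per step, so the $mn\geq n\geq\lambda-\lambda'$ reservoir agents supplied by Proposition~\ref{prop:threshold:normalize2} are needed precisely to fuel the $\mathrm{down}$ chain, not the ``$\mathrm{up}^{\pm}$ aggregation'' your last sentence appeals to.
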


\begin{proof}
  For the sake of contradiction, assume there exist infinitely many
  indices $i$ such that $C_i(X) > 0$. Let $i \in \N$ be one of these
  indices. By Proposition~\ref{prop:threshold:normalize2}, there exist
  $D_i \in \pop{Q}$ and $\alpha \in \{0, 1\}$ such that $C_i \trans{*}
  D_i$ and $D_i(\istate{0}{}{\alpha}) \geq mn$. Hence, by definition
  of $T$, there exists $j \in [k]$ such that $\mathrm{add}_{j,\alpha}$
  is enabled at $D_j$. Since this holds for infinitely many indices,
  fairness implies that one transition of $\{\mathrm{add}_{j,\alpha} :
  j \in [k], \alpha \in \{0, 1\}\}$ is taken infinitely often along
  $\pi$. This is impossible since the number of agents in $X$ cannot
  increase, and thus would eventually drop below zero. Therefore,
  there exists $\ell \in \N$ such that $C_\ell(X) = C_{\ell+1}(X) =
  \cdots = 0$.

  Let $i \in [m]$. By Proposition~\ref{prop:threshold:val2}(c,d,g), we
  have
  \begin{align*}
    \val_i^+(C_\ell) &\geq \val_i^+(C_{\ell+1}) \geq \cdots \geq 0, \\
    \val_i^-(C_\ell) &\leq \val_i^-(C_{\ell+1}) \leq \cdots \leq 0.
  \end{align*}
  Therefore, there exist $\ell' \geq \ell$, $d_i^+ \geq 0$ and $d_i^-
  \leq 0$ such that
  \begin{align}
    \val_i^+(C_{\ell'}) = \val_i^+(C_{\ell'+1}) = \cdots =
    d_i^+,\label{eq:stabp2} \\
    \val_i^-(C_{\ell'}) = \val_i^-(C_{\ell'+1}) = \cdots =
    d_i^-.\label{eq:stabm2}
  \end{align}
  For the sake of contradiction, assume that $d_i^+ \neq 0$ and $d_i^-
  \neq 0$. Let $J \subseteq [\ell', +\infty)$ be the set of all
    indices $j$ such that $D_j(Q_i^+) > 0$ and $D_j(Q_i^-) > 0$. We
    may assume that $J$ is infinite, as otherwise fairness would
    contradict~(\ref{eq:stabp2}) or~(\ref{eq:stabm2}). Let $j \in
    J$. There exist $\lambda, \lambda' \in [0, n]$ and $\alpha \in
    \{0, 1\}$ such that $D_j(\istate{+2}{\lambda}{i,\alpha}) > 0$ and
    $D_j(\istate{+2}{\lambda'}{i}) > 0$. Assume without loss of
    generality that $\lambda \geq \lambda'$. The other case is proven
    symmetrically. Since $D_j(\istate{0}{}{\beta}) \geq n \geq \lambda
    - \lambda'$ for some $\beta \in \{0, 1\}$, the
    sequence $$\mathrm{down}_{\lambda,i,\alpha,\beta}^+ \cdot
    \mathrm{down}_{\lambda-1,i,\alpha \land \beta,\beta}^+ \cdots
    \mathrm{down}_{\lambda'+1,i,\alpha \land \beta,\beta}^+ \cdot
    \mathrm{cancel}_{\lambda',i,\alpha \land \beta}$$ can occur from
    $D_j$. The resulting configuration $E_j$ is such that
    \begin{align*}
      \val_i^+(E_j) &< \val_i^+(D_i) \leq \val_i^+(C_i) = d_i^+, \\
      \val_i^-(E_j) &> \val_i^-(D_i) \geq \val_i^+(C_i) = d_i^-.
    \end{align*}
    Since $\{E_{\ell'}, E_{\ell'+1}, \ldots\}$ is finite, fairness
    implies that one of these configurations occurs infinitely often
    along $\pi$. This contradicts~(\ref{eq:stabp2})
    and~(\ref{eq:stabm2}).
\end{proof}

We are now ready to prove correctness of $\PP_{\text{sys}}$.

\begin{theorem}
  $\PP_{\text{sys}}$ is well-specified and correct.
\end{theorem}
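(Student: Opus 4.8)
The plan is to prove the stronger statement from which both well-specification and correctness follow at once: fix an initial configuration $C_0 = D \mplus L$ with $D \in \pop{X}$, set $\vec{x}_j \defeq D(\state{x_j})$ and $\vec{s} \defeq A\vec{x} + \vec{c}$, and show that \emph{every} fair execution $\pi = C_0 C_1 \cdots$ has output $1$ if $\vec{s} > \vec{0}$ and output $0$ otherwise. The first step is value bookkeeping: by Proposition~\ref{prop:threshold:val2}(a) together with the definitions of $L$ and of the transitions $\mathrm{add}_{j,\alpha}$ one gets $\val_i(C_0) = \vec{c}_i + (A\vec{x})_i = \vec{s}_i$ for every $i \in [m]$, and Proposition~\ref{prop:threshold:val2}(b) propagates this: $\val_i(C_j) = \vec{s}_i$ for all $j$. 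Applying Proposition~\ref{prop:threshold:stabilize2} to $\pi$ yields an index $\ell$ and numbers $d_i^+ \geq 0 \geq d_i^-$ with the stated properties; since $C_\ell(X) = 0$ we have $d_i^+ + d_i^- = \val_i(C_\ell) = \vec{s}_i$, and together with ``$d_i^+ = 0$ or $d_i^- = 0$'' this forces $d_i^+ = \max(\vec{s}_i,0)$ and $d_i^- = \min(\vec{s}_i,0)$. Because $\val$ is strictly positive on $Q^+$ and strictly negative on $Q^-$, it follows that for every $j \geq \ell$ the block $Q_i^+$ is nonempty iff $\vec{s}_i > 0$ and the block $Q_i^-$ is empty iff $\vec{s}_i \geq 0$.

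Suppose first $\vec{s} > \vec{0}$, so from $C_\ell$ on every $Q_i^-$ is empty and every $Q_i^+$ is nonempty. The plan is to exhibit a configuration $C^\star$ reachable from $C_\ell$ whose support consists only of output-$1$ states (the $1$-copies $\istate{+2}{i}{j,1}$ and $\istate{0}{}{1}$) and to check that $C^\star$ is stable. A routine analysis drives $C_\ell$ to such a $C^\star$: using $\mathrm{down}^+$ transitions (the reservoir agents from $R$ they consume can always be recovered, if needed through Proposition~\ref{prop:threshold:normalize2}) one exposes, in every block $j$, an agent in the level-$0$ $0$-copy $\istate{+2}{0}{j,0}$, fires $\mathrm{true}$ — which turns one $\istate{0}{}{0}$ into an $\istate{0}{}{1}$ and several $0$-copies into $1$-copies — and finally uses $\mathrm{true}_{i,j}$ to convert any leftover $0$-copies, the number of output-$1$ agents being bounded by $|C_0|$ so that the process terminates. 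Stability is a transition-table inspection: at a configuration supported by output-$1$ states the only enabled transitions are $\mathrm{up}^+_{\cdot,\cdot,1,1}$ and $\mathrm{down}^+_{\cdot,\cdot,1,1}$, which keep the support inside the output-$1$ states, since every other transition would require an $\istate{0}{}{0}$, a negative state, or a state of $X$. Hence $C^\star$ is stable with output $1$, and by fairness $\pi$ reaches a stable configuration of output $1$, so $O(\pi) = 1$.

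Now suppose $\vec{s} \not> \vec{0}$ and fix $i_0$ with $\vec{s}_{i_0} \leq 0$, so $d_{i_0}^+ = 0$ and $C_j(Q_{i_0}^+) = 0$ for all $j \geq \ell$. Then the transition $\mathrm{true}$ is permanently disabled from step $\ell$ on (it needs an agent in $\istate{+2}{0}{i_0,0}$), hence no $\istate{0}{}{1}$ and no $1$-copy is ever produced after $\ell$; consequently \emph{any} configuration reachable from $C_\ell$ that is supported by output-$0$ states is automatically stable. It remains to reach one from $C_\ell$: if some conjunct has $\vec{s}_i < 0$ there is a persistent agent in the corresponding $Q_i^-$, which drives $\mathrm{false}^-$; otherwise all $\vec{s}_i \geq 0$ and one uses an $\istate{0}{}{0}$ agent, which is available because one was deposited by the last $\mathrm{cancel}$ emptying block $i_0$ (or already by $\rep{\vec{c}}$ when that block starts empty) and cannot be fully drained afterwards, given that $\mathrm{true}$ is disabled and that the $5mn+1$ $\istate{0}{}{0}$-leaders, together with Proposition~\ref{prop:threshold:normalize2}, keep a surplus; in either case $\mathrm{false}$, $\mathrm{false}^+$ and $\mathrm{false}^-$ — preceded by $\mathrm{down}^+$ to expose $1$-copies at low levels — turn every $\istate{0}{}{1}$ into $\istate{0}{}{0}$ and every $1$-copy into a $0$-copy. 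By fairness $\pi$ reaches a stable configuration of output $0$, so $O(\pi) = 0$.

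The step I expect to be the main obstacle is precisely this last reachability argument in the case $\vec{s} \not> \vec{0}$: ruling out a ``spurious'' stable configuration of output $1$ when the predicate is false, i.e. proving the invariant that an $\istate{0}{}{0}$ agent (or a negative state) always remains reachable while an output-$1$ agent is present and block $i_0$ has emptied. This is exactly where the precise reservoir size, $5mn+1$ leaders in state $\istate{0}{}{0}$, must be used, via iterated applications of Proposition~\ref{prop:threshold:normalize2}; pinning this bound down, and the parallel accounting on the $\vec{s} > \vec{0}$ side, is the fiddly but essentially routine part of the proof.
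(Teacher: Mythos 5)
Your proposal follows essentially the same route as the paper's proof: stabilization via Proposition~\ref{prop:threshold:stabilize2}, the value invariants of Proposition~\ref{prop:threshold:val2} to tie $d_i^+, d_i^-$ to $A\vec{x}+\vec{c}$, the case split on whether all $d_i^+>0$, reachability of a $\istate{0}{}{1}$-witness (resp.\ $\istate{0}{}{0}$-witness) via Proposition~\ref{prop:threshold:normalize2} and the $\mathrm{down}^+$/$\mathrm{true}$ (resp.\ $\mathrm{false}$) transitions, and stability by inspection of $T$; merging well-specification and correctness into one pass is only a cosmetic reorganization. The one phrasing to tighten is that the witness configuration must be shown reachable from \emph{every} $C_j$ with $j\geq\ell$ (not just from $C_\ell$) for fairness to apply — your argument does support this since the properties it uses persist for all $j\geq\ell$, and the paper states the claim in exactly that form.
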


\begin{proof}
  Let $\pi: C_0 \trans{\sigma_1} C_1 \trans{\sigma_2} \cdots$ be a
  fair execution from an initial configuration $C_0$. By
  Proposition~\ref{prop:threshold:stabilize2}, there exist $\ell \in
  \N$, $d_1^+, d_2^+, \ldots, d_m^+ \geq 0$ and $d_1^-, d_2^-, \ldots,
  d_m^- \leq 0$ such that for every $i \in [m]$, the following holds:
  \begin{itemize}
    \item[(a)] $C_\ell(X) = C_{\ell+1}(X) = \cdots = 0$,
    \item[(b)] $\val_i^+(C_\ell) = \val_i^+(C_{\ell+1}) = \cdots = d_i^+$,
    \item[(c)] $\val_i^-(C_\ell) = \val_i^-(C_{\ell+1}) = \cdots = d_i^-$,
    \item[(d)] $d_i^+ = 0$ or $d_i^- = 0$.
  \end{itemize}
  We first show well-specification. There are two cases to
  consider.\medskip

  \noindent\emph{Case 1: $d_i^+ > 0$ for every $i \in [m]$.} We claim
  that for every $j \geq \ell$, configuration $C_j$ can reach a
  configuration that contains some agent in state
  $\istate{0}{}{1}$. Let us argue that the validity of the claim
  concludes the case. By fairness, the claim implies that
  $C_j(\istate{0}{}{1}) > 0$ for infinitely many indices
  $j$. Therefore, by fairness and transitions of the form
  $\mathrm{true}_{\star,\star}$, we have $O(C_j) = 1$ for infinitely
  many indices $j$. By examining the presets and postsets of
  transitions from $T$, we observe that any configuration whose output
  is $1$ must be stable.

  Let us now prove the claim. Let $j \geq \ell$.  By
  Proposition~\ref{prop:threshold:normalize2}, there exist $D_j \in
  \pop{Q}$ and $\beta \in \{0, 1\}$ such that $C_j \trans{*} D_j$ and
  $D_j(\istate{0}{}{\beta}) \geq mn$. If $\beta = 1$, we are
  done. Thus, assume $\beta = 0$. Since $d_1^+, d_2^+, \ldots, d_m^+ >
  0$, there exist $\lambda_1, \lambda_2, \ldots, \lambda_m \in [0, n]$
  and $\alpha_1, \alpha_2, \ldots, \alpha_m \in \{0, 1\}$ such that
  $D_j(\istate{+2}{\lambda_i}{i, \alpha_i}) > 0$ for every $i \in
  [m]$. Since $D_j(\istate{0}{}{\beta}) \geq mn$, it is possible to
  construct a configuration $E_j \in \pop{Q}$ and sequence $w \in
  T^*$, made of transition ``$\mathrm{false}$'' and transitions of the
  form $\mathrm{down}_{\star,\star,\star,\star}^+$, such that
  \begin{itemize}
    \item $D_j \trans{w} E_j$, and
    \item $E_j(\istate{+2}{0}{i,0}) > 0$ for every $i \in [m]$.
  \end{itemize}
  Thus, transition ``$\mathrm{true}$'' can occur from $E_j$, leading
  to a configuration $F_j$ such that $F_j(\istate{0}{}{1}) >
  0$.\medskip

  \noindent\emph{Case 2: $d_i^+ = 0$ for some $i \in [m]$.} We claim
  that for every $j \geq \ell$, configuration $C_j$ can reach a
  configuration that contains some agent in state
  $\istate{0}{}{0}$. Let us argue that the validity of the claim
  concludes the case. By fairness, the claim implies that
  $C_j(\istate{0}{}{0}) > 0$ for infinitely many indices
  $j$. Therefore, by fairness, transition ``$\mathrm{false}$'' and
  transitions of the form $\mathrm{false}_{\star,\star}^\star$, we
  have $O(C_j) = 0$ for infinitely many indices $j$. By examining the
  presets and postsets of transitions from $T$, we observe that a
  configuration whose output is $0$ can only reach a configuration
  whose output is not $0$ through transition
  ``$\mathrm{true}$''. Since $d_i^+ = 0$, we have $C_j(Q_i^+) = 0$ for
  every $j \geq \ell$. Therefore, transition ``$\mathrm{true}$'' is
  disabled at $C_j$ for every $j \geq \ell$.

  Let us now prove the claim. Let $j \geq \ell$. By
  Proposition~\ref{prop:threshold:normalize2}, there exist $D_j \in
  \pop{Q}$ and $\beta \in \{0, 1\}$ such that $C_j \trans{*} D_j$ and
  $D_j(\istate{0}{}{\beta}) \geq mn$. If $\beta = 0$, we are
  done. Thus, assume $\beta = 1$. If $C_j(Q^-) > 0$, then we are done,
  since a transition of the form $\mathrm{false}_{\star, \star}^-$ can
  occur, leading to a configuration $E_j$ such that
  $E_j(\istate{0}{}{0}) > 0$. Therefore, assume $C_j(Q^-) = 0$. Since
  $C_j(\istate{0}{}{1}) > 0$, the prefix $\sigma_1 \sigma_2 \cdots
  \sigma_j$ must contain the transition ``$\mathrm{true}$''. Thus,
  there exists $j' < j$ such that $C_{j'}(Q_i^+) > 0$. Let $j'$ be the
  largest such index. Transition $\sigma_{j'+1}$ must be of the form
  $\mathrm{cancel}_{\star,\star,\star}$. Therefore,
  $C_{j'+1}(\istate{0}{}{0}) > 1$. By inspection of $T$, we observe
  that ``$\mathrm{true}$'' is the only transition that can decrease
  the number of agents in $\istate{0}{}{0}$. By maximality of $j'$, we
  have $C_{j'+1}(Q_i^+) = C_{j'+2}(Q_i^+) = \cdots = 0$. Thus
  transition ``$\mathrm{true}$'' cannot occur, and hence
  $C_j(\istate{0}{}{0}) > 0$.\medskip

  We are done proving well-specification. To conclude the proof, let
  us argue that $\PP_{\text{sys}}$ indeed computes the predicate
  $A\vec{x} + \vec{c} > \vec{0}$. Let $j \geq \ell$ be such that
  $C_j$ is stable. For every $i \in [m]$, we have
  \begin{align*}
    \vec{c}_i + \sum_{j \in [k]} A_{i,j} \cdot C_0(\state{x_j})
    &= \val_i(C_0) && \text{(By Prop.~\ref{prop:threshold:val2}(a))} \\
    &= \val_i(C_j) && \text{(By Prop.~\ref{prop:threshold:val2}(b))} \\
    &= \val_i^+(C_j) + \val_i^-(C_j) + 0\\
    &= d_i^+ + d_i^-.
  \end{align*}
  Recall that $d_i^+ \geq 0$ and $d_i^- \leq 0$ for every $i \in
  [m]$. If $A\vec{x} + \vec{c} > \vec{0}$ holds, then we must have
  $d_i^+ > 0$ for every $i \in [m]$. Therefore, case 1 holds, and
  hence $O(\pi) = O(C_j) = 1$, which is correct. If $A\vec{x} +
  \vec{c} > \vec{0}$ does not hold, then we must have $d_i^+ = 0$ for
  some $i \in [m]$. Therefore, case 2 holds, and hence $O(\pi) =
  O(C_j) = 0$, which is also correct.
\end{proof}

We may now prove the theorem from the main text:

\thmSysLin*

\begin{proof}
  The value $n$ which occurs in the statement of the theorem differs
  from the $n$ defined in this appendix. To avoid any confusion, let
  us rename the latter as $\ell$, \ie $\ell \defeq \lceil \log 2m^2
  \rceil + \size{b_\text{max}}$. Protocol $\PP_\text{sys}$ has $|Q| =
  3m(\ell+1) + k + 2$ states. Among these states, one transition is
  $(m+1)$-way and $k$ transitions are $\ell$-way. By applying
  Lemma~\ref{lem:kway}, we obtain a 2-way population protocol
  $\PP_\text{sys}'$ which computes the same predicate as
  $\PP_\text{sys}$ and whose number of states $|Q'|$ is bounded as
  follows:
  \begin{align*}
    |Q'|
    &= |Q| + 3(m+1) + 3k\ell && \text{(By Lemma~\ref{lem:kway})} \\
    &= [3m(\ell+1) + k + 2] + 3(m+1) + 3k\ell && \text{(by the size of $Q$)} \\
    &= [3m\ell + 3m + k + 2] + [3m + 3] + 3k\ell \\
    &= 3m\ell + 6m + 3k\ell + k + 5 \\
    &\leq 3m\ell + 6m\ell + 3k\ell + k\ell + 5k\ell \\
    &= 9m\ell + 9k\ell \\
    &= 9\ell(m + k) \\
    &= 9[\lceil \log 2m^2 \rceil + \size{b_\text{max}}](m + k) &&
    \text{(by def. of $\ell$)} \\
    &= 9[\lceil \log 2 + 2 \log m \rceil + \size{b_\text{max}}](m + k) \\
    &= 9[\lceil 1 + 2 \log m \rceil + \size{b_\text{max}}](m + k) \\
    &\leq 9[2 + 2\log m + \size{b_\text{max}}](m + k) \\
    &= 9(2 + 2\log m + n)(m + k) && \text{(by $n = \size{b_\text{max}}$)} \\
    &\leq 9(3\log m + 3n)(m + k) \\
    &= 27(\log m + n)(m + k).
  \end{align*}
  Moreover, the number of leaders of $\PP_\text{sys}'$ is the same as
  for $\PP_\text{sys}$, namely
  \begin{align*}
    |L|
    &= 5m\ell + 1 + |\rep{\vec{c}}| && \text{(by def. of $L$)} \\
    &= 5m[\lceil \log 2m^2 \rceil + \size{b_\text{max}}] + 1 +
    |\rep{\vec{c}}| && \text{(by def. of $\ell$)} \\
    &\leq 5m[2 \log m + 2 + \size{b_\text{max}}] + 1 + |\rep{\vec{c}}|
    && \text{(by $\lceil\log{2m^2}\rceil \leq 2\log{m} + 2$)} \\
    &\leq 5m[2 \log m + 2 + n] + 1 + mn && \text{(by $n =
      \size{b_\text{max}}$ and $|\rep{\vec{c}}| \leq m \cdot
      \size{b_\text{max}}$)} \\
    &= [10m\log m + 10m + 5mn] + 1 + mn \\
    &= 10m\log m + 10m + 6mn + 1 \\
    &= 10m\log m + 4m + 6m + 6mn + 1 \\
    &\leq 10m\log m + 4m\log m + 6mn + 6mn + mn \\
    &= 14m\log m + 13mn \\
    &\leq 14m(\log m + n). &&\qedhere
  \end{align*}
\end{proof}



\end{document}